\documentclass{CSML}

\def\dOi{11(4:16)2015}
\lmcsheading%
{\dOi}
{1--40}
{}
{}
{Sep.~16, 2015}
{Dec.~22, 2015}
{}

\ACMCCS{[{\bf Theory of computation}]: Semantics and
  reasoning---Program semantics; [{\bf Software
    and its engineering}]: Software organization and
  properties---Formal methods} 

\usepackage{hyperref}

\usepackage{amsmath}
\usepackage{dsfont}

\usepackage{mathrsfs}
\usepackage{color}
\usepackage{latexsym}



\def\SET#1{\lbrace #1 \rbrace}
\def\MSET#1#2{\mlbrace \, #1 \mid #2 \, \mrbrace}
\def\ZSET#1#2{\lbrace \, #1 \mid #2 \, \rbrace}

\def\calA{\mathcal{A}}

\def\calF{\mathcal{F}}

\def\calL{\mathcal{L}}
\def\calM{\mathcal{M}}

\def\calP{\mathcal{P}}

\def\calR{\mathcal{R}}
\def\calS{\mathcal{S}}

\def\calU{\mathcal{U}}
\def\calV{\mathcal{V}}
\def\calW{\mathcal{W}}
\def\calX{\mathcal{X}}
\def\calY{\mathcal{Y}}

\newcommand{\blankline}{\vspace*{1.0\baselineskip}}
\newcommand{\halflineup}{\vspace*{-0.5\baselineskip}}


%
%

\def\beq#1#2{\approx^{{#2}}_{{#1}}}	
		
\def\FuTSbis#1{\simeq_{{#1}}}
\def\twoFuTSbis#1{\simeq_{2 \mkern1mu {#1}}}
\def\tFuTSbis#1{\simeq_{2 \mkern1mu {#1}}}
\def\twoFuTSbis#1{\simeq_{{#1}}}
\def\tFuTSbis#1{\simeq_{{#1}}}
\def\notFuTSbis#1{\not\simeq_{{#1}}}

\def\IMLsbis{\sim_{\mathit{iml}}}
\def\TPCsbis{\sim_{\mathit{tpc}}}
\def\MALsbis{\sim_{\textsl{ma}}}
\def\PEPAseq{\sim_{\mathit{pepa}}}

%
%
 
\newcommand{\sync}[1]{
  \ensuremath{%
    \mathop{%
      \mkern3mu
  \ooalign{%
    \mbox{\large $\rhd\!\!\!\lhd$}
    \cr
    \raise-4pt\hbox{$\mkern3mu \scriptstyle \mathrm{#1}$}
  }
  \mkern-2mu
}}}

%
%

\def\vcalL{\underline{{\cal L}}}
\def\vcalR{\underline{{\cal R}}}
\def\calVvLR{\calV^{\vcalL}_{\!\vcalR}}


\def\bools{\mathbb{B}}
\def\nats{\mathbb{N}}	
\def\reals{\mathbb{R}}

\newcommand{\Set}{\textbf{Set}}
\newcommand{\nil}{\textbf{nil}}
\newcommand{\bfI}{\textbf{I}}
\newcommand{\bfM}{\textbf{M}}
\newcommand{\bfR}{\textbf{R}}
\newcommand{\bfT}{\textbf{T}}

\def\min{\mathrm{\textrm{min}}}

\newcommand{\CTMC}{\textsl{CTMC}}

\newcommand{\Distr}{\textsl{Distr} \mkern2mu}
\newcommand{\DTMC}{\textsl{DTMC}}
\newcommand{\FuTS}{\textsl{Fu\hspace*{-0.5pt}TS} \mkern1mu}
\newcommand{\pFuTS}{\textsl{pFu\hspace*{-0.5pt}TS} \mkern1mu}

\mathchardef\mhyphen="2D

\newcommand{\IMC}{\textsl{IMC} \mkern1mu}
\newcommand{\IML}{\textsl{IML} \mkern1mu}
\newcommand{\MAL}{\textsl{MAL} \mkern1mu}
\newcommand{\LTS}{\textsl{LTS} \mkern1mu}

\newcommand{\MA}{\textsl{MA} \mkern1mu}
\newcommand{\PEPA}{\textsl{PEPA}}
\newcommand{\RTS}{\textsl{RTS}}

\newcommand{\TPC}{\textsl{TPC}}

\newcommand{\arf}{\textsl{arf} \mkern1mu}
\newcommand{\cnt}{\textsl{cnt} \mkern2mu}

\newcommand{\spt}{\textsl{spt\hspace{1pt}}}

\newcommand{\FALSE}{\textsl{false}}
\newcommand{\TRUE}{\textsl{true}}

\newcommand{\iml}{\mathit{iml}}
\newcommand{\mal}{\mathit{mal}}

\newcommand{\pepa}{\mathit{pepa}}
\newcommand{\tpc}{\mathit{tpc}}

\newcommand{\finv}{f^{\mkern0mu -1}}

\newcommand{\amset}[1]{\mathscr{#1}}
\newcommand{\amsetP}{\amset{P}}
\newcommand{\amsetQ}{\amset{Q}}

\newcommand{\mycal}[1]{\mathcal{#1}}
\newcommand{\mycalP}{\mycal{P}}
\newcommand{\mycalQ}{\mycal{Q}}


\newcommand{\Edelay}{\delta}
\newcommand{\Edelaya}{\: \delta_a}
\newcommand{\Edelayb}{\: \delta_b}
\def\IMPL{\Rightarrow}
\def\IFF{\Leftrightarrow}
\def\Mtrans#1{\stackrel{#1}{\dashrightarrow}}
\def\Ttrans#1{\stackrel{#1}{\leadsto}}
\renewcommand{\Ttrans}[1]{%
  \mathrel{\vbox{\offinterlineskip\ialign{%
    \hfil##\hfil\cr
    \raisebox{0.75pt}{$\scriptstyle#1$}\cr
    $\leadsto$\cr
}}}}

\newcommand{\Rclass}[1]{[{#1}]_R}
\newcommand{\RSclass}[1]{[{#1}]_{R_{\calS}}}
\newcommand{\lftRclass}[1]{[{#1}]_{{\equiv_R}}}
\renewcommand{\lftRclass}[1]{[{#1}]_{R}}

\newcommand{\alambda}{(a,\lambda)}

\newcommand{\bnfeq}{\mathrel{\; \raisebox{0.35pt}{\textrm{::}} {=} \;}}

\newcommand{\calLi}{\calL_{\mkern2mu i}}

\newcommand{\calRi}{\calR_{\mkern2mu i}}

\newcommand{\calSiml}{\calS_\iml}

\newcommand{\calSmal}{\calS_\mal}
\newcommand{\calSpepa}{\calS_\pepa}
\newcommand{\calStpc}{\calS_\tpc}

\newcommand{\calULR}{\calU^{\calL \mkern2mu}_{\mkern-4mu \calR \mkern0mu}}

\def\chut{\mathbf{D}}
\def\chutP{\mathbf{D}_P}
\def\chutQ{\mathbf{D}_Q}
\def\cho{\, + \, }
\newcommand{\compose}{\mathop{\raisebox{0.5pt}{\scriptsize $\circ$}}}
\def\dfas{:=}

\newcommand{\fivetuple}[5]{( \mkern1mu {#1}, \mkern1mu {#2}, \mkern1mu {#3}, \mkern1mu {#4}, \mkern1mu {#5} \mkern1mu )}
\newcommand{\fmorph}[3]{[\![ {#3} ]\!]^{#2}_{#1}}

\newcommand{\four}[2]{{#1} {:}{:} {#2}}
\renewcommand{\four}[2]{{#1} {\ast} {#2}}
\renewcommand{\four}[2]{{#1} {\cdot} {#2}}
\renewcommand{\four}[2]{{#1} \mkern1mu {\cdot} \mkern1mu {#2}}

\newcommand{\fsfn}[2]{\mkern1mu \mathrm{\mathcal{F} \mkern-2.5mu \mathcal{S}}( \mkern1mu #1,#2 \mkern2mu )}

\def\fsum{\mathopen{\oplus \mkern2mu}}

\newcommand{\la}{\mathop{\langle \,}}

\newcommand{\mlbrace}{\mathopen{\lbrace \mkern-2.25mu | \mkern2.25mu}}
\newcommand{\mrbrace}{\mathclose{\mkern2mu | \! \rbrace}}
\newcommand{\mdelay}{\mathit{md} \mkern0.5mu}

\def\mtrans#1{\stackrel{\vphantom{\frac{}{}}#1}{\rightarrow}}


\newcommand{\myboxed}[1]{\mathbin{\, \lbrack \mkern-2mu {#1} \mkern-2mu \rbrack \,}}

\newcommand{\myell}{{\ell \mkern2mu}}
\newcommand{\myf}{{f \mkern2mu}}

\newcommand{\myi}{{\mkern1mu i \mkern1mu}}
\newcommand{\myin}{\, \in \,}
\newcommand{\myop}{\mathbin{\mkern1mu | \mkern1mu}}

\newcommand{\smalltimes}{\mbox{\raisebox{0.5pt}{\small $\mkern2mu \times \mkern2mu$}}}
\newcommand{\mytheta}{\theta \mkern1mu}

\def\nnreals{\reals_{\geq 0}}

\newcommand\pfun{\mathrel{\ooalign{\hfil$\mapstochar\mkern5mu$\hfil\cr$\to$\cr}}}

\newcommand{\prlA}{\mathbin{\mkern2mu \parallel_{\mkern1mu A} \mkern0mu}}

\def\pfx#1#2{#1.#2}
\def\prc#1{{\calP}_{#1}}

\newcommand{\prcIML}{\prc{\iml}}
\newcommand{\prcMAL}{\prc{\mal}}

\newcommand{\prcPEPA}{\prc{\pepa}}
\newcommand{\prcTPC}{\prc{\mkern-2mu \tpc}}

\newcommand{\ra}{\mathclose{\, \rangle}}

\def\poreals{\reals_{> 0}}

\def\sosrule#1#2{\frac{\,\,\,\,#1\,\,\,\,}{\,\,\,\,#2\,\,\,\,}}
\def\sosrule#1#2{%
  \def\arraystretch{0.50}
  \begin{array}{c} {#1}\rule{0pt}{10pt} \\ \hline
    \rule{0pt}{10pt}{#2} \end{array}
  \def\arraystretch{1.0}}
\def\sosrn#1#2{\mbox{{\normalsize (#1$_{\mbox{{\scriptsize #2}}}$)\ }}}

\newcommand{\thetai}{\theta_\myi}

\newcommand{\thetapepa}{\theta_\pepa}
\newcommand{\threetuple}[3]{( \mkern1mu {#1}, \mkern1mu {#2}, \mkern1mu {#3} \mkern1mu )}
\newcommand{\trans}[1]{\stackrel{#1}{\rightarrow}}
\renewcommand{\trans}[1]{\xrightarrow{\ #1\ }}

\newcommand{\transalambda}{\trans{a,\lambda}}
\newcommand{\tssum}{\textstyle{\sum \,}}
\newcommand{\twotuple}[2]{( \mkern1mu {#1}, \mkern1mu {#2}\mkern1mu )}

\newcommand{\varphibar}{\bar{\varphi}}

\newcommand{\zerofn}{\underline{\mathrm{0}}}
\renewcommand{\zerofn}{[ \mkern1mu ]}
\def\zerof{\zerofn}


\renewcommand{\calULR}{\calU^{\calL \mkern2mu}_{\mkern-0mu \calR
    \mkern0mu}}
\def\calVvLR{\calV^{\mkern1mu \vcalL}_{\mkern-7mu\vcalR}}
\def\twocalVvcalLR{\raisebox{-3pt}{\scriptsize 2} \mkern-4mu \calV^{{\calL}}_{\mkern-7mu\vcalR}}
\def\twocalVvcalLR{\raisebox{-3pt}{\scriptsize 2} \mkern-4mu \calW^{{\calL}}_{\mkern-7mu\vcalR}}
\def\twocalVvcalLR{\calW^{{\calL}}_{\mkern-7mu\vcalR}}
\def\vcalL{\textit{L}}
\def\vcalR{\textit{R}}

\newcommand{\SPC}{\textsl{SPC} \mkern1mu}
\newcommand{\WTS}{\textsl{WTS} \mkern1mu}

\newcommand{\pCponePonephPh}{[ \, \four{p_1}{P_1} 
  \mkern1mu \oplus \mkern1mu \cdots \mkern1mu \oplus \mkern1mu 
  \four{p_h}{P_h} \, ]}

\newcommand{\SETponePonephPh}{\SET{ \, \four{p_1}{P_1} \, 
  \Box \cdots \Box \, \four{p_h}{P_h} \,}}


\hyphenation{bi-si-mu-lation}
\hyphenation{par-al-lel}
\hyphenation{semi-ring}
\hyphenation{semi-rings}

\begin{document}

\title[Bisimulation of Labelled FuTS Coalgebraically]{Bisimulation of \\
  Labelled State-to-Function Transition Systems \\
  Coalgebraically}

\author[D.~Latella]{Diego Latella\rsuper a}
\address{{\lsuper{a,b}}CNR -- Istituto di Scienza e Tecnologie dell'Informazione `A. Faedo',
  Pisa}
\email{\{Diego.Latella, Mieke.Massink\}@isti.cnr.it}

\author[M.~Massink]{Mieke Massink\rsuper b}
\address{\vspace{-18 pt}}

\author[E.~P.~de Vink]{Erik P.~de Vink\rsuper c}
\address{{\lsuper c}Department of Mathematics and Computer Science,
  Eindhoven University of Technology
  \hspace*{5cm} \\
  Centrum voor Wiskunde en Informatica, Amsterdam}
\email{evink@win.tue.nl}

\keywords{%
  quantitative process algebra,
  FuTS,
  function of finite support,
  bisimulation,
  coalgebra,
  behavioral equivalence
}


\begin{abstract}
  \noindent 
  Labeled state-to-function transition systems, $\FuTS$ for short, are
  characterized by transitions which relate states to
  \emph{functions} of states over general semirings, equipped with a
  rich set of higher-order operators.  As such, $\FuTS$ constitute a
  convenient modeling instrument to deal with process languages and
  their quantitative extensions in particular. In this paper, the
  notion of bisimulation induced by a $\FuTS$ is addressed from a
  coalgebraic point of view.  A correspondence result is established
  stating that $\FuTS$-bisimilarity coincides with behavioural
  equivalence of the associated functor.  As generic examples, the
  equivalences underlying substantial fragments of major examples of
  quantitative process algebras are related to the bisimilarity of
  specific~$\FuTS$. The examples range from a stochastic process
  language, $\PEPA$, to a language for Interactive Markov Chains,
  $\IML$, a (discrete) timed process language, $\TPC$, and a language
  for Markov Automata, $\MAL$.  The equivalences underlying these
  languages are related to the bisimilarity of their specific~$\FuTS$.
  By the correspondence result coalgebraic justification of the
  equivalences of these calculi is obtained. The specific selection of
  languages, besides covering a large variety of process interaction
  models and modelling choices involving quantities, allows us to show
  different classes of $\FuTS$, namely so-called simple $\FuTS$,
  combined $\FuTS$, nested $\FuTS$, and general $\FuTS$.
\end{abstract}

\maketitle



\section{Introduction}
\label{sec:introduction}

In the last couple of decades, qualitative process languages have been
enriched with quantitative information. In the qualitative case,
process languages equipped with formal operational semantics have
proven to be successful formalisms for the modeling of concurrent
systems and the analysis of their behaviour. Generally, the
operational semantics of a qualitative process language are given by
means of a labeled transition system ($\LTS$), with states being
process terms and actions decorating the transitions between
states. Typically, based on the induced transition system relation, a
notion of process equivalence is defined, providing means to compare
systems and to reduce their representation to enhance subsequent
verification.

Extensions of qualitative process languages allow a deterministic as
well as stochastic representation of time, or the use of discrete
probability distributions for resolving (some) forms of
non-determinism. Among them, languages based on stochastic modeling
of action duration or delays, usually referred to as stochastic
process algebras, or stochastic process calculi ($\SPC$), are one of
the quantitative enrichments of process languages that have received
particular attention. For $\SPC$, the main aim has been the
integration of qualitative descriptions with quantitative ones in a
single mathematical framework, building on the combination of $\LTS$
and continuous-time Markov chains ($\CTMC$). The latter is one of the
most successful approaches to modeling and performance analysis of
(computer) systems and networks. An overview of $\SPC$, equivalences
and related analysis techniques may, for example, be found
in~\cite{HHK02,BHHKS04:voss,Ber07:sfm}. A common feature of many
$\SPC$ is that actions are augmented with the rates of exponentially
distributed random variables that characterize their
\emph{duration}. Alternatively, actions are assumed to be
instantaneous, in which case random variables are used for modeling
\emph{delays}, as in~\cite{Her02:springer}. Although exploiting the
same class of distributions, the models and techniques underlying the
definition of the calculi turn out to be significantly different in
many respects. A prominent difference concerns the modeling, by
means of the choice operator, of the race condition arising from the
$\CTMC$ interpretation of process behaviour, and its relationship to the
issue of transition multiplicity. In the quantitative setting,
multiplicities can make a crucial distinction between processes that
are qualitatively equivalent. Several different approaches have been
proposed for handling transition multiplicity. The proposals range
from multi-relations~\cite{Hil96:phd,Her02:springer}, to proved
transition systems~\cite{Pri95:cj}, to $\LTS$ with numbered
transitions~\cite{GSS95:ic,HHK02}, and to unique rate
names~\cite{DLM05}, just to mention a few.

In~\cite{De+08,De+09,De+09a}, Latella, Massink et al.\ proposed a variant of
$\LTS$, called Rate Transition Systems ($\RTS$). In~$\LTS$, a
transition is a triple $( P, \alpha ,P' \mkern1mu )$ where $P$
and~$\alpha$ are the source state and the label of the transition,
respectively, while $P'$ is the target state reached from~$P$ via a
transition labeled with~$\alpha$. In~$\RTS$, a transition is a triple
of the form $(P,\alpha,\amsetP \mkern2mu )$. The first and second
component are the source state and the label of the transition, as
in~$\LTS$, while the third component~$\amsetP$ is a \emph{continuation
  function} (or simply a {\em continuation} in the sequel), which
associates a non-negative real value with each state~$P'$. A non-zero
value for state~$P'$ represents the rate of the exponential
distribution characterizing the average time for the execution of the
action represented by~$\alpha$, necessary to reach $P'$ from~$P$ via
the transition. If~$\amsetP$~maps $P'$ to~$0$, then state~$P'$ cannot
be reached from~$P$ via this transition. The use of continuations
provides a clean and simple solution to the transition multiplicity
problem and make~$\RTS$ particularly suited for $\SPC$ semantics. In
order to provide a uniform account of the many $\SPC$ proposed in the
literature, in previous joint work of the first two authors,
see~\cite{De+14}, State-to-Function Labeled Transition Systems
($\FuTS$) have been introduced as a natural generalization
of~$\RTS$. In~$\FuTS$ the codomain of the continuations are arbitrary
semirings, rather than just the non-negative reals. This provides
increased flexibility while preserving basic properties of primitive
operations like sum and multiplication. Furthermore, $\FuTS$ are
equipped with a rich set of (generic) operations on continuation
functions, which makes the framework very well suited for a
\emph{compositional} definition of the operational semantics of
process calculi, including $\SPC$ and models where both
non-deterministic behaviour and stochastic delays are model led, like
in the Language of Interactive Markov Chains~\cite{Her02:springer}, or
even in combination with probabilistic distributions over behaviours,
as in languages for Markov Automata~\cite{Ti+12}, besides calculi for
deterministic (discrete) timed systems~\cite{ABC10:springer}.

In this paper we extend the work presented in~\cite{De+14} in two
directions.  The first contribution concerns the extension of the
$\FuTS$ framework by introducing the notions of combined $\FuTS$ and
nested $\FuTS$.  Given label sets~$\calLi$ and semirings~$\calRi$, a
combined~$\FuTS$ takes the general format $\calS = ( \, S ,\, \la
\mkern-2mu {\mtrans{}_i} \ra^n_{i = 1} \, )$ with transition relations
${\mtrans{}_i} \, \subseteq \, S \times \calLi \times \fsfn{\mkern2mu
  S}{\calRi}$.  In the degenerated case of $n=1$, we speak of a simple
$\FuTS$, which coincides with the definition of $\FuTS$ proposed
in~\cite{De+14}. Here, $\fsfn{\mkern2mu S}{\calRi}$ is the set of
total functions from~$S$ to~$\calRi$ with finite support, a
sub-collection of functions also occurring in other work combining
coalgebra and quantitative modeling (see,
e.g.~\cite{KS08:fossacs,BBBRS12}). So, a combined $\FuTS$ is
characterized by the presence of \emph{multiple} transition relations
which allow for a clean definition of the $\FuTS$ semantics of
languages which integrate different aspects of behaviour, such as
non-determinism vs.\ stochastic time, as is characteristic for
Interactive Markov Chains.  Using a single transition relation in such
a setting requires additional proof obligations ensuring type
correctness of transition elements, in particular the continuations,
as can be seen in~\cite{De+14}, for example. Instead, for combined $\FuTS$
this is ensured by construction. The general format of a so-called
nested $\FuTS$ over the label set~$\calL$ and semirings $\calR_1,
\ldots \calR_n$, for $n >1$, is a tuple $\calS = ( \, S ,\, \mtrans{}
\, )$ with ${\mtrans{}} \, \subseteq \, {S \times \calL \times
  \fsfn{\, (\ldots \fsfn{\mkern2mu S}{\calR_1} \ldots) \,}{\calR_n}}$.
For the purposes of the present paper, $n=2$ suffices; the nested
$\FuTS$ we consider here are of the form $\calS = ( \, S ,\, \mtrans{}
\, )$ with ${\mtrans{}} \, \subseteq \, {S \times \calL \times
  \fsfn{\fsfn{\mkern2mu S}{\calR_1}}{\calR_2}}$. For nested $\FuTS$
the transition relation relates \emph{functions over states}, instead
of just states, to continuations.  This makes it easy, for instance,
to represent non-deterministic choices between probabilistic
distributions over behaviours, as it is the case for (the non-timed
fragment of languages for) Markov Automata.
Finally, product construction for combined $\FuTS$ and sequencing
construction for nested $\FuTS$) can easily be combined giving rise to
what one may call \emph{general} $\FuTS$ (or just $\FuTS$, in the
sequel), which prove useful for a concise definition of the
operational semantics of Markov Automata languages.

We will briefly show how the various types of $\FuTS$ can be used
conveniently for a clean and compact definition of the fragments of
interest of major process languages (more details on this can be found
in~\cite{De+14}, which the interested reader is referred to).  For
combined $\FuTS$, as well as nested $\FuTS$ and general $\FuTS$, we
also present $\FuTS$ bisimilarity, a general notion of bisimilarity,
which will also be shown to coincide with the standard bisimilarity of
the relevant process languages.

The second direction of investigation presented in this paper consists
of a coalgebraic treatment of the various type of~$\FuTS$.  We will
see that a combined $\FuTS$ $( \, S ,\, \la \mkern-2mu {\mtrans{}_i}
\ra^n_{i = 1} \, )$ is a coalgebra of the product of the functors
$\fsfn{ {\cdot}}{\calRi}^{\mkern1mu \calLi}$. For this to work, we
need the relations~$\mtrans{}_i$ to be total and deterministic for the
coalgebraic modeling as a function. This is not a severe restriction
at all in the presence of continuation functions: as we will see, the
zero-continuation function, which maps every state~$s'$ to~$0$ will be
associated to a state~$s$ and a transition, in order to indicate that
no state~$s'$ is reachable from~$s$ via that transition, in the usual
$\LTS$-sense; if $s$~allows a transition to some state~$s_1$ as well
as to a state~$s_2$, then the continuation function will simply yield
a non-zero value for $s_1$ and for~$s_2$. Therefore, it is no
essential limitation to restrict our investigations to total and
deterministic $\FuTS$. For example, by using Boolean functions, we can
model non-deterministic behaviour, as done in Section~\ref{sec-iml}.
Similarly, we see that a (two-level) nested $\FuTS$ $(\, S, \mtrans{}
\,)$ is a coalgebra of functor
$\fsfn{{\fsfn{{\cdot}}{\calR_1}}}{\calR_2}^{\mkern1mu \calL}$.

Next, the notion of $\calS$-bisimilarity that arises from
a~$\FuTS$~$\calS$ is compared to the coalgebraic notion of behavioral
equivalence. Following a familiar argument, we first establish that
the functor associated with a~$\FuTS$ $\calS$ possesses a final
coalgebra and therefore has an associated notion of behavioural
equivalence. Then it is shown that behavioural equivalence of the
functor coincides with $\calS$-bisimilarity, bisimilarity for
$\FuTS$. Pivotal for the proof is the absence of multiplicities in the
$\FuTS$ treatment of quantities at the level of the transitions. In
fact, quantities are accumulated in the function values of the
continuations and hidden at the higher level of abstraction. It is
noted, in the presence of a final coalgebra for $\FuTS$ a more general
definition of behavioural equivalence based on cospans coincides with
the one given here, cf.~\cite{Kur00:phd,Sta11:lmcs}.
Finally, it is worth noting that for the coalgebraic treatment itself
of $\FuTS$ we propose here it is not necessary for the co-domain of
continuations to be semirings; working with monoids would be
sufficient. However, the richer structure of semirings is convenient,
if not essential, when using continuations and their operators in the
formal definition of the $\FuTS$ semantics of~$\SPC$.

Using the bridge established by the $\FuTS$ bisimulation vs.\
coalgebraic behavioral equivalence correspondence results, we continue
by showing for two well-known stochastic process algebras, viz.\
Hillston's $\PEPA$~\cite{Hil96:phd} and Hermanns's
$\IML$~\cite{Her02:springer}, that the standard notions of $\PEPA$
strong equivalence and $\IML$ strong bisimilarity coincide with
bisimilarity of the associated proper simple and combined $\FuTS$,
respectively. In turn, this means that the standard notions of strong
equivalence and strong bisimilarity coincide with behavioural
equivalence when cast in a coalgebraic framework.

$\PEPA$ stands out as one of the prominent Markovian process algebras,
and $\IML$ specifically provides separate prefix constructions for
actions and for delays. 
In passing, the issue of transition multiplicity has to be dealt with.
Appropriate lemmas are provided relating the relation-based cumulative
treatment with $\FuTS$ to the multi-relation-based explicit treatment
of $\PEPA$ and~$\IML$. It is noted that in our treatment below we
restrict to the key-fragment of these two~$\SPC$. 
We furthermore provide a combined $\FuTS$ semantics for a simple
language of deterministically-timed processes, viz.\
$\TPC$~\cite{ABC10:springer} and we show the coincidence between
$\FuTS$ bisimilarity and the standard equivalence of timed
bisimilarity for the language.  
Finally, we provide a general $\FuTS$ semantics for a process language
which incorporates non-determinism, discrete probabilities and
Markovian randomized delays, i.e.\ a language for Markov
Automata~\cite{EHZ10:concur,EHZ10:lics}. Also in this case we prove
that $\FuTS$ bisimulation and Markov Automata bisimulation coincide,
adding to the claim that $\FuTS$ bisimulation is a natural notion of
process identification for~$\SPC$.

\blankline

\noindent
Related work on coalgebra includes the papers
\cite{DeR99,KS08:fossacs,Sok11:tcs}. Additionally, these papers cover
measures and congruence formats, a topic not touched upon in the
present paper. For what concerns the discrete parts, regarding the
correspondence of bisimulations, our work is in line with the approach
of the papers mentioned. In the treatment below the bi-algebraic
perspective of SOS and bisimulation~\cite{TP97:lics} is left implicit.
In~\cite{MiP14} an approach similar to ours has been applied to the
ULTraS model, a model which shares some features with simple
$\FuTS$. In ULTraS posets are used instead of semirings, although a
monoidal structure is then implicitly assumed when process
equivalences are taken into consideration~\cite{Be+13b}.  Furthermore,
in~\cite{MiP14} a general GSOS specification format is presented which
allows for a `syntactic' treatment of continuations involving
so-called \emph{weight functions}.
An interesting direction of research combining coalgebra and
quantities studies various types of weighted automata, including
linear weighted automata, and associated notions of bisimulation and
languages, as well as algorithms for these
notions~\cite{Bor09:concur,Kli09:mosses,SBBR10:ic,BBBRS12}. Klin
considers weighted transition systems, labelled transition systems
that assign a weight to each transition and develops Weighted GSOS, a
(meta-)syntactic framework for defining well-behaved weighted
transition systems. For commutative monoids the notion of a weighted
transition system compares with our notion of a~$\FuTS$, and, when
cast in the coalgebraic setting, the associated concept of
bisimulation coincides with behavioral equivalence.
Weights of transitions of weighted transition systems are computed by
induction on the syntax of process terms and by taking into account
the contribution of all those GSOS rules that are triggered by the
relevant (apparent) weights. Note that such a set of rules is finite.
So, in a sense, the computation of the weights is {\em distributed among
(the instantiations of) the relevant rules} with intermediate results
collected (and integrated) in the final weight. 
In comparison, as mentioned before, 
in the~$\FuTS$ approach, the relevant values are manipulated in a more direct way,
using the higher-order operators on continuation functions, applying them directly
to the continuations in the transitions {\em within the same}  the semantics definition rules.
So, in a sense, the~$\FuTS$ approach is better suited
for a \emph{compositional} definition of the operational semantics of
a wide range of process calculi due to the suitable choice of a rich
set of generic operations on continuation functions.
In~\cite{LMV13} the investigation on the relationship for nested
$\FuTS$ between $\FuTS$ bisimilarity, and behavioural equivalence, and
also coalgebraic bisimilarity is presented. In particular, it is shown
that the functor type involved preserves weak pullbacks when the
underlying semiring satisfies the zero-sum property.

The process languages with stochastic delays we consider in the
sequel, involve a multi-way CSP-like parallel operator; components
proceed simultaneously when synchronization on an action from the
synchronization alphabet that indexes the parallel operator is
possible. However, here we do not distinguish between internal and
external non-determinism, cf.~\cite{Hoa85:ph}, since an explicit
representation of such a distinction is not relevant for the subject
of this paper. A coalgebraic treatment of this distinction is proposed
in~\cite{Wol02:tcs}, which uses a functor for so-called
non-deterministic filter automata, viz.\ $\calP(\calP(\calA)) \times [
  \, \calA \pfun \calP_{\!\!f} \mkern1mu ( {\cdot} ) \,]$ involving
partial functions from a set of actions~$\calA$ to a finite
power-set. Via currying, this can be brought into the form $\fsfn{
  {\cdot} }{\bools}^{\mkern1mu \calL}$ for $\calL =
\calP(\calP(\calA)) \times \calA$, fitting the format of the functor
for the (simple) $\FuTS$ considered here. In~\cite{BG06:tcs} processes
are interpreted as formal power-series over a semiring in the style
of~\cite{Rut03:tcs}. This allows to compare testing equivalence for a
CSP-style language and bisimulation in a Moore automaton. Note that
the notions of equivalence addressed in this paper, as often in
coalgebraic treatments of process relations, are all strong
bisimilarities.

An extended abstract of part of this paper has appeared
as~\cite{LMV12:accat} where the coalgebraic view of the $\FuTS$
approach and its application to~$\PEPA$ and~$\IML$ was originally
presented. The workshop contribution~\cite{LMV15:qapl} gives an
account of bisimulation of $\FuTS$ of specific type and provides a
general correspondence result with of $\FuTS$-bisimulation and
behavioral equivalence. The present paper covers these ideas in a
structured way, going from simple $\FuTS$ to combined $\FuTS$ and
nested $\FuTS$. It includes the presentation of the use of combined
$\FuTS$ for the definition of the semantics of a language of
deterministically timed processes and the treatment of nested $\FuTS$
for the integration of stochastically timed, non-deterministic and
probabilistic processes, as in Markov Automata.

\blankline

\noindent
For the present paper we assume the reader to have some familiarity
with $\SPC$ and the application of $\FuTS$ for the definition of their
semantics. The reader is referred to~\cite{De+14} for an introduction
on the subject. Furthermore, in~\cite{LMV13} an illustrative
definition of a simple, qualitative, process calculus in the $\FuTS$
framework is shown.
Section~\ref{sec-preliminaries} provides basic concepts and
notation. Simple $\FuTS$ are introduced in Section~\ref{sec-futs},
followed by their coalgebraic treatment in
Section~\ref{sec-coalgebra}. Simple $\FuTS$ are illustrated by the
case of $\PEPA$ in Section~\ref{sec-pepa} which also covers the
correspondence of the respective notions of
bisimulation. Section~\ref{sec-combined} introduces combined $\FuTS$
as well as their coalgebraic representation. Sections \ref{sec-iml}
and~\ref{sec-tpc} treat $\IML$ and~$\TPC$. For both $\SPC$, semantics
based on combined $\FuTS$ are  given, and $\FuTS$ bisimulation is
compared to standard bisimulation. Next, Section~\ref{sec-nested}
introduces nested as well as general $\FuTS$, again tying up with
behavior equivalence. In Section~\ref{sec-mal}, a general $\FuTS$ is
used for the semantics of a Markov Automata language, for which the
notion of bisimulation is related to the standard
one. Section~\ref{sec-conclusions} wraps up and discusses closing
remarks.


\section{Preliminaries}
\label{sec-preliminaries}

\noindent
A tuple $\calR = \fivetuple{R}{+}{0}{\ast}{1}$ is called a semiring if
$\threetuple{R}{+}{0}$ is a commutative monoid with neutral
element~$0$, $\threetuple{R}{\ast}{1}$ is a monoid with neutral
element~$1$, $\ast$~distributes over~$+$, and $0 \ast r = r \ast 0 =
0$ for all~$r \in R$. 
As examples of a semiring we will use the
Booleans $\bools = \SET{ \, \FALSE ,\, \TRUE \, }$ with disjunction as
sum and conjunction as multiplication, the non-negative
reals~$\nnreals$ with the standard operations, and the powerset
construct~$\textbf{2}^X$ for a set~$X$ with intersection and union as
sum and multiplication, respectively.  
We will consider, for a
semiring~$\calR$ and a function $\varphi : X \to \calR$, (countable)
sums $\tssum_{x \myin X'} \; \varphi(x)$ in~$\calR$, for~$X' \subseteq
X$. For such a sum to exist we require~$\varphi$ to be of finite
support, i.e.\ the support set $\spt(\varphi) = \ZSET{ x \in X }{
  \varphi(x) \neq 0 }$ is finite. We use the notation $\fsum \varphi$
to denote the value $\tssum_{x \in X} \; {\varphi(x)}$ in~$\calR$.
    
We use the notation $\fsfn{X}{\calR}$ for the collection of all
functions of finite support from the set~$X$ to the semiring~$\calR$.
A construct $[ \, x_1 \mapsto r_1 ,\, \ldots ,\, x_n \mapsto r_n \,]$,
or more compactly $[ \, x_i \mapsto r_i ]_{i=1}^n$, with $x_i \in X$
all distinct and $r_i \in \calR$, denotes the mapping that assigns
$r_i$ to~$x_i$, $i = 1, \ldots, n$, and assigns $0$ to all $x \in X
\setminus \SET{x_1, \ldots, x_n}$.  In particular~$\zerof \mkern1mu$,
or more precisely~$\zerof_{\mkern1mu \calR}$, is the constant
function~$x \mapsto 0$ and $\chut^{\mkern1mu \calR}_{x} = [ \, x
  \mapsto 1 \, ]$ is the Dirac function on~$\calR$ for $x \in X$; in
the sequel we will often drop the subscript or superscript~$\calR$
from $\zerof_{\mkern1mu\calR}$ and~$\chut^{\mkern1mu \calR}_{x}$, when
the semiring~$\calR$ is clear from the context.

For $\varphi, \psi \in \fsfn{X}{\calR}$, the function $\varphi \cho
\psi$ is the pointwise sum of $\varphi$ and~$\psi$, i.e.\ $(\varphi
\cho \psi)(x) = \varphi(x) \cho \psi(x) \in \calR$. Clearly, $\varphi
\cho \psi$ is of finite support as are $\varphi$ and~$\psi$. Slightly
more generally, for functions $\varphi_i \in \fsfn{X}{\calR}$ where $i
= 1 , \ldots , n$, we define the function $\tssum_{i=1}^n \:
\varphi_i$ in $\fsfn{X}{\calR}$ by $\big( \mkern1mu \tssum_{i=1}^n \:
\varphi_i \mkern1mu \big)(x) = \tssum_{i=1}^n \: \varphi_i \mkern1mu
(x)$. Given an injective operation~${\myop} \colon X \times X \to X$,
we define $\varphi \myop \psi : X \to \calR$, by $( \varphi \myop \psi
)( x ) = \varphi(x_1) \ast \psi(x_2)$ if $x = x_1 \myop x_2$ for some
$x_1, x_2 \in X$, and $( \varphi \myop \psi )( x ) = 0$
otherwise. Injectivity of the operation~$\myop$ guarantees that
$\varphi \myop \psi$ is well-defined. Again, $\varphi \myop \psi$ is
of finite support as are $\varphi$ and~$\psi$. Such an operation is
used in the setting of syntactic processes~$P$ that may have the form
$P_1 \myop P_2$ for two processes $P_1$ and~$P_2$ and a syntactic
operator~$\myop$.  



 
\blankline

\noindent
We recall some basic definitions from coalgebra.  See
e.g.~\cite{Rut00:tcs} for more details.  For a functor $\calF : \Set
\to \Set$ on the category $\Set$ of sets and functions, a
coalgebra~$\calX$ of~$\calF$ is a set~$X$ together with a mapping
$\alpha : X \to \calF(X)$.  A homomorphism between two
$\calF$-coalgebras $\calX = (X,\alpha)$ and~$\calY = (Y,\beta)$ is a
function $f : X \to Y$ such that $\calF(f) \compose \alpha = \beta
\compose f$. An $\calF$-coalgebra $(\Omega_{\calF},\omega_{\calF})$ is
called final or terminal, if there exists, for every $\calF$-coalgebra
$\calX=(X,\alpha)$, a unique homomorphism
$\fmorph{\calF}{\calX}{\cdot} : (X,\alpha) \to
(\Omega_{\calF},\omega_{\calF})$.  Two elements~$x_1, x_2$ of an
$\calF$-coalgebra~$\calX$ are called behavioural equivalent with
respect to~$\calF$ if $\fmorph{\calF}{\calX}{x_1} =
\fmorph{\calF}{\calX}{x_2}$, denoted $x_1 \beq{\calF}{\calS} x_2$.  In
the notation $\fmorph{\calF}{\calX}{\cdot}$ as well as
$\beq{\calF}{\calX}$, the indication of the specific coalgebra~$\calX$
will be omitted when clear from the context.

A functor~$\calF$ is called accessible if it preserves
$\kappa$-filtered colimits for some cardinal number~$\kappa$. However,
in the category~$\Set$, we have the following characterization of
accessibility: for every set~$X$ and any element $\xi \in \calF X$,
there exists a subset $Y \subseteq X$ with ${|} Y {|} < \kappa$, such
that $\xi \in \calF Y$. It holds that a functor has a final coalgebra
if it is $\kappa$-accessible for some cardinal
number~$\kappa$. See~\cite{AP04:tcs,AMM10:esslli}.

\blankline

\noindent
A number of proofs of results on process languages~$\prc{}$ in this
paper rely on so-called guarded induction~\cite{KR90:tcs}. Typically,
constants~$X$, also called process variables, are a syntactical
ingredient in these languages. As usual, if $X \dfas P$, i.e.\ the
constant~$X$ is declared to have the process~$P$ as its body, we
require $P$ to be prefix-guarded, i.e. any occurrence of a constant in
the body~$P$ is in the scope of a prefix-construct of the language.
Guarded induction assumes the existence of a `complexity' function $c
: \prc{} \to \nats$ such that $c(P) = 1$ if $P$~is a prefix construct,
$c(P_1 \mathbin{\raisebox{0.5pt}{\fontsize{9}{0}{$\bullet$}}} P_2) >
\textrm{max} \SET{ \, c(P_1) ,\, c(P_2) \,}$ for all other syntactic
operators~\raisebox{0.5pt}{\fontsize{9}{0}{$\bullet$}} of~$\prc{}$,
and, in particular, $c(X) > c(P)$ if $X \dfas P$. For all concrete
process languages treated in this paper such a complexity function can
be given straightforwardly. See~\cite{BV96:mit} for more detail.

  For convenience we collect here a number of abbreviations used in
  the sequel: $\CTMC$ and $\DTMC$ for the standard notions of
  Continuous-Time Markov Chains and Discrete-Time Markov Chains,
  respectively; $\LTS$ for Labelled Transition System, 
  $\RTS$ for Rate Transition System, and $\FuTS$ for Labelled
  State-to-Function Transition System, the extension of $\LTS$ we
  focus on in this paper; $\SPC$ for Stochastic Process Calculus,
  referring to the class of process algebras featuring a choice
  construct based on a non-negative exponential distribution; for
  specific process calculi and semantic models, viz.\ $\PEPA$ for
  Performance Evaluation Process Algebra, $\IMC$ for Interactive
  Markov Chains and $\IML$ for the $\IMC$-based language used in this paper, $\TPC$
  for an example Timed Process Calculus, $\MA$ for Markov Automata
  and $\MAL$ for the $\MA$-based language used in this paper. 


\section{Simple State-to-Function Labelled Transition Systems}
\label{sec-futs}

Below we introduce \emph{simple} $\FuTS$,
i.e.\ $\FuTS$ with a single transition relation, which are sufficient
for the definition of the semantics of many of the relevant stochastic
process languages proposed in the literature (see~\cite{De+14} for
details). 

\begin{defi}
  \label{df-ltfs}
  A \emph{simple} $\FuTS$~$\calS$, in full `a simple state-to-function
  labelled transition system', over label set~$\calL$ and
  semiring~$\calR$, is a tuple $\calS = ( \, S ,\, \ \mkern-2mu
  {\mtrans{}} \, )$ where ${\mtrans{}}$ is the transition relation,
  with ${\mtrans{}} \, \subseteq \, S \times \calL
  \times \fsfn{\mkern2mu S}{\calR}$.
\end{defi}

\noindent
In the sequel we omit the word `simple' when this cannot give rise to
confusion. Similar as for state-to-state transitions of $\LTS$, for
state-to-function transitions of $\FuTS$ we write $s \mtrans{\ell} v$
for $(s,\ell,v) \in {\mtrans{}}$.  
Note that a (simple) $\FuTS$ over a label set~$\calL$ and a
semiring~$\calR$ is reminiscent of a weighted
automaton~\cite{DKV09}. However, for a FuTS no output function is
given, as is for a weighted automaton. To stress the relationship
between $\LTS$ and~$\FuTS$ we stick to the terminology and notion
stemming from~$\LTS$.

For a $\FuTS$~$\calS = ( \, S ,
\mkern-2mu {\mtrans{}} \, )$ the set~$S$ is called the set of states
or the carrier set. We refer to $\mtrans{}$ as the state-to-function
transition relation of~$\calS$ or just as the transition relation.  A
$\FuTS$~$\calS$ is called total and deterministic if, for all $s \in
S$ and $\ell \in \calL \mkern1mu$, we have $s \mtrans{\ell} v$ for
exactly one $v \in \fsfn{S}{\calR}$. In such a situation, the
state-to-function relation~$\mtrans{}$ corresponds to a function $S
\to \calL \to \fsfn{S}{\calR}$.  For the remainder of the paper, all
$\FuTS$ we consider will be total and deterministic. It is noted that
Definition~\ref{df-ltfs} slightly differs in formulation from the one
provided in~\cite{De+14}.


As an example, Figure~\ref{fig-examples} displays a simple $\FuTS$
over the action set~$\calA$ and the semiring~$\nnreals$ of the
non-negative real numbers with standard sum and multiplication.  The functions $v_0$ to~$v_3$ used in the
example have the property that $\fsum v_\myi(s) \: = \: 1$, for $i =
0, \ldots , 3$.  More explicitly, we have
\begin{displaymath}
  \def\arraystretch{1.5}
  \begin{array}{c}
    s_0 \mtrans{a} [ s_0 \mapsto \frac12 ,\, s_1 \mapsto \frac12 ]
    \qquad
    s_2 \mtrans{a} [ s_2 \mapsto \frac12 ,\, s_3 \mapsto \frac12 ]
    \qquad
    s_3 \mtrans{a} [ s_0 \mapsto \frac12 ,\, s_3 \mapsto \frac12 ]
    \\
    s_1 \mtrans{a} [ s_1 \mapsto \frac12 ,\, s_2 \mapsto \frac12 ]
    \qquad
    s_1 \mtrans{b} [ s_0 \mapsto \frac16 ,\, s_2 \mapsto \frac12 ,\,
    s_3 \mapsto \frac13 ]
    \\
    \text{
      $s_i \mtrans{b}
      \zerofn_{\bools} \; $ for $i = 0, 2, 3$}
    \qquad
  \end{array}
  \def\arraystretch{1.0}
\end{displaymath}
Usually, such a $\FuTS$ over~$\nnreals$, with its weights adding up
to~$1$, is called a (reactive) probabilistic transition
system~\cite{GSS95:ic}. 

  

%
\begin{figure}
  \centering
  \scalebox{0.90}{%
  \raisebox{0.3in}{\includegraphics[scale=0.50]{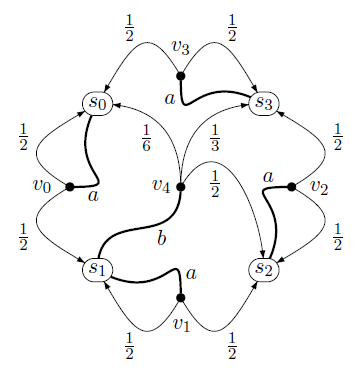}}
  } 
  \halflineup
  \halflineup
  \caption{Simple $\FuTS$ for 
  a probabilistic process.}
  \label{fig-examples}
\end{figure}


\blankline

\noindent
Below it will be notationally convenient to consider a (total,
deterministic and simple) $\FuTS$ as a tuple $( \, S ,\, \mkern-2mu
{\theta} \, )$ with transition function $\theta : S \to \calL \to
\fsfn{\mkern2mu S}{\calR}$, rather than using the form $( \, S ,\,
\mkern-2mu {\mtrans{}} \, )$ that occurs more frequently for concrete
examples in the literature. We will use the notation with transition
functions $\theta : S \to \calL \to \fsfn{\mkern2mu S}{\calR}$ to
introduce the notion of bisimilarity for a simple $\FuTS$.

\blankline

\begin{defi}
  \label{df-ltfs-bisim}
  Let $\calS = ( \, S ,\, \mkern-2mu {\theta} \, )$ be a simple
  $\FuTS$ over label set $\calL$ and semiring~$\calR$. An equivalence
  relation $R \subseteq S \times S$ is called an $\calS$-bisimulation
  if $R(s_1,s_2)$ implies
  \begin{equation}
    \tssum_{t' \in \Rclass{t}} \; \theta \mkern2mu (s_1)(\ell
    \mkern2mu )(t' \mkern1mu ) 
    =
    \tssum_{t' \in \Rclass{t}} \; \theta \mkern2mu (s_2)(\ell
    \mkern2mu )(t' \mkern1mu ) 
    \label{eq-ltfs-bisim} 
  \end{equation}%
  for all $t \in S$ and $\ell \in \calL$, where we use the notation
  $\Rclass{t}$ to denote the equivalence class of~$t \in S$ with
  respect to~$R$. Two elements $s_1, s_2 \in S$ are called
  $\calS$-bisimilar if $R(s_1,s_2)$ for some $\calS$-bisimulation~$R$
  for~$\calS$. Notation $x_1 \FuTSbis{\calS} x_2$.
\end{defi}

\noindent
Note that the sums in equation~(\ref{eq-ltfs-bisim}) exist since the
functions $\theta \mkern2mu (s_1)(\ell \mkern2mu ), \theta \mkern2mu
(s_2)(\ell \mkern2mu ) \in \fsfn{\mkern2mu S}{\calR}$ are of finite
support.

\section{Simple \texorpdfstring{$\FuTS$}{FuTS} coalgebraically}
\label{sec-coalgebra}

In this section we will cast \emph{simple} $\FuTS$ in the framework of
coalgebras and prove a correspondence result of $\FuTS$ bisimilarity
and behavioural equivalence for functors of the form
$\fsfn{{\cdot}}{\calR}^{\mkern1mu \calL}$ on~$\Set$, with~$\calR$ a
semiring and $\calL$ a set of labels. 
%
\begin{defi}
  \label{df-v-functor}
  Let $\calL$ be a set of labels and $\calR$ a semiring.  Functor
  $\calULR : \Set \to \Set$ assigns to a set~$X$ the function space
  $\fsfn{X}{\calR}^{\mkern1mu \calL}$ of all functions $\varphi :
  \calL \to \fsfn{X}{\calR}$ and assigns to a mapping $f : X \to Y$
  the mapping $\calULR(\myf) : \fsfn{X}{\calR}^{\mkern1mu \calL} \to
  \fsfn{Y}{\calR}^{\mkern1mu \calL}$ where
  \begin{displaymath}
    \calULR(\myf)(\varphi)(\myell)(y)
    =
    \tssum_{x \myin \finv(y)} \  \varphi(\myell)(x)
  \end{displaymath}
  for all $\varphi \in \fsfn{X}{\calR}^{\mkern1mu \calL}$, $\ell \in
  \calL$ and~$y \in Y$.
\end{defi}

\noindent
Working in the context of $\FuTS$ we include the label set~$\calL$ in
the notation for the functor~$\calULR$. The functor
$\fsfn{{\cdot}}{\calX}$ itself, for $\calX$ not necessarily a
semiring, but a commutative monoid or field instead, have been studied
frequently in the literature, see
e.g.~\cite{GS01:entcs,Kli09:mosses,BBBRS12}.

Again we rely on~$\varphi(\myell) \in \fsfn{X}{\calR}$ having a finite
support for the sum to exist and for $\calULR$ to be well-defined. We
observe that for any simple $\FuTS$ $(S, \theta)$ over $\calL$ and
$\calR$ we have $\theta : S \to \calL \to \fsfn{S}{\calR}$. Thus
$(S,\theta)$ can be interpreted as a $\calULR$-coalgebra. In the
sequel, we will abbreviate $\calULR$ with $\calU$ whenever $\calL$ and
$\calR$ are clear from the context.
  
\blankline

\noindent
As we aim at comparing our notion of bisimilarity for simple~$\FuTS$
with behavioural equivalence for the functor~$\calULR$, $\calU$~for
short, given a set of labels~$\calL$ and a semiring~$\calR$, we need
to check that $\calU$ possesses a final coalgebra.
For this, one may adapt the proof for the functor~$\fsfn{\cdot}{\calM}
: \Set \to \Set$ where $\calM$~is a monoid (rather than a semiring) as
sketched in~\cite{Sil10:phd,SBBR10:ic} to the setting here. 
An alternative route to showing the existence of a final coalgebra is
to verify accessibility. We directly apply the results
of~\cite[Section~5]{AP04:tcs}.

\begin{lem}
  \label{lm-v-is-bounded}
  For a set of labels~$\calL$ and a semiring~$\calR$, the
  functor~$\calU$ has a final coalgebra.
\end{lem}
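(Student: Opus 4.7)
My plan is to verify the accessibility criterion stated just before the lemma and then invoke the standard result that every accessible functor on $\Set$ admits a final coalgebra (as cited from~\cite{AP04:tcs,AMM10:esslli}). Concretely, I will take $\kappa = \bigl(|\calL| + \aleph_0\bigr)^{\!+}$, the successor of the cardinal $|\calL|+\aleph_0$, and show that for every set~$X$ and every $\xi \in \calU(X)$ there is a subset $Y \subseteq X$ with $|Y|<\kappa$ such that $\xi$ lies in the image of $\calU(\iota_{Y,X})$, where $\iota_{Y,X} : Y \hookrightarrow X$ is the inclusion; this is precisely the reformulation of $\kappa$-accessibility for endofunctors of $\Set$ recalled in the preliminaries.

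The key observation driving the construction of~$Y$ is the finite-support assumption built into the definition of $\fsfn{X}{\calR}$. Given $\xi \in \fsfn{X}{\calR}^{\calL}$, for each label~$\ell \in \calL$ the function $\xi(\ell) \in \fsfn{X}{\calR}$ has a finite support $\spt(\xi(\ell)) \subseteq X$. I set
\[
  Y \;=\; \bigcup_{\ell \in \calL} \spt(\xi(\ell)) \;\subseteq\; X.
\]
Each summand is finite, so $|Y| \le |\calL| \cdot \aleph_0 = |\calL|+\aleph_0 < \kappa$. Define $\xi' : \calL \to \fsfn{Y}{\calR}$ by $\xi'(\ell)(y) = \xi(\ell)(y)$ for $y \in Y$; this is well-defined since $\spt(\xi(\ell)) \subseteq Y$, and $\xi' \in \calU(Y)$. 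From the definition of $\calU$ on morphisms, $\calU(\iota_{Y,X})(\xi')(\ell)(x) = \sum_{y \in \iota_{Y,X}^{-1}(x)} \xi'(\ell)(y)$, which equals $\xi'(\ell)(x) = \xi(\ell)(x)$ when $x \in Y$ and equals~$0$ otherwise; since $\xi(\ell)(x) = 0$ for $x \notin Y$ by construction of~$Y$, this shows $\calU(\iota_{Y,X})(\xi') = \xi$, as required.

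Thus $\calU$ is $\kappa$-accessible, so by the quoted result it has a final coalgebra. The main potential obstacle is simply taking care that the cardinal bound works uniformly in~$X$, which is handled here by the absolute bound $|\calL|+\aleph_0$ that does not depend on $X$; the rest is a routine unwinding of the action of $\calU$ on the inclusion $Y \hookrightarrow X$.
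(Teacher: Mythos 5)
Your argument is correct and follows essentially the same route as the paper's own proof: both bound the relevant data by the union $Y=\bigcup_{\ell\in\calL}\spt(\xi(\ell))$ of the (finite) supports, observe that $|Y|$ is bounded by a cardinal depending only on $|\calL|$, and invoke the accessibility criterion of~\cite{AP04:tcs} to obtain a final coalgebra. Your version merely spells out the cardinal bound and the verification that $\calU(\iota_{Y,X})(\xi')=\xi$ in a bit more detail.
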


\begin{proof}
  It suffices to show that the $\Set$-functor $\calU$ is accessible
  for some suitable cardinal number.  In fact, $\calU$ is ${|} \calL
  \mkern1mu {|} \mkern2mu {\times} \mkern2mu \omega \mkern2mu
  $-accessible: Consider $\varphi : \calL \to \fsfn{X}{\calR}$ in the
  image of the set~$X$. Let $Y_\ell \subseteq X$ be the support
  of~$\varphi \mkern1mu (\ell) \in \fsfn{X}{\calR}$ and $Y =
  \bigcup_{\ell \in \calL} \: Y_\ell \subseteq X$. Then $\varphi$ can
  be seen as an element of $\calL \to \fsfn{Y}{\calR}$, since outside
  of~$Y$ it holds that $\varphi$ equals~$0 \in \calR$.
\end{proof}

\noindent
Next we establish, for a given simple $\FuTS$~$\calS$, the
correspondence of $\calS$-bisimulation as given by
Definition~\ref{df-ltfs-bisim} and behavioural equivalence induced
by~$\calU$. The proof is similar to~\cite[Theorem~1]{BBBRS12}.
  
\begin{thm}
  \label{th-correspondence}
  Let $\calS = ( \, S ,\, \mkern-2mu {\theta} \mkern1mu)$ be a simple
  $\FuTS$ over the label set $\calL$ and semiring~$\calR \mkern1mu$,
  and $\calU$ as in Definition~\ref{df-v-functor}.  Then $s_1
  \FuTSbis{\calS} s_2 \IFF s_1 \beq{\calU}{} s_2$, for all $s_1, s_2
  \in S$.
\end{thm}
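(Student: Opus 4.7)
The plan is to establish the two directions of the equivalence by the standard bridge between relational bisimulation and coalgebraic behavioural equivalence, suitably adapted to the $\fsfn{\cdot}{\calR}$ construct.

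For the forward direction, I would start with an $\calS$-bisimulation $R$ and turn the quotient $S/R$ into a $\calU$-coalgebra. Specifically, I would define $\bar\theta \colon S/R \to \calL \to \fsfn{S/R}{\calR}$ by
\begin{displaymath}
  \bar\theta(\Rclass{s})(\ell)(\Rclass{t})
  \: = \:
  \tssum_{t' \myin \Rclass{t}} \; \theta(s)(\ell)(t'),
\end{displaymath}
observing that the sum exists because $\theta(s)(\ell)$ has finite support, and that the bisimulation condition~(\ref{eq-ltfs-bisim}) is exactly what is needed to make $\bar\theta$ independent of the choice of representative $s$ in~$\Rclass{s}$. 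I would also briefly note that $\bar\theta(\Rclass{s})(\ell)$ is of finite support in~$S/R$, as only finitely many equivalence classes can meet $\spt(\theta(s)(\ell))$. The next step is to verify that the quotient map $q \colon S \to S/R$ is a $\calU$-coalgebra homomorphism, i.e.\ $\calU(q) \compose \theta = \bar\theta \compose q$. Unwinding the definition of~$\calU(q)$ in Definition~\ref{df-v-functor}, both sides evaluate at $(s,\ell,\Rclass{t})$ to $\tssum_{s' \myin \Rclass{t}} \theta(s)(\ell)(s')$, so the equality holds. By uniqueness of the mediating morphism into the final coalgebra (guaranteed by Lemma~\ref{lm-v-is-bounded}), one gets $\fmorph{\calU}{\calS}{\cdot} = \fmorph{\calU}{S/R}{\cdot} \compose q$, and therefore $R(s_1,s_2)$ yields $\fmorph{\calU}{\calS}{s_1} = \fmorph{\calU}{\calS}{s_2}$, i.e.\ $s_1 \beq{\calU}{} s_2$.

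For the backward direction, I would take $R$ to be the kernel of $\fmorph{\calU}{\calS}{\cdot}$, which is automatically an equivalence relation. To show that $R$ satisfies~(\ref{eq-ltfs-bisim}), I would use that $\fmorph{\calU}{\calS}{\cdot}$ is a coalgebra homomorphism, so $\calU(\fmorph{\calU}{\calS}{\cdot}) \compose \theta = \omega_{\calU} \compose \fmorph{\calU}{\calS}{\cdot}$. For $R(s_1,s_2)$ the right-hand sides agree at $s_1$ and $s_2$, hence so do the left-hand sides. Evaluating at an arbitrary~$\ell \in \calL$ and at the element $\fmorph{\calU}{\calS}{t} \in \Omega_{\calU}$ for any $t \in S$, Definition~\ref{df-v-functor} gives
\begin{displaymath}
  \tssum_{s' \myin \fmorph{\calU}{\calS}{\cdot}^{-1}(\fmorph{\calU}{\calS}{t})} \theta(s_i)(\ell)(s')
  \: = \:
  \tssum_{s' \myin \Rclass{t}} \theta(s_i)(\ell)(s')
  \quad \text{for } i=1,2,
\end{displaymath}
and these two sums coincide. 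This is condition~(\ref{eq-ltfs-bisim}), so $R$ is an $\calS$-bisimulation, whence $s_1 \FuTSbis{\calS} s_2$.

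The main technical point to watch is the compatibility between the sum operation in the semiring and the action of the functor $\calU$ on morphisms, in particular the re-indexing of sums over preimages / equivalence classes; because all continuations involved have finite support, these sums are well-defined and the routine rearrangements go through, so this is more a matter of bookkeeping than a real obstacle. The genuinely essential ingredient is the existence of a final $\calU$-coalgebra supplied by Lemma~\ref{lm-v-is-bounded}, which legitimises the use of the unique mediating morphism in both directions.
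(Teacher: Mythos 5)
Your proposal is correct and follows essentially the same route as the paper's own proof: the forward direction constructs the quotient coalgebra $(S/R,\varrho_R)$ with exactly the formula you give and uses the canonical projection as a $\calU$-homomorphism together with finality, and the backward direction takes the kernel of the final morphism and shows it is an $\calS$-bisimulation by evaluating the homomorphism equation at $\fmorph{\calU}{}{t}$. Your explicit remark that the quotiented continuation remains of finite support is a small point the paper leaves implicit, but otherwise the two arguments coincide.
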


\begin{proof}
  Let $s_1, s_2 \in S$.  We first prove ${s_1 \FuTSbis{\calS} s_2} \,
  \IMPL \, {s_1 \beq{\calU}{} s_2}$.  So, assume $s_1 \FuTSbis{\calS}
  s_2$.  Let $R \subseteq S \times S$ be an $\calS$-bisimulation with
  $R(s_1,s_2)$.   Note $\twotuple{S}{\mytheta}$ is a
  $\calU$-coalgebra.  We turn the collection of equivalence
  classes~$S/R$ into a $\calU$-coalgebra $\calS_R=(S/R,\varrho_R)$ 
  where
  \begin{displaymath}
  \varrho_R ( \, \Rclass{s} \, )(\myell)( \,
    \Rclass{t} \, ) = \textstyle{\sum}_{t' \myin \Rclass{t}} \;
    \theta(s)(\myell)(t')
  \end{displaymath}
  for $s, t \in S$, and $\ell \in \calL$.  This is
  well-defined since $R$ is an $\calS$-bisimulation: if $R(s,s')$ then
  we have $\textstyle{\sum}_{t' \in \Rclass{t}} \;
  \theta(s)(\myell)(t') = \textstyle{\sum}_{t' \in \Rclass{t}} \;
  \theta(s')(\myell)(t')$.  The canonical mapping $\varepsilon_R
  : S \to S/R$ is a $\calU$-homomorphism: For $\ell
  \in \calL$ and $t \in S$, we have
  \begin{displaymath}
    \def\arraystretch{1.2}
    \begin{array}{rcll}
      \multicolumn{4}{l}{ \calU \mkern1mu
        (\varepsilon_R)( \, \theta(s)\,)(\myell)(\Rclass{t})}
      \\ & = &
      \tssum_{t' \myin \varepsilon^{-1}_R( \Rclass{t} )} \; \theta( s )( \myell )(t' )
      & \text{by definition of~$\calU$}
      \\ & = &
      \tssum_{t' \myin \Rclass{t}} \; \theta( s )( \myell )(t' )
      & \text{by definition of~$\varepsilon_R$}
      \\ & = &
       \varrho_R \mkern1mu ( \, \Rclass{s} \, )(\myell)( \,\Rclass{t} \, )
      & \text{by definition of~$\varrho_R$}
      \\ & = &
       \varrho_R \mkern1mu ( \, \varepsilon_R(s) \, )(\myell)(
       \,\Rclass{t} \, ) 
      & \text{by definition of~$\varepsilon_R$}
    \end{array}
  \end{displaymath}
  Thus, $\calU(\varepsilon_R) \compose \theta = \varrho
  \compose \varepsilon_R$, i.e.  
  $\varepsilon_R$ is a $\calU$-homo\-morphism. Therefore, by
  uniqueness of a final morphism, we have
  $\fmorph{\calU}{\calS}{\cdot} = \fmorph{\calU}{{\calS_R}}{\cdot}
  \compose \, \varepsilon_R$.  In particular, with respect to~$\calS$,
  this implies $\fmorph{\calU}{}{s_1} = \fmorph{\calU}{}{s_2}$ since
  $\varepsilon_R(s_1) = \varepsilon_R(s_2)$.  Thus, $s_1 \beq{\calU}{}
  s_2$.
  
  For the reverse, ${s_1 \beq{\calU}{} s_2 } \, \IMPL \, { s_1
    \FuTSbis{\calS} s_2 }$, assume $s_1 \beq{\calU}{} s_2$, i.e.
  $\fmorph{\calU}{}{s_1} = \fmorph{\calU}{}{s_2}$, for $s_1,s_2 \myin
  S$.  Since the map $\fmorph{\calU}{}{\cdot} : \twotuple{S}{\mytheta}
  \to \twotuple{\Omega}{\omega}$ is a $\calU$-homomorphism, the
  equivalence relation~$R_{\calS}$ with $R_{\calS} \mkern1mu (s',s'')
  \IFF \fmorph{\calU}{}{s'} = \fmorph{\calU}{}{s''}$ is an
  $\calS$-bisimulation: Suppose $R_{\calS} \mkern1mu (s',s'')$, i.e.\
  $s' \beq{\calU}{} s''$, for some $s',s'' \in S$.  
  Pick $\ell \in \calL$, $t \in S$ and assume 
  $\fmorph{\calU}{}{t} = w \in \Omega$.  Since
  $\fmorph{\calU}{}{\cdot} : \twotuple{S}{\mytheta} \to
  \twotuple{\Omega}{\omega}$ is a $\calU$-homomorphism we have that 
  $\omega \, \compose \,  \fmorph{\calU}{}{\cdot} \, = \,
  \calU(\fmorph{\calU}{}{\cdot})\,  \compose \, \theta$.  Hence, for
  $s \in S$, it holds that
  \begin{equation}
    \omega \mkern1mu ( \,\fmorph{\calU}{}{s} \,)( \myell )(w ) 
    = 
    \calU(\fmorph{\calU}{}{\cdot}) (\theta(s) )( \myell )(w) 
    = 
    \tssum_{t' \in \,  {\fmorph{\calU}{-1}{\cdot}} (w)} \;
    \theta(s)( \myell )(t')
    \label{eq-thetai-OmegaS}
  \end{equation}
  Therefore we have
  \begin{displaymath}
    \def\arraystretch{1.2}
    \begin{array}{rcll}
      \multicolumn{4}{l}{\tssum_{t' \myin [t]_{R_{\calS}}} \;
        \theta(s')(\myell)(t')} 
      \\ & = &
      \tssum_{t' \myin  \, \fmorph{\calU}{-1}{\cdot}\!(w)} \; \theta (s')(\myell)(t') 
      & \text{by definition of~$R_{\calS}$ and~$w$}
      \\ & = &
      \omega \mkern1mu ( \,\fmorph{\calU}{}{s'} \,)( \myell )(w )
      & \text{by equation~(\ref{eq-thetai-OmegaS})}
      \\ & = &
      \omega \mkern1mu ( \,\fmorph{\calU}{}{s''} \,)( \myell )(w )
      & \text{$s' \beq{\calU}{} s''$ by assumption}
      \\ & = &
      \tssum_{t' \myin \, \fmorph{\calU}{-1}{\cdot}\!(w)} \; \theta (s'')(\myell)(t')  
      & \text{by equation~(\ref{eq-thetai-OmegaS})}
      \\ & = &
      \tssum_{t' \myin [t]_{R_{\calS}}} \; \theta(s'')(\myell)(t') 
      & \text{by definition of~$R_{\calS}$ and~$w$}
    \end{array}
    \def\arraystretch{1.0}
  \end{displaymath}
  Thus, if $R_{\calS} \mkern1mu (s',s'')$ then $\tssum_{t' \myin
    [t]_{R_{\calS}}} \; \theta(s')(\myell)(t') = \tssum_{t' \myin
    [t]_{R_{\calS}}} \; \theta(s'')(\myell)(t')$ for all $t \in
  S$ and  $\ell \in \calL$, and therefore $R_{\calS}$
  is an $\calS$-bisimulation.  Since $\fmorph{\calU}{}{s_1} =
  \fmorph{\calU}{}{s_2}$, it follows that $R_{\calS} \mkern1mu
  (s_1,s_2)$.  Thus $R_{\calS}$ is an $\calS$-bisimulation relating
  $s_1$ and~$s_2$.  Conclusion, it holds that $s_1 \FuTSbis{\calS}
  s_2$.
\end{proof}

\noindent
In the next section we will provide $\FuTS$ semantics for a fragment of $\PEPA$, a representative process language.
For this language we will establish that its
standard notion of strong equivalence as known in the literature
coincides with the notion of strong bisimulation as induced by the
$\FuTS$ semantics. The results of this section form the basis for
showing that the standard notions of strong equivalence on the one
hand, and behavioural equivalence on the other hand, are all the
same. The notion of bisimulation for $\FuTS$ plays an intermediary
role: it bridges between the standard notion of concrete equivalence
and the abstraction notions from coalgebra.


\section{\texorpdfstring{$\FuTS$}{FuTS} Semantics of \texorpdfstring{$\PEPA$}{PEPA}}
\label{sec-pepa}

In this section we consider a significant fragment of the {\em
  Performance Evaluation Process Algebra}, $\PEPA$,~\cite{Hil96:phd}
--which we still call $\PEPA$ for simplicity-- including the parallel
operator implementing the scheme of so-called minimal apparent rates,
and provide a $\FuTS$ semantics for it. We point out that there is no
technical difficulty in extending the $\FuTS$ approach to the full
language; we do not do so here since its treatment does not yield a
conceptual benefit for this paper. We present a $\FuTS$ semantics for
$\PEPA$ in line with~\cite{De+14} and show that $\PEPA$'s notion of
equivalence~$\PEPAseq \mkern1mu$, called strong equivalence
in~\cite{Hil96:phd}, fits with the bisimilarity induced by the $\FuTS$
semantics.

\begin{defi}
  The set $\prcPEPA$ of $\PEPA$ processes is given by the grammar below:
  \begin{displaymath}
    P \bnfeq \nil \mid \alambda.P \mid P + P \mid P \sync{A} P \mid X
  \end{displaymath}
  where $a$~ranges over the set of actions~$\calA$, $\lambda$
  over~$\poreals$, $A$~over the set of finite subsets of~$\calA$, and
  $X$~over the set of constants~$\calX$. 
\end{defi}

\noindent
For $X \in \calX$, the notation $X \dfas P$ indicates that the
process~$P$ is associated with the process constant~$X$.  It is required that
each occurrence of a process constant in the body~$P$ of the
definition $X \dfas P$ is guarded by a prefix.

$\PEPA$, like many other $\SPC$, e.g.\ \cite{HHM98:cnis,BG98:tcs},
couples actions and rates.  The prefix $\alambda$ of the process
$\alambda.P$ expresses that the duration of the execution of the
action~$a \in \calA$ is sampled from a random variable with an
exponential distribution of rate~$\lambda$. The CSP-like parallel
composition $P \sync{A} Q$ of a process~$P$ and a process~$Q$ for a
set of actions~$A \subseteq \calA$ allows for the independent,
asynchronous execution of actions of $P$ or~$Q$ not occurring in the
subset~$A$, on the one hand, and requires the simultaneous,
synchronized execution of $P$ and~$Q$ for the actions occurring
in~$A$, on the other hand.  The transition rules of the
$\FuTS$-semantics of the fragment of $\PEPA$ we consider here is given
in Figure~\ref{fig-pepa-rules}, on which we comment below.
  
Characteristic for the $\PEPA$ language is the choice to model
parallel composition, or cooperation in the terminology of $\PEPA$,
scaled by the minimum of the so-called apparent rates.  By doing so,
$\PEPA$'s strong equivalence becomes a congruence~\cite{Hil96:phd}.
Informally, the apparent rate $r_a(P)$ of an action~$a$ for a
process~$P$ is the sum of the rates of all possible $a$-executions
for~$P$. The apparent rate $r_a(P)$ can easily be defined recursively
on the structure of~$P$ (see~\cite[Definition~3.3.1]{Hil96:phd} for
details). Accordingly, in the sequel we will refer to $r_a(P)$ as the
`syntactic' apparent rate. When considering the parallel composition
$P \sync{A} Q$, with cooperation set~$A$, an action~$a$ occurring
in~$A$ has to be performed by both $P$ and~$Q$. The rate of such an
execution is governed by the slowest of the two processes, on average,
in this respect. (One cannot take the slowest process per sample,
because such an operation cannot be expressed as an exponential
distribution in general.) Thus $r_a( \mkern1mu P \sync{A} Q \mkern1mu
)$ for $a \in A$ is the minimum $\min \SET{ \, r_a(P), \, r_a(Q) \,
}$.  Now, if $P$~schedules an execution of~$a$ with rate~$r_1$ and
$Q$~schedules a transition of~$a$ with rate~$r_2$, in the minimal
apparent rate scheme the combined execution yields the action~$a$ with
rate $r_1 \cdot r_2 \cdot \arf ( P , Q )$.  Here, the `syntactic'
scaling factor $\arf(P,Q)$, the apparent rate factor, is defined by
\begin{displaymath}
\def\arraystretch{1.0}
  \arf(P,Q) =
\begin{array}{c}
  \min \SET{ \, r_a(P), \, r_a(Q) \, }
  \\ \hline 
  r_a(P) \cdot r_a(Q)
\end{array}
\def\arraystretch{1.0}
\end{displaymath}
assuming $r_a(P), r_a(Q) > 0$, otherwise $\arf(P,Q) = 0$. Organizing
the product $r_1 \cdot r_2 \cdot \arf(P,Q)$ differently as $r_1/r_a(P)
\cdot r_2/r_a(Q) \cdot \min \SET{ \, r_a(P), \, r_a(Q) \, }$ we see
that for $P \sync{A} Q$ the minimum of the apparent rates $\min \SET{ \,
  r_a(P), \, r_a(Q) \, }$ is adjusted by the relative probabilities
$r_1/r_a(P)$ and~$r_2/r_a(Q)$ for executing~$a$ by~$P$ and~$Q$,
respectively.  

\begin{figure}
\begin{displaymath}
\scalebox{0.85}{$
\begin{array}{c}
\sosrn{NIL}{}
\sosrule
  {\phantom{\mtrans{\alpha}_\pepa}}
  {\nil \, \mtrans{\Edelaya}_\pepa \, \zerof_{\nnreals}} 
\qquad
\sosrn{RAPF1}{}
\sosrule
  {\phantom{\mtrans{\alpha}_\pepa}}
  {\alambda.P \, \mtrans{\Edelaya}_\pepa \, [P \mapsto \lambda]}
\qquad
\sosrn{RAPF2}{}
\sosrule
  {b \neq a}
  {\alambda.P \, \mtrans{\Edelayb}_\pepa \, \zerof_{\nnreals}} 
\bigskip \\
\sosrn{CHO}{}
\sosrule
  {P \, \mtrans{\Edelaya}_\pepa \, \amset{P} \quad 
   Q \, \mtrans{\Edelaya}_\pepa \, \amset{Q}}
  {P \cho Q \  \mtrans{\Edelaya}_\pepa \  \amset{P} \cho \amset{Q}}
\qquad
\sosrn{CNS}{}
\sosrule
  {P \, \mtrans{\Edelaya}_\pepa \, \amset{P} \quad X \dfas P}
  {X \, \mtrans{\Edelaya}_\pepa \, \amset{P}}
  \bigskip \\
\!\!\!\sosrn{PAR1}{}\!\!
\sosrule
  {P \, \mtrans{\Edelaya}_\pepa \, \amset{P} \quad
   Q \, \mtrans{\Edelaya}_\pepa \, \amset{Q} \quad 
   a \, \notin \, A}
  {P \sync{A} Q \  \mtrans{\Edelaya}_\pepa \ 
   ( \, \amset{P} \sync{A} \chut_Q \, ) 
   \, + \,
   ( \, \chut_P \sync{A} \amset{Q} \, )}
\quad\!\!\!\!
\sosrn{PAR2}{}
\sosrule
  {P \, \mtrans{\Edelaya}_\pepa \, \amset{P} \quad 
   Q \, \mtrans{\Edelaya}_\pepa \, \amset{Q} \quad 
   a \, \in \, A}
  {P \sync{A} Q \  \mtrans{\Edelaya}_\pepa \ 
   \arf( \mkern1mu {\amsetP}, {\amsetQ} \mkern1mu )
   \, \cdot \, ( \, \amset{P} \sync{A} \amset{Q} \, )}
\end{array}
$} 
\end{displaymath}
\halflineup
\halflineup
\caption{$\FuTS$ Transition Deduction System for $\PEPA$.}
\label{fig-pepa-rules}
\end{figure}

\blankline

\noindent
The $\FuTS$ we consider for the semantics of $\PEPA$ has been proposed
originally in~\cite{De+14}. The transition relation is given by the
rules in Figure~\ref{fig-pepa-rules}. The set of labels involved
is~$\Delta_{\calA}$ defined by $\Delta_{\calA} = \ZSET{\Edelaya}{a \in
  \calA }$. In the context of the $\FuTS$ semantics considered in this
paper, we conventionally use the special symbol $\delta$ for denoting
that there is
a random \emph{delay},  with an negative exponential
distribution, associated with the action. 
The underlying semiring for the $\FuTS$
for~$\PEPA$ is the semiring~$\nnreals$ of non-negative reals.
    
\begin{defi}
  \label{df-ltfs-pepa}
  The simple $\FuTS$ $\calSpepa =
  \twotuple{\prcPEPA}{\mtrans{}_\pepa}$ over $\Delta_{\calA}$ and
  $\nnreals$ has as transition relation the smallest relation
  satisfying the axioms and rules of Figure~\ref{fig-pepa-rules}.
\end{defi}

\noindent
We discuss the rules of~$\calSpepa$. The $\FuTS$ semantics provides
$\nil \mtrans{\Edelaya}_\pepa \zerof_{\nnreals}$, for every
action~$a$, with $\zerof_{\nnreals}$ the 0-function of~$\nnreals$.
Therefore we have $\thetapepa(\nil)(\Edelaya)(P') = 0$ for every $a
\in \calA$ and $P' \in \prcPEPA$, or, in standard terminology, $\nil$
has no transition.  For the rated action prefix $\alambda$ we
distinguish two cases: (i)~execution of the prefix in rule~(RAPF1);
(ii)~no~execution of the prefix in rule~(RAPF2).  In the case of
rule~(RAPF1) the label~$\Edelaya$ signifies that the transition
involves the execution of the action~$a$.  The continuation $[ \, P
\mapsto \lambda \, ]$ is the function that assigns the rate~$\lambda$
to the process~$P$.  All other processes are assigned~$0$, i.e.\ the
zero-element of the semiring~$\nnreals$.  In the second case,
rule~(RAPF2), for labels~$\Edelayb$ with $b \neq a$, we do have a
state-to-function transition, but it is a degenerate one.  The two
rules for the prefix, in particular having the `null-continuation'
rule (RAPF2), support the unified treatment of the choice operator in
rule (CHO) and the parallel operator in rules (PAR1) and~(PAR2). The
treatment of constants is as usual.

The semantics of the choice operator is defined by rule (CHO), where
the continuation of  process  $P \cho Q$
is given by direct composition---using pointwise sum---of the continuation 
$\amset{P}$ of $P$ and the continuation $\amset{Q}$ of $Q$.
 
Regarding the parallel operator~$\sync{A}$, with respect to some
cooperation set $A \subseteq \calA$ there are two rules. Now the
distinction is between interleaving and synchronization. In the case
of a label~$\Edelaya$ involving an action~$a$ not in the subset~$A$,
either the $P$-operand or the $Q$-operand of $P \sync{A} Q$ makes
progress.  For example, the effect of the pattern $\amset{P} \sync{A}
\chut_Q$ is that the value $\amset{P}(P') \cdot 1$ is assigned to a
process~$P' \sync{A} Q$, the value $\amset{P}(P') \cdot 0 = 0$ to a
process $P' \sync{A} Q'$ for all $Q' \neq Q$, and the value~$0$ for a
process not of the form $P' \sync{A} Q'$.  Note that the
  syntactic constructor $\sync{A}: \prcPEPA \times \prcPEPA \to
  \prcPEPA$ is clearly injective; so, for all functions $\amset{P}$
  and $\amset{Q}$ in $\fsfn{\prcPEPA}{\nnreals}$, we can define
  $\amset{P} \sync{A} \amset{Q}$, as described in
  Section~\ref{sec-preliminaries}.  Here, as in all other rules, the
right-hand sides of the transitions only involve functions
in~$\fsfn{\prcPEPA}{\nnreals}$ and operators on them.
  
For the synchronization case of the parallel construct, assuming $P
\mtrans{\Edelaya}_\pepa \amsetP$ and $Q \mtrans{\Edelaya}_\pepa
\amsetQ$, the `semantic' scaling factor $\arf(\amsetP,\amsetQ)$ is
applied to~$\amsetP \sync{A} \amsetQ $. This scaling factor for
continuation in~$\fsfn{\prcPEPA}{\nnreals}$, is, very much
similar to its `syntactic' counterpart, given by
\halflineup
\begin{displaymath}
  \arf( \mkern1mu \amsetP, \, \amsetQ \mkern1mu ) = 
  \begin{array}{c}
    \min \mkern2mu \SET{ \, \fsum \amsetP ,\, \fsum \amsetQ \, }
    \\ \hline
	\fsum \amsetP \cdot \fsum \amsetQ
  \end{array}
\end{displaymath}
provided $\fsum \amsetP, \fsum \amsetQ > 0$, and $\arf( \mkern1mu
\amsetP, \, \amsetQ \mkern1mu ) = 0$ otherwise. For a process~$R = R_1
\sync{A} R_2$ we obtain the value $\arf ( \mkern1mu \amsetP, \, \amsetQ
\mkern1mu ) \cdot ( \, \amsetP \sync{A} \amsetQ \, )( R_1 \sync{A} R_2) =
\arf ( \mkern1mu \amsetP, \, \amsetQ \mkern1mu ) \cdot \amsetP(R_1)
\cdot \amsetQ(R_2)$.
  
The following lemma establishes the relationship between the
`syntactic' and `semantic' apparent rate factors defined on processes
and on continuation functions, respectively.

\begin{lem}
\label{lm-sem-syn-arf}
  Let $P \in \prcPEPA$ and $a \in \calA$. 
  Suppose $P \mtrans{\Edelaya}_\pepa \amsetP$.
  Then $ r_a(P) = \fsum \amsetP$.
  \qed
\end{lem}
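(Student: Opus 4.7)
The plan is to proceed by induction on the structure of $P$, relying on guarded induction (as discussed in Section~\ref{sec-preliminaries}) to handle process constants. For each syntactic form, exactly one rule of Figure~\ref{fig-pepa-rules} fires (or a rule plus a side-condition split), so the induced continuation $\amsetP$ is uniquely determined, and I can compute $\fsum\amsetP$ directly and compare it against the recursive definition of $r_a(P)$ given in~\cite[Definition~3.3.1]{Hil96:phd}.

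First I would dispatch the base cases. For $P = \nil$, rule (NIL) gives $\amsetP = \zerof_{\nnreals}$, hence $\fsum\amsetP = 0 = r_a(\nil)$. For $P = (b,\lambda).P'$ I split on whether $b = a$: rule (RAPF1) yields $\amsetP = [P' \mapsto \lambda]$ with $\fsum\amsetP = \lambda = r_a((a,\lambda).P')$, while rule (RAPF2) yields $\amsetP = \zerof_{\nnreals}$ with $\fsum\amsetP = 0 = r_a((b,\lambda).P')$. The constant case $X \dfas P'$ is immediate from rule (CNS), the guarded induction hypothesis applied to $P'$, and the fact that $r_a(X)$ is defined as $r_a(P')$.

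For the choice $P = P_1 \cho P_2$, rule (CHO) produces $\amsetP = \amsetP_1 \cho \amsetP_2$, so by pointwise summation $\fsum\amsetP = \fsum\amsetP_1 + \fsum\amsetP_2$, which by the induction hypothesis equals $r_a(P_1) + r_a(P_2) = r_a(P_1 \cho P_2)$. For the interleaving case of parallel, i.e.\ $P = P_1 \sync{A} P_2$ with $a \notin A$, rule (PAR1) gives $\amsetP = (\amsetP_1 \sync{A} \chut_{P_2}) \cho (\chut_{P_1} \sync{A} \amsetP_2)$. Here the injectivity of the syntactic constructor $\sync{A}$ ensures that the supports of the two summands are disjoint in the relevant sense, and a direct computation yields $\fsum\amsetP = \fsum\amsetP_1 \cdot \fsum\chut_{P_2} + \fsum\chut_{P_1} \cdot \fsum\amsetP_2 = \fsum\amsetP_1 + \fsum\amsetP_2$, matching the syntactic definition of $r_a$ for interleaving.

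The main obstacle, and the one case that requires care, is the synchronization case $P = P_1 \sync{A} P_2$ with $a \in A$. Rule (PAR2) yields $\amsetP = \arf(\amsetP_1,\amsetP_2) \cdot (\amsetP_1 \sync{A} \amsetP_2)$, so
\begin{displaymath}
\fsum\amsetP
\;=\;
\arf(\amsetP_1,\amsetP_2) \cdot \fsum\amsetP_1 \cdot \fsum\amsetP_2.
\end{displaymath}
If $\fsum\amsetP_1 > 0$ and $\fsum\amsetP_2 > 0$, unfolding the definition of the semantic apparent rate factor gives $\fsum\amsetP = \min\{\fsum\amsetP_1,\fsum\amsetP_2\}$, which by the induction hypothesis equals $\min\{r_a(P_1),r_a(P_2)\} = r_a(P_1 \sync{A} P_2)$. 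If either $\fsum\amsetP_1 = 0$ or $\fsum\amsetP_2 = 0$, then $\arf(\amsetP_1,\amsetP_2) = 0$ by convention and $\fsum\amsetP = 0$; correspondingly the induction hypothesis forces $r_a(P_1) = 0$ or $r_a(P_2) = 0$, so the syntactic minimum is also $0$. This completes the induction.
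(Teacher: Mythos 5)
Your proof is correct and is exactly the argument the paper intends: the paper merely asserts the lemma is ``straightforward'' by guarded induction, and your case analysis supplies the omitted details, with the right handling of the base cases, the pointwise sum for choice, and the cancellation $\arf(\amsetP_1,\amsetP_2) \cdot \fsum \amsetP_1 \cdot \fsum \amsetP_2 = \min \SET{\fsum \amsetP_1, \fsum \amsetP_2}$ (including the degenerate zero case) for synchronization. One small imprecision in the interleaving case: the two summands $\amsetP_1 \sync{A} \chut_{P_2}$ and $\chut_{P_1} \sync{A} \amsetP_2$ need not have disjoint supports (both can be nonzero at $P_1 \sync{A} P_2$); what injectivity of $\sync{A}$ actually delivers is the identity $\fsum ( \varphi \sync{A} \psi ) = \fsum \varphi \cdot \fsum \psi$, which is all your computation in fact uses, so the conclusion stands.
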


\noindent
The proof of the lemma is straightforward (relying on the obvious
definition of~$r_a(P)$, omitted above, which can be found in~\cite{Hil96:phd}).  It is also easy to prove, by
guarded induction, that the $\FuTS$ $\calSpepa$ given by
Definition~\ref{df-ltfs-pepa} is total and deterministic.  

\begin{lem}
\label{lm-pepa-total-det}
  The $\FuTS$ $\calSpepa$ is total and deterministic.
  \qed
\end{lem}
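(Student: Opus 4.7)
The plan is to proceed by guarded induction on the complexity $c(P)$ of the process $P \in \prcPEPA$, as outlined in Section~\ref{sec-preliminaries}, establishing simultaneously for every label $\delta_a \in \Delta_{\calA}$ the existence of a continuation $\amsetP \in \fsfn{\prcPEPA}{\nnreals}$ with $P \mtrans{\delta_a}_\pepa \amsetP$ (totality), and its uniqueness (determinism). Syntax-directedness of the deduction system in Figure~\ref{fig-pepa-rules} makes it natural to read existence and uniqueness off the rules once the inductive hypothesis is in place.

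First I would dispatch the base cases. For $P = \nil$, rule (NIL) fires for every $a$ and is the only applicable rule, yielding the unique witness $\zerof_{\nnreals}$. For $P = (b,\lambda).Q$, exactly one of (RAPF1) and (RAPF2) is applicable depending on whether $a = b$ or $a \neq b$; in the two cases the witnesses are $[Q \mapsto \lambda]$ and $\zerof_{\nnreals}$ respectively, both in $\fsfn{\prcPEPA}{\nnreals}$ because they are finitely supported by construction.

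For the inductive step, I would treat each syntactic form. For $P = P_1 + P_2$ only (CHO) is applicable, and by the inductive hypothesis each premise is satisfied uniquely by some $\amsetP_1, \amsetP_2$, so the only witness for $P_1 + P_2$ is their pointwise sum, which lies in $\fsfn{\prcPEPA}{\nnreals}$ because finite support is preserved by pointwise addition. For $P = P_1 \sync{A} P_2$ the rules (PAR1) and (PAR2) are mutually exclusive, selected by the side-condition $a \notin A$ versus $a \in A$; by the inductive hypothesis the premises fix $\amsetP_1$ and $\amsetP_2$ uniquely, and the right-hand sides are well-defined elements of $\fsfn{\prcPEPA}{\nnreals}$ because the syntactic operator $\sync{A}$ is injective (as noted right after Figure~\ref{fig-pepa-rules}), so the construction $\amsetP_1 \sync{A} \amsetP_2$ from Section~\ref{sec-preliminaries} applies and preserves finite support, and the scaling by $\arf(\amsetP_1,\amsetP_2) \in \nnreals$ remains in the semiring. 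For $P = X$ with $X \dfas Q$, the prefix-guardedness hypothesis guarantees $c(X) > c(Q)$, so the induction hypothesis applies to $Q$; rule (CNS) is the only applicable rule and simply transports the unique $Q$-continuation to~$X$.

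The only real subtlety, which is the place I would write carefully, is to verify in each step that the constructed continuation genuinely lies in $\fsfn{\prcPEPA}{\nnreals}$, i.e.\ that the finite support is preserved by the operators used on the right-hand sides of (CHO), (PAR1) and (PAR2); this is covered by the observations about these operators in Section~\ref{sec-preliminaries}. All other steps amount to a routine inspection of the rules, which are syntax-directed and have mutually exclusive side conditions, so uniqueness is immediate from uniqueness of the premises.
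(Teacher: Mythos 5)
Your proof is correct and follows exactly the route the paper indicates: the paper itself gives no written proof for this lemma, merely remarking that it is ``easy to prove, by guarded induction,'' and your elaboration --- induction on the complexity function with a syntax-directed case analysis, checking mutual exclusivity of the side conditions for uniqueness and preservation of finite support for well-definedness of the continuations --- is precisely the argument being alluded to.
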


\noindent
In view of the lemma it is justified to write $\calSpepa
=\twotuple{\prcPEPA}{\thetapepa}$. We use the abbreviated
notation~$\FuTSbis{\pepa}$ for denoting $\FuTSbis{\calSpepa}$, the
bisimulation equivalence induced by~$\calSpepa$.
  
\blankline

\begin{exa}
  To illustrate the ease to deal with multiplicities in the $\FuTS$
  semantics, consider the $\PEPA$ processes $P_1 = \alambda.P$ and
  $P_2 = \alambda.P + \alambda.P$ for some~$P \in \prcPEPA$.  We have that 
  $P_1 \mtrans{\Edelaya}_\pepa [ \, P \mapsto \lambda \, ]$ by rule
  (RAPF1), but $P_2 \mtrans{\Edelaya}_\pepa [ \, P \mapsto 2 \lambda \, ]$
  by rule (RAPF1) and rule~(CHO).  The latter makes us to compute $[
  \, P \mapsto \lambda \, ] + [ \, P \mapsto \lambda \, ]$, which
  equals $[ \, P \mapsto 2 \lambda \, ]$.  Thus, in particular we have
  $P_1 \notFuTSbis{\calSpepa} P_2$.  Intuitively it is clear
  that, in general we cannot have $P + P \sim P$ for any reasonable
  quantitative process equivalence~$\sim$ in the Markovian setting.
  Having twice as many $a$-labelled transitions, the average number
  for $\alambda.P + \alambda.P$ of executing the action~$a$ per time
  unit is double the average of executing~$a$ for~$\alambda.P$.
\end{exa}

\begin{figure}
\begin{displaymath}
\scalebox{0.825}{$
\begin{array}{c}
\sosrn{RAPF}{}
\sosrule
  {\phantom{\mtrans{\alpha}_\pepa}}
  {\alambda.P \, \trans{a,\lambda}_\pepa P}
\qquad
\sosrn{CHO1}{}
\sosrule
  {P \, \trans{a,\lambda}_\pepa \, P'} 
  {P \cho Q \  \trans{a,\lambda}_\pepa P'} 
\qquad
\sosrn{CHO2}{}
\sosrule
  {Q \, \trans{a,\lambda}_\pepa \, Q'} 
  {P \cho Q \  \trans{a,\lambda}_\pepa P'} 
\bigskip \\
\sosrn{PAR1a}{}
\sosrule
  {P \, \trans{a,\lambda}_\pepa \, P' \quad 
   a \, \notin \, A}
  {P \sync{A} Q \  \trans{a,\lambda}_\pepa P' \sync{A} Q}
\quad
\sosrn{PAR1b}{}
\sosrule
  {Q \, \trans{a,\lambda}_\pepa Q' \quad 
   a \, \notin \, A}
  {P \sync{A} Q \  \trans{a,\lambda}_\pepa \  P \sync{A} Q'}
\bigskip \\
\sosrn{PAR2}{}
\sosrule
  {P \, \trans{a,\lambda_1} \, P' \quad 
   Q \, \trans{a,\lambda_2} \, Q' \quad 
   a \, \in \, A}
  {P \sync{A} Q \  \trans{a,\lambda}_\pepa \ P' \sync{A} Q'}
   \quad \text{$\lambda = \arf( \mkern1mu {P}, {Q} \mkern1mu ) {\cdot} \lambda_1 {\cdot} \lambda_2$}
\bigskip\\
\sosrn{CNS}{}
\sosrule
  {P \, \trans{a,\lambda}_\pepa \, P' \quad X \dfas P}
  {X \, \trans{a,\lambda}_\pepa \, P'}
\end{array}
$} 
\end{displaymath}
\halflineup
\halflineup
\caption{Standard Transition Deduction System for $\PEPA$.}
\label{fig-standard-pepa-rules}
\end{figure}

\blankline

\noindent
The standard operational semantics
of~$\PEPA$~\cite{Hil96:phd,Hil05:lics} is given in
Figure~\ref{fig-standard-pepa-rules}.  The transition relation
${\trans{}_\pepa} \subseteq \prcPEPA \times ( \, \calA \times \poreals
\, ) \times \prcPEPA$ is the least relation satisfying the rules.  For
an appropriate treatment of the rates, the transition relation is
considered as a multi-transition system, where also the number of
possible derivations of a transition $P \trans{a,\lambda}_\pepa P'$
matters.  We stress that such bookkeeping is not needed in the
$\FuTS$-approach.  In rule~(PAR2) we use the `syntactic' apparent rate
factor for $\PEPA$ processes.

The so-called total conditional transition rate $q[P,C,a]$ of a
$\PEPA$-process~\cite{Hil96:phd,Hil05:lics} for a subset of processes
$C \subseteq \prcPEPA$ and~$a \in \calA$ is given by
\begin{displaymath}
  q[P,C,a] 
  = 
  \tssum_{Q \myin C} \; \sum \MSET{ \lambda }{P \trans{a,\lambda}_\pepa Q}.
\end{displaymath}
Here, $\mlbrace \, P \trans{a,\lambda}_\pepa Q \, \mrbrace$ is the
multiset of transitions $P \trans{a,\lambda}_\pepa Q$ and $\MSET{
  \lambda }{P \trans{a,\lambda}_\pepa Q}$ is the multiset of
all~$\lambda$'s involved.  The multiplicity of $P
\trans{a,\lambda}_\pepa Q$ is to be interpreted as the number of
different ways the transition can be derived using the rules of
Figure~\ref{fig-standard-pepa-rules}.  We are now ready to define
$\PEPA$'s notion of strong equivalence.
  
\begin{defi}
  \label{df-strong-equiv}
  An equivalence relation $R \subseteq \prcPEPA \times \prcPEPA$ is
  called a strong equivalence if 
  \[
  q[ P_1 , \Rclass{Q} , a ] = q[ P_2 , \Rclass{Q} , a ]
  \]
   for all $P_1, P_2 \in \prcPEPA$ such that
  $R(P_1,P_2)$, all~$Q \in \prcPEPA$ and all~$a \in \calA$.  Two
  processes $P_1, P_2 \in \prcPEPA$ are strongly equivalent if
  $R(P_1,P_2)$ for a strong equivalence~$R$, notation 
  $P_1 \PEPAseq P_2$.  
\end{defi}

\noindent
The next lemma couples, for a $\PEPA$-process~$P$, an action~$a$ and a continuation
function $\amsetP \in \fsfn{\prcPEPA}{\nnreals}$, the
evaluation~$\amsetP(P')$ with respect to the $\FuTS$-semantics to the
cumulative rate for~$P$ of reaching~$P'$ by a transition involving the
label~$a$ in the standard operational semantics. The lemma is pivotal
in relating $\FuTS$ bisimulation and standard bisimulation for~$\PEPA$
in Theorem~\ref{th-pepa-strong-equiv-lfts-bisim} below.

\begin{lem}
\label{lm-cnt-ltfs-match}
  Let $P \in \prcPEPA$ and $a \in \calA$. 
  Suppose $P \mtrans{\Edelaya}_\pepa \amsetP$.
  The following holds: $\amsetP (P') = \tssum \; \MSET{ \lambda }{ P
    \trans{a,\lambda}_\pepa P' }$ for all~$P' \in \prcPEPA$. 
\end{lem}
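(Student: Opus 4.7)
The plan is to prove the equality by guarded induction on the complexity of $P$, using the complexity function mentioned in Section~\ref{sec-preliminaries}. Since $\calSpepa$ is total and deterministic (Lemma~\ref{lm-pepa-total-det}), for each $P$ and $a$ the function $\amsetP$ is uniquely determined, so the statement is unambiguous.

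For the base cases, I would handle $\nil$ and prefix directly. For $\nil$, rule~(NIL) gives $\amsetP = \zerof_{\nnreals}$ while the standard rules derive no $a$-transitions at all, so both sides equal~$0$. For $\alambda.P_0$, rule~(RAPF1) gives $\amsetP = [P_0 \mapsto \lambda]$ while the unique standard derivation via~(RAPF) yields the singleton multiset $\MSET{\lambda}{\alambda.P_0 \trans{a,\lambda}_\pepa P_0}$; for the mismatched label case rule~(RAPF2) gives $\zerof_{\nnreals}$, matching the empty multiset. The constant case $X \dfas P_0$ reduces immediately to the body~$P_0$ via~(CNS) on both sides; guarded induction is what legitimizes this step, and the comparison of multisets is one-to-one because each standard derivation for~$X$ factors uniquely through a derivation for~$P_0$.

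The inductive step for choice $P_1 \cho P_2$ uses rule~(CHO), which yields $\amsetP = \amsetP_1 \cho \amsetP_2$ with pointwise addition. The standard $a$-transitions of $P_1 \cho P_2$ to a given~$P'$ partition into those derived via (CHO1) (one per derivation of $P_1 \trans{a,\lambda}_\pepa P'$) and those via (CHO2) (one per derivation of $P_2 \trans{a,\lambda}_\pepa P'$), so the multiset of rates is the disjoint union of the two; applying the induction hypothesis to each operand then gives the claim. The key point to emphasise here is that multiplicities are preserved automatically because $\fsfn{\prcPEPA}{\nnreals}$ forms a commutative monoid under pointwise~$+$.

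The main work is the parallel case $P_1 \sync{A} P_2$, split by whether $a \in A$. For $a \notin A$, rule~(PAR1) gives $\amsetP = (\amsetP_1 \sync{A} \chut_{P_2}) \cho (\chut_{P_1} \sync{A} \amsetP_2)$, which evaluated at a target $R_1 \sync{A} R_2$ selects $\amsetP_1(R_1)$ when $R_2 = P_2$ and $\amsetP_2(R_2)$ when $R_1 = P_1$; on the standard side the only derivations come from (PAR1a) or (PAR1b), which enforce exactly the same target shape and keep one component unchanged, so the counts agree by the induction hypothesis. For $a \in A$, rule~(PAR2) gives $\amsetP = \arf(\amsetP_1,\amsetP_2) \cdot (\amsetP_1 \sync{A} \amsetP_2)$; evaluated at $R_1 \sync{A} R_2$ this equals $\arf(\amsetP_1,\amsetP_2) \cdot \amsetP_1(R_1) \cdot \amsetP_2(R_2)$. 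On the standard side, rule~(PAR2) yields one transition with rate $\arf(P_1,P_2) \cdot \lambda_1 \cdot \lambda_2$ for every pair consisting of a derivation of $P_1 \trans{a,\lambda_1}_\pepa R_1$ and a derivation of $P_2 \trans{a,\lambda_2}_\pepa R_2$. Thus $\tssum \MSET{\lambda}{P_1 \sync{A} P_2 \trans{a,\lambda}_\pepa R_1 \sync{A} R_2} = \arf(P_1,P_2) \cdot \bigl(\tssum \MSET{\lambda_1}{P_1 \trans{a,\lambda_1}_\pepa R_1}\bigr) \cdot \bigl(\tssum \MSET{\lambda_2}{P_2 \trans{a,\lambda_2}_\pepa R_2}\bigr)$. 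Lemma~\ref{lm-sem-syn-arf} shows $\arf(P_1,P_2) = \arf(\amsetP_1,\amsetP_2)$, and two applications of the induction hypothesis close the case. The hard part is precisely bookkeeping of this product of sums versus sum over pairs of derivations and invoking Lemma~\ref{lm-sem-syn-arf} to reconcile the syntactic and semantic apparent-rate factors; for targets not of the form $R_1 \sync{A} R_2$ both sides are $0$ by construction.
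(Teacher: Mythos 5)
Your proof is correct and follows essentially the same route as the paper's: guarded induction with parallel composition as the only nontrivial case, using distributivity of the sum over pairs of derivations plus Lemma~\ref{lm-sem-syn-arf} to reconcile the syntactic and semantic apparent-rate factors in the synchronising case. The only presentational difference is that the paper splits the interleaving case into three sub-cases by the shape of the target, isolating $P_1 \sync{A} P_2$ itself, where both rules (PAR1a) and (PAR1b) and both summands of the continuation contribute; your pointwise-sum formulation handles that overlap implicitly.
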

\begin{proof}
  Guarded induction on~$P$. We only treat the cases for the parallel
  composition.  Note, the operation $ {\sync{A}} : \prcPEPA \times
  \prcPEPA \to \prcPEPA$ with ${\sync{A}}\twotuple{P_1}{P_2} = P_1
  \sync{A} P_2$ is injective.  Recall, for $\amsetP_1, \amsetP_2 \in
  \fsfn{\prcPEPA}{\nnreals}$, we have $(\amsetP_1 \sync{A}
  \amsetP_2)(P_1 \sync{A} P_2) = \amsetP_1(P_1) \cdot \amsetP_2(P_2)$.
 
  Suppose $a \notin \calA$.  Assume $P_1 \mtrans{\Edelaya}_\pepa
  \amsetP_1$, $P_2 \mtrans{\Edelaya}_\pepa \amsetP_2$, $P _1 \sync{A}
  P_2 \mtrans{\Edelaya}_\pepa \amsetP$.  We distinguish three cases.
  
\noindent
Case~(I), $P' = P'_1 \sync{A} P_2$, $P'_1 \neq P_1$.  Then we have
\begin{displaymath}
\def\arraystretch{1.2}
\begin{array}{rcl@{\qquad\qquad}l}
  \multicolumn{4}{l}{\tssum \MSET{ \lambda }{ P_1 \sync{A} P_2 \,
      \transalambda_\pepa \, P' }} 
  \\ & = & 
  \tssum \MSET{ \lambda }{ P_1 \, \transalambda_\pepa \, P'_1 }
  & \text{by rule (PAR1a)}
  \\ & = &
  \amsetP_1( P'_1 )
  & \text{by the induction hypothesis}
  \\ & = &
  \amsetP_1( P'_1 ) \cdot \chut_{P_2}( P_2 )
  & \text{since $\chut_{P_2}( P_2 ) = 1$}
  \\ & = & \multicolumn{2}{l}{%
    ( \amsetP_1 \sync{A} \chut_{P_2} )( P'_1\sync{A} P_2 )  +
    ( \chut_{P_1} \sync{A} \amsetP_2 )( P'_1 \sync{A} P_2 )
  } 
  \\ & & &
  \text{definition $\sync{A}$ on~$\fsfn{\prcPEPA}{\nnreals}$,
    $\chut_{P_1}(P'_1) = 0$} 
  \\ & = & 
  \amsetP( P' )
  & \text{by rule (PAR1)}
\end{array}
\def\arraystretch{1.0}
\end{displaymath}
Case~(II), $P' = P_1 \sync{A} P'_2$, $P'_2 \neq P_2$: similar.

\noindent
Case~(III), $P' = P_1 \sync{A} P_2$.
Then we have:
\begin{displaymath}
  \def\arraystretch{1.2}
  \begin{array}{rcl@{\quad}l}
    \multicolumn{4}{l}{\tssum \MSET{ \lambda }{ P_1 \sync{A} P_2 \,
        \transalambda_\pepa \, P' }} 
   \\ & = & \multicolumn{2}{l}{%
      \bigl( \, \tssum \MSET{ \lambda }{ P_1 \, \transalambda_\pepa \,
      P_1 } \, \bigr)
      +
      \bigl( \, \tssum \MSET{ \lambda }{ P_2 \, \transalambda_\pepa \,
      P_2 } \, \bigr)
    } 
    \\ & & \multicolumn{2}{r}{%
      \text{by rules (PAR1a) and (PAR1b)}
    } 
    \\ & = &
    \amsetP_1( P_1 )
    +
    \amsetP_2( P_2 )
    & \text{by the induction hypothesis}
    \\ & = & \multicolumn{2}{l}{%
      ( \amsetP_1 \sync{A} \chut_{P_2} )( P_1 \sync{A} P_2 )
      +
      ( \chut_{P_1} \sync{A} \amsetP_2 )( P_1 \sync{A} P_2 )
    } 
    \\ & & \multicolumn{2}{r}{%
      \text{definition $\sync{A}$ on~$\fsfn{\prcPEPA}{\nnreals}$,
        $\chut_{P_1}(P_1),\ \chut_{P_2}( P_2 ) = 1$} 
    } 
    \\ & = & 
    \amsetP( P' )
    & \text{again by rule (PAR1)}
  \end{array}
  \def\arraystretch{1.0}
\end{displaymath}

Suppose $a \in A$.  Assume $P_1 \mtrans{\Edelaya}_\pepa \amsetP_1$,
$P_2 \mtrans{\Edelaya}_\pepa \amsetP_2$, $P _1 \sync{A} P_2
\mtrans{\Edelaya}_\pepa \amsetP$.  Without loss of generality, $P' =
P'_1 \sync{A} P'_2$ for suitable $P'_1, P'_2 \in \prcPEPA$.
 
\begin{displaymath}
\def\arraystretch{1.2}
\begin{array}{rcl@{\quad}l}
 \multicolumn{4}{l}{\tssum \MSET{ \lambda }{ P_1 \sync{A} P_2 \,
     \transalambda_\pepa \, P' }}
\\ & = & \multicolumn{2}{l}{%
 \tssum \MSET{ \arf(P_1,P_2) \cdot \lambda_1 \cdot \lambda_2 }{ P_1 \,
   \trans{a,\lambda_1}_\pepa \, P'_1 ,\, P_2 \,
   \trans{a,\lambda_2}_\pepa \, P'_2 } 
   } 
   \\ & & \multicolumn{2}{r}{
     \text{by rule (PAR2)}
     } 
    \\ & = & \multicolumn{2}{l}{%
 \arf(P_1,P_2) \cdot 
 \bigl( \mkern1mu 
 \tssum \MSET{ \lambda_1 }{ P_1 \, \trans{a,\lambda_1}_\pepa \, P'_1 } 
 \mkern1mu \bigr) \cdot \bigl( \mkern1mu 
 \tssum \MSET{ \lambda_2 }{ P_2 \, \trans{a,\lambda_2}_\pepa \, P'_2 }
 \mkern1mu \bigr)
     } 
     \\ & & \multicolumn{2}{r}{%
      \text{by distributivity}
    } 
   \\ & = & \multicolumn{2}{l}{
 \arf(P_1,P_2) \cdot 
 \amsetP_1( P'_1 )
 \cdot
 \amsetP_2( P'_2 )
   }  
   \text{by the induction hypothesis}
     \\ & = & \multicolumn{2}{l}{%
 \arf( \amsetP_1, \amsetP_2) \cdot 
 \amsetP_1( P'_1 )
 \cdot
 \amsetP_2( P'_2 )
 } 
 \text{by Lemma~\ref{lm-sem-syn-arf}}
    \\ & = &  \multicolumn{2}{l}{%
 \arf(\amsetP_1,\amsetP_2) \cdot 
 ( \amsetP_1 \sync{A} \amsetP_2 )( P'_1 \sync{A} P'_2 )
 } 
\text{definition $\sync{A}$ on~$\fsfn{\prcPEPA}{\nnreals}$}%
 \\ & = & \multicolumn{2}{l}{%
 \amsetP( P' )
  } 
 \text{by rule (PAR2)}
\end{array}
\def\arraystretch{1.0}
\end{displaymath}
The other cases are simpler and omitted here.
\end{proof}

\noindent
With the lemma in place we can prove the following correspondence
result for $\calSpepa$-bisimilarity and strong equivalence as given by
Definition~\ref{df-strong-equiv}.
\newpage

\begin{thm}
\label{th-pepa-strong-equiv-lfts-bisim}
For $\PEPA$-processes $P_1, P_2 \in \prcPEPA$, it holds that
$P_1 \FuTSbis{\pepa} P_2$ iff $P_1 \PEPAseq P_2$.
\end{thm}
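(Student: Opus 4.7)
The plan is to exploit Lemma~\ref{lm-cnt-ltfs-match} to show that the defining condition of an $\calSpepa$-bisimulation and that of a strong equivalence are, term-for-term, the same condition on an equivalence relation $R \subseteq \prcPEPA \times \prcPEPA$. So I would prove the stronger statement that the collection of $\calSpepa$-bisimulations coincides with the collection of strong equivalences, from which the theorem follows at once by taking unions.

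First, I would record the following bridging identity: for every $P \in \prcPEPA$, $a \in \calA$, and $C \subseteq \prcPEPA$, if $P \mtrans{\Edelaya}_\pepa \amsetP$, then
\begin{equation*}
\tssum_{P' \myin C} \; \amsetP(P')
\; = \;
\tssum_{P' \myin C} \; \tssum \MSET{\lambda}{P \trans{a,\lambda}_\pepa P'}
\; = \;
q[P, C, a].
\end{equation*}
The first equality is Lemma~\ref{lm-cnt-ltfs-match} applied pointwise; the second is the definition of the total conditional transition rate. Finiteness of the left sum is guaranteed because $\amsetP \in \fsfn{\prcPEPA}{\nnreals}$ has finite support; totality and determinism of $\calSpepa$ (Lemma~\ref{lm-pepa-total-det}) ensure that the continuation $\amsetP$ is uniquely determined by $P$ and $a$, so the quantity $\tssum_{P' \in C} \amsetP(P')$ is unambiguous and equals $\thetapepa(P)(\Edelaya)$ summed over $C$.

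With this identity in hand, I would unfold the definitions. By Definition~\ref{df-ltfs-bisim}, an equivalence relation $R$ is an $\calSpepa$-bisimulation iff $R(P_1, P_2)$ implies
$\tssum_{P' \myin \Rclass{Q}} \thetapepa(P_1)(\Edelaya)(P')
= \tssum_{P' \myin \Rclass{Q}} \thetapepa(P_2)(\Edelaya)(P')$
for all $Q \in \prcPEPA$ and $a \in \calA$. The bridging identity above rewrites each side as $q[P_1, \Rclass{Q}, a]$ and $q[P_2, \Rclass{Q}, a]$, respectively, so the condition is equivalent to $q[P_1, \Rclass{Q}, a] = q[P_2, \Rclass{Q}, a]$, which is exactly the condition that $R$ is a strong equivalence in the sense of Definition~\ref{df-strong-equiv}. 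Consequently, the two classes of relations coincide, hence so do the equivalences they induce as greatest such relations.

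There is essentially no obstacle left at this stage: the main difficulty of the argument is concentrated in Lemma~\ref{lm-cnt-ltfs-match}, where the accumulating, relation-based treatment of rates in continuations has to be matched against the multi-relation bookkeeping of the standard semantics (via Lemma~\ref{lm-sem-syn-arf} for the synchronising parallel case). Once that lemma is in place the theorem follows by a one-line unfolding as above, so I would keep this final proof very short.
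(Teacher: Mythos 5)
Your proposal is correct and follows essentially the same route as the paper: both reduce the theorem to the bridging identity $\tssum_{Q' \myin \Rclass{Q}} \thetapepa(P)(\Edelaya)(Q') = q[P,\Rclass{Q},a]$ obtained from Lemma~\ref{lm-cnt-ltfs-match}, and then observe that the defining conditions of an $\calSpepa$-bisimulation and of a strong equivalence become literally the same condition on an equivalence relation. The only (harmless) difference is that you state the identity for arbitrary subsets $C \subseteq \prcPEPA$ rather than just equivalence classes.
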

\begin{proof}
 Let~$R$ be an equivalence relation on~$\prcPEPA$.
 Choose $P,Q \in \prcPEPA$ and~$a \in \calA$.
 Suppose $P \mtrans{\Edelaya}_\pepa \amsetP$.
 Thus $\theta_\pepa (P)(\Edelaya) = \amsetP$.
 We have 
\begin{displaymath}
\def\arraystretch{1.2}
\begin{array}[t]{rcl@{\qquad}l}
 q[ P, \Rclass{Q}, a ]
 & = &
 \tssum_{Q' \myin \Rclass{Q}} \; \tssum\MSET{ \lambda }{P
   \trans{a,\lambda}_\pepa Q'} 
 & \text{by definition $q[ P, \Rclass{Q}, a ]$}
 \\ & = &
 \tssum_{Q' \myin \Rclass{Q}} \; \amsetP( Q' ) 
 & \text{by Lemma~\ref{lm-cnt-ltfs-match}}
 \\ & = &
 \tssum_{Q' \myin \Rclass{Q}} \; \thetapepa(P)(a)(Q')
 & \text{by definition $\thetapepa$}
\end{array}
\def\arraystretch{1.0}
\end{displaymath} 
Therefore, for $\PEPA$-processes $P_1$ and~$P_2$ it holds that $q[
P_1, \Rclass{Q}, a ] = q[ P_2, \Rclass{Q}, a ]$ for all $Q \in
\prcPEPA$, $a \in \calA$ iff $\tssum_{Q' \in \Rclass{Q}} \;
\thetapepa(P_1)(a)(Q') = \tssum_{Q' \in \Rclass{Q}} \;
\thetapepa(P_2)(a)(Q')$ for all $Q \in \prcPEPA$, $a \in \calA$.
Thus, the equivalence relation~$R$ is a strong equivalence
(Definition~\ref{df-strong-equiv}) iff $R$~is an
$\calSpepa$-bisimulation (Definition~\ref{df-ltfs-bisim}), from which
the theorem follows.
\end{proof}

\blankline

\noindent
By the theorem the $\FuTS$ semantics for~$\PEPA$ of
Definition~\ref{df-ltfs-pepa} is correct with respect to $\PEPA$'s
standard semantics of Figure~\ref{fig-standard-pepa-rules}. However,
because of the use of continuation functions, the former does not
involve implicit counting, decorations or multisets. From the general
results on $\FuTS$ of the previous section, we also obtain a
coalgebraic semantics for~$\PEPA$ for which behavioral equivalence
coincides with strong equivalence as defined in~\cite{Hil96:phd}.

\section{Combined \texorpdfstring{$\FuTS$}{FuTS}}
\label{sec-combined}

In the sequel of this article we will deal with a number of calculi
and models that mix non-deterministic behaviour with stochastic or
deterministic time or with probabilistic behaviour. In this section,
we introduce the notion of a \emph{combined} $\FuTS$, which allows for
a clean definition of the semantics of calculi where different aspects
of behaviour are integrated in an orthogonal way. Prominent
examples of such calculi are $\IML$, a language for~$\IMC$ where
non-determinism is integrated with stochastic continuous delays (see
Section~\ref{sec-iml}) and $\TPC$, a language where where
non-determinism is integrated with deterministic discrete delays (see
Section~\ref{sec-tpc}). 

\begin{defi}
  \label{df-gltfs}
  A combined $\FuTS$~$\calS$, in full `a combined state-to-function
  labeled transition system', over a number of label sets~$\calLi$
  and semirings~$\calRi$, $i = 1 , \ldots , n$, is a tuple $\calS = ( \, S
  ,\, \la \mkern-2mu {\mtrans{}_i} \ra^{\mkern1mu n}_{i = 1} \, )$
  with set of states~$S$ and such that ${\mtrans{}_i} \, \subseteq \,
  S \times \calLi \times \fsfn{\mkern2mu S}{\calRi}$, for $i = 1
  , \ldots , n$.
\end{defi}

\noindent
Combined $\FuTS$ of Definition~\ref{df-gltfs} extend the simple ones
of Definition~\ref{df-ltfs}. Note, a combined $\FuTS$ is defined over
a number of label sets and semirings, and, accordingly, gives rise to
the same number of transition relations. Thus, a combined $\FuTS$ can
be seen as a multi-dimensional simple $\FuTS$. The underlying idea is
that the behaviour model given by a combined $\FuTS$ is such that one can
identify different \emph{types} of labels, assuming disjoint label
sets $\calL_1 ,\, \ldots ,\, \calL_n$. Then, the continuation function
of a transition labeled with an element of~$\calLi$ is taken from
$\fsfn{\mkern2mu S}{\calRi}$, expressing the association of the label
set~$\calL_i$ with the semiring~$\calR_i$.

For example, in the case of $\IML$, with set of processes~$\prcIML$,
both non-deterministic behaviour and stochastically-timed behaviour
are treated. Furthermore, action execution is intended to be
instantaneous, while stochastic time is characterized by the rates
of negative exponential distributions. Consequently, it is convenient
to use two label sets, namely a set of actions~$\calA$ and a singleton
set~$\Delta = \SET{ \mkern2mu \delta \mkern2mu }$ where the
symbol~$\delta$ is used as label to indicate that the transition
involves some exponentially distributed delay. The relevant semirings
will be $\bools$, used for modeling the purely non-deterministic
aspects of behaviour, and~$\nnreals$, used for the rates
characterizing the stochastic aspects of behaviour, as in the case
of~$\PEPA$, but here without any association of delay and
actions. Consequently, for~$\IML$ there will be two transition
relations: ${\mtrans{}_1} \subseteq \prcIML \times \calA \times
\fsfn{\prcIML}{\bools}$ modeling non-deterministic behaviour, and
${\mtrans{}_2}\subseteq \prcIML \times \Delta \times
\fsfn{\prcIML}{\nnreals}$ modeling stochastic-time behaviour.

It is worth pointing out here that one could use an alternative
approach instead of taking resort to combined $\FuTS$, namely one
based on disjoint unions of label sets, and respectively, continuation
functions. Letting $\bigoplus_{i=1}^n X_i$ denote the disjoint union
of sets $X_i$, $i=1 , \ldots , n$, one could use a \emph{single}
transition relation 
\begin{displaymath}
  {\mtrans{}} \; \subseteq \; 
  S \, \times \,
  \bigoplus_{i=1}^n \calLi \, \times \, 
  \bigoplus_{i=1}^n \: \fsfn{S}{\calRi}
\end{displaymath}
satisfying the additional property that $v \in \fsfn{S}{\calRi}$ if
$\myell \in \calLi$, for all transitions $s \mtrans{\ell} v$.  As a
matter of fact, this approach based on disjoint unions and a single
transition relation has been used in~\cite{De+14}. Technically, the
two approaches are equivalent. On the other hand, in the definition
with a single transition relation, type compatibility between labels
and continuation functions yields an additional proof obligation for
the well-definedness the definition of the operational semantics for
every specific process calculus (the interested reader is referred
to~\cite{De+14} for details). The use of an approach with multiple
transition relations instead, automatically guarantees type
compatibility, viz.\ \emph{by definition}. Furthermore, the approach
based on disjoint unions appears less amenable to a
category-theoretical treatment. For the reasons mentioned we stick to
the format of Definition~\ref{df-gltfs} in this paper.

As we will see, for the purposes of the present paper it is sufficient
to consider only total and deterministic combined $\FuTS$, i.e.\ those
where every transition relation ${\mtrans{}_i}$ is a total
function. Consequently, it will be notationally convenient to consider
a combined $\FuTS$ $\calS = ( \, S ,\, \la \mkern-2mu {\mtrans{}_i}
\ra^{\mkern1mu n}_{i = 1} \, )$ as a tuple $( \, S ,\, \la \mkern-2mu
{\theta_i} \ra^{\mkern1mu n}_{i = 1} \, )$ with transition functions
$\theta_i : S \to \calLi \to \fsfn{\mkern2mu S}{\calRi}$, for $i = 1
, \ldots , n$, rather than using the form $( \, S ,\, \la \mkern-2mu
{\mtrans{}_i} \ra^{\mkern1mu n}_{i = 1} \, )$ that occurs more
frequently for concrete examples in the literature. In the sequel, we
occasionally omit the qualification `combined' for a combined $\FuTS$
when this cannot cause confusion. All relevant definitions and results
presented in Sections~\ref{sec-futs} and~\ref{sec-coalgebra} can be
extended straightforwardly to combined $\FuTS$. We refer
to~\cite{LMV13} for details on the extension of definitions, results
and their proofs. Here we recall the most important ones.

\begin{defi}
  \label{df-gltfs-bisim}
  For a combined $\FuTS$ $\calS = ( \, S ,\, \la \mkern-2mu {\theta_i}
  \ra^n_{i = 1} \,)$, an $\calS$-bisimulation is an equivalence
  relation $R \subseteq S \times S$ such that $R(s_1,s_2)$ implies
  \begin{displaymath}
    \tssum_{t' \in \Rclass{t}} \; 
    \theta_i \mkern2mu (s_1)(\ell \mkern2mu )(t' \mkern1mu ) 
    = 
    \tssum_{t' \in \Rclass{t}} \; 
    \theta_i \mkern2mu (s_2)(\ell \mkern2mu )(t' \mkern1mu )
  \end{displaymath}
  for all $t \in S$ and $\ell \in \calLi$, $i = 1 , \ldots , n$.  Two
  elements $s_1, s_2 \in S$ are called $\calS$-bisimilar for the
  combined $\FuTS$~$\calS$ if $R(s_1,s_2)$ for some
  $\calS$-bisimulation~$R$ for~$\calS$.  Notation $s_1 \FuTSbis{\calS}
  s_2$.
\end{defi}

\noindent
Working with total and deterministic $\FuTS$, we can interpret a
combined $\FuTS$ $\calS = ( \, S ,\, \la \mkern-2mu {\theta_i}
\ra^n_{i = 1} \, )$ over the label sets $\calLi$ and
semirings~$\calRi$, $i = 1 , \ldots , n$, as a product $\theta_1
\smalltimes \cdots \smalltimes \mkern2mu \theta_n : S \to
\prod_{i=1}^n \: ( \, \calLi \to \fsfn{S}{\calRi} \, )$ of functions
$\thetai : S \to \calLi \to \fsfn{S}{\calRi}$.  To push this idea a
bit further, we want to consider the combined $\FuTS$~$\calS = ( \,
\calS ,\, \la \mkern-2mu {\thetai} \ra^n_{i = 1} \,)$ as a coalgebra
of a suitable product functor on~sets.
  
\begin{defi}
  \label{df-futs-functor}
  Let $\vcalL = \langle\calL_1,\ldots,\calL_n\rangle$ be an $n$-tuple
  of label sets and $\vcalR = \langle \calR_1,\ldots,\calR_n\rangle$
  be an $n$-tuple of semirings. The functor $\calVvLR$ on~$\Set$ is
  defined by $\calVvLR = \prod_{i=1}^n \: \fsfn{ {\, \cdot \,}
  }{\calRi}^{\mkern1mu \calLi}$.
\end{defi}

\noindent
Referring to Definition~\ref{df-v-functor}, we have $\fsfn{ {\, \cdot
    \,} }{\calRi}^{\mkern1mu \calLi} = \calU_{\calRi}^{\mkern1mu
  \calLi}$, for $i = 1 , \ldots , n$. Therefore, $\calVvLR = \prod_{i=1}^n
\: \calU_{\calRi}^{\mkern1mu \calLi} $.  We note that any combined
$\FuTS$ $\calS = ( \, S ,\, \la \mkern-2mu {\theta_i} \ra^{\, n}_{i =
  1} \, )$ over label sets $\calL_i$ and semirings~$\calR_i$, for $i =
1 , \ldots , n$, is in fact a $\calVvLR$-coalgebra. Reversely, every
$\calVvLR$-coalgebra, for $\vcalL = \langle \calL_1, \ldots, \calL_n
\rangle$ and $\vcalR = \langle \calR_1, \ldots, \calR_n \rangle$,
corresponds to a combined $\FuTS$ over the label sets $\calL_i$ and
semirings $\calR_i$, for $i = 1 , \ldots , n$.
Below we shall use $\calV$ as an abbreviation for $\calVvLR$ whenever
$\vcalL = \langle \calL_1, \ldots, \calL_n \rangle$ and $\vcalR =
\langle \calR_1, \ldots, \calR_n \rangle$ are clear from the context.
Similarly, for the sake of readability, we shall often abbreviate
$\calU_{\calRi}^{\mkern1mu \calLi}$ by~$\calU_i$.



\blankline

\noindent
As product of accessible functors, the functor~$\calV$ of
Definition~\ref{df-futs-functor} is accessible and possesses a final
coalgebra, $(\Omega, \omega)$ say. So, we can speak of the behavioural
equivalence~$\beq{\calV}{}$ on any $\calV$-coalgebra or, equivalently,
of any combined $\FuTS$~$\calS$. Moreover, writing
$\fmorph{\calV}{}{\cdot}$ for the final morphism of a
$\calV$-coalgebra~$\calS$ into~$(\Omega, \omega)$, we have
\begin{displaymath}
  \fmorph{\calV}{}{\cdot} = 
  \fmorph{\calU_1}{}{\cdot} \times \cdots \times \:
  \fmorph{\calU_n}{}{\cdot} 
\end{displaymath}
Next we establish for a given $\FuTS$~$\calS$ over $\calL_1, \ldots,
\calL_n$ and $\calR_1\ldots,\calR_n$ the correspondence of
$\calS$-bisimulation~$\FuTSbis{\calS}$ and the behavioural
equivalence~$\beq{\calV}{}$ for the functor~$\calV$. Thus, one may
argue, Definition~\ref{df-gltfs-bisim} provides an explicit
description of behavioral equivalence. The proof of the theorem below
for combined $\FuTS$ is an adaptation of the proof of
Theorem~\ref{th-correspondence} for simple ones (see~\cite{LMV13} for
details).

\begin{thm}
  \label{th-combined-correspondence}
  Let $\calS = ( \, S ,\, \la \mkern-2mu {\theta_i} \ra^{\, n}_{i =
    1}\, )$ be a $\FuTS$ over the label sets $\calLi$ and
  semirings~$\calRi \mkern1mu$, $i = 1 , \ldots , n$, and $\calV$ as in
  Definition~\ref{df-futs-functor}.  Then $s_1 \FuTSbis{\calS} s_2
  \IFF s_1 \beq{\calV}{} s_2$, for all $s_1, s_2 \in S$.
  \qed
\end{thm}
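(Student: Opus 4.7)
The plan is to lift the proof of Theorem~\ref{th-correspondence} to the combined setting by exploiting the product structure of the functor $\calV = \prod_{i=1}^n \calU_i$ with $\calU_i = \fsfn{\cdot}{\calRi}^{\mkern1mu \calLi}$. Since products in $\Set$ are formed component-wise and, as noted after Definition~\ref{df-futs-functor}, the final morphism $\fmorph{\calV}{}{\cdot}$ factors as the pairing $\fmorph{\calU_1}{}{\cdot} \times \cdots \times \fmorph{\calU_n}{}{\cdot}$, the correspondence should reduce to applying Theorem~\ref{th-correspondence} in parallel across the $n$ components.

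For the implication $s_1 \FuTSbis{\calS} s_2 \Rightarrow s_1 \beq{\calV}{} s_2$, I would pick an $\calS$-bisimulation $R$ relating $s_1$ and~$s_2$ and turn the quotient~$S/R$ into a $\calV$-coalgebra $\calS_R = (S/R, \langle \varrho_{R,i} \rangle_{i=1}^n)$, defining, for each index~$i$,
\begin{displaymath}
  \varrho_{R,i}(\Rclass{s})(\ell)(\Rclass{t})
  = \tssum_{t' \in \Rclass{t}} \; \theta_i(s)(\ell)(t') \, ,
\end{displaymath}
which is well-defined precisely by Definition~\ref{df-gltfs-bisim}. A routine check, identical component-wise to the one carried out in Theorem~\ref{th-correspondence}, shows that the canonical map $\varepsilon_R : S \to S/R$ satisfies $\calU_i(\varepsilon_R) \compose \theta_i = \varrho_{R,i} \compose \varepsilon_R$ for every~$i$, hence is a $\calV$-homomorphism. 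Uniqueness of the final morphism then forces $\fmorph{\calV}{}{s_1} = \fmorph{\calV}{}{s_2}$, since $\varepsilon_R(s_1) = \varepsilon_R(s_2)$.

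For the reverse implication, I would take the kernel $R_{\calS}$ of the final morphism, i.e.\ $R_{\calS}(s',s'') \iff \fmorph{\calV}{}{s'} = \fmorph{\calV}{}{s''}$, which is obviously an equivalence relation. To show it is an $\calS$-bisimulation, I fix $\ell \in \calLi$ for some $i$, a state~$t$, let $w = \fmorph{\calV}{}{t}$, and use that $\fmorph{\calV}{}{\cdot}$ is a $\calV$-homomorphism to deduce, projecting on the $i$-th component of~$\omega$, the analogue of equation~(\ref{eq-thetai-OmegaS}):
\begin{displaymath}
  \omega_i(\fmorph{\calV}{}{s})(\ell)(w)
  = \tssum_{t' \in \fmorph{\calV}{-1}{\cdot}(w)} \; \theta_i(s)(\ell)(t') \, .
\end{displaymath}
From $s_1 \beq{\calV}{} s_2$ I then conclude $\tssum_{t' \in [t]_{R_{\calS}}} \theta_i(s_1)(\ell)(t') = \tssum_{t' \in [t]_{R_{\calS}}} \theta_i(s_2)(\ell)(t')$ for each $\ell \in \calLi$ and each $i$, which is exactly the condition in Definition~\ref{df-gltfs-bisim}.

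I do not expect a serious obstacle: the only subtlety is bookkeeping the index~$i$, since everything that needed to hold uniformly for one pair $(\calL, \calR)$ in Theorem~\ref{th-correspondence} must now hold separately for each pair $(\calL_i, \calR_i)$. The fact that $\calV$ is a product of accessible functors (so the final coalgebra exists, as recorded before the statement) and that its final morphism factors component-wise is what makes this decomposition clean; none of the computations differ in substance from those already carried out for the simple case.
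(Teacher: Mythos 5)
Your proposal is correct and follows exactly the route the paper intends: the paper gives no separate proof for the combined case, stating only that it is an adaptation of the proof of Theorem~\ref{th-correspondence} obtained by running the same quotient-coalgebra and kernel-of-final-morphism arguments component-wise over the product functor $\calVvLR = \prod_{i=1}^n \calU_i$, which is precisely what you do. The only detail worth making explicit is that a map is a $\calVvLR$-homomorphism iff it is a $\calU_i$-homomorphism in each component, which justifies both the forward step for $\varepsilon_R$ and the projected equation for $\omega_i$ in the reverse direction.
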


\noindent
In the sequel of the paper we will consider combined $\FuTS$, as well
as a so-called general $\FuTS$, for concrete process languages. We
will show for each process language that the notion of bisimulation of
its $\FuTS$ coincides with the notion of strong bisimulation that is
associated in the literature with the language. Consequently, as a
corollary of Theorem~\ref{th-combined-correspondence}, we obtain that
the notions of strong bisimulations align with behavioral equivalence.


\section{\texorpdfstring{$\FuTS$}{FuTS} Semantics of \texorpdfstring{$\IML$}{IML}}
\label{sec-iml}

In this section we provide a $\FuTS$ semantics for a relevant part of
$\IML$, the language of Interactive Markov
Chains~\cite{Her02:springer}, $\IMC$ for short, and compare the
notion of bisimulation induced by its $\FuTS$ to the standard notion
of bisimulation based on the SOS-semantics as reported in the
literature.

$\IMC$~are automata that combine two types of transitions:
interactive transitions that involve the execution of actions, and
Markovian transitions that represent the progress of time governed by
exponential distributions.  As a consequence, $\IMC$ embody both
non-deterministic behaviour and stochastic, i.e.\ stochastically
timed, behaviour. System analysis using $\IMC$ proves to be a
powerful approach because of the orthogonality of qualitative and
quantitative dynamics, their logical underpinning and tool support,
cf.\ \cite{Boh06:tosen,HK10:fmco,Boz12:data}.  Such orthogonality
makes it natural to use a \emph{combined} $\FuTS$ for the
semantics of $\IML$.  A number of behavioural equivalences, both
strong and weak, are available for $\IMC$~\cite{EHZ10:concur}. In our
treatment here, we discuss a sublanguage of $\IML$, which we still
call $\IML$ for simplicity. In particular we do not deal with internal
$\tau$-steps, since we focus on strong bisimilarity here. The
$\FuTS$ semantics we consider in the sequel has been originally
proposed in~\cite{De+14}.

\begin{defi}
  The set $\prcIML$ of $\IML$ processes is given by the grammar 
  \begin{displaymath}
    P \bnfeq \nil \mid a.P \mid \lambda.P \mid P + P \mid P \prlA P
    \mid X 
  \end{displaymath}
  where $a$~ranges over the set of actions~$\calA$,
  $\lambda$~over~$\poreals$, $A$~over the set of finite subsets
  of~$\calA$ and $X$~over the set of constants~$\calX$.
\end{defi}

\noindent
We assume the same notation and (action) guardedness requirements for
constant definitions and usage as in Section~\ref{sec-pepa} for
$\PEPA$.

In line with the discussion above, in~$\IML$ there are separate prefix
constructions for actions~$a.P$ (meaning that the process
\emph{instantaneously performs} action~$a$ and then behaves like~$P$)
and for time-delays~$\lambda.P$ (meaning that the process is
\emph{delayed} for a period of time governed by a random variable with
negative exponential distribution with rate $\lambda$, and then
behaves like~$P$). No restriction is imposed on the alternative and
parallel composition of processes.  For example, in~$\IML$, we have
the process $a.\lambda.\nil \cho \mu.b.\nil$. With respect to the
$\FuTS$ semantics to be defined below, we will see that this
process admits both a non-trivial interactive transition and a
non-trivial Markovian transition,
\begin{displaymath}
  \begin{array}{rcccl}
    a.\lambda.\nil \cho \mu.b.\nil
    & \mtrans{a}_1 & 
    [ \, \lambda.\nil \mapsto \TRUE \, ] + \zerof_\bools 
    & = &
    [ \, \lambda.\nil \mapsto \TRUE \, ] \\
    a.\lambda.\nil \cho \mu.b.\nil
    & \mtrans{\delta}_2 &
    \zerof_{\nnreals} + [ \, b.\nil \mapsto \mu \, ] 
    & = &
    [ \, b.\nil \mapsto \mu \, ]
  \end{array}
\end{displaymath}
leading to an interactive continuation and a Markovian continuation,
respectively. 

\begin{defi}
\label{df-iml-sem}
The $\FuTS$ semantics of $\prcIML$ is given by the $\FuTS$ $\calSiml =
\threetuple{\, \prcIML}{\mtrans{}_1}{\mtrans{}_2}$, a combined $\FuTS$
over the label sets $\calA$ and~$\Delta = \SET{ \mkern2mu \delta
  \mkern2mu }$ and the semirings $\bools$ and~$\nnreals$ with
transition relations ${\mtrans{}_1} \subseteq \prcIML \times \calA
\times \fsfn{\prcIML}{\bools}$ and ${\mtrans{}_2}\subseteq \prcIML
\times \Delta \times \fsfn{\prcIML}{\nnreals}$ defined as the least
relations satisfying the rules of Figure~\ref{fig-iml-rules}.
\end{defi}

\begin{figure}
\begin{displaymath}
\scalebox{0.825}{$
\begin{array}{c}

\sosrn{NIL1}{}
\sosrule
  {a \in \calA}
  {\nil \, \mtrans{a}_1 \, \zerof_\bools} 
  
\qquad

\sosrn{NIL2}{}
\sosrule
  {}
  {\nil \, \mtrans{\Edelay}_2 \, \zerof_{\nnreals}}
  
\bigskip \\
  


\sosrn{RPF1}{}
\sosrule
  {a \in \calA}
  {\lambda.P \, \mtrans{a}_1 \, \zerof_\bools}
  
\qquad

\sosrn{RPF2}{}
\sosrule
  {}
  {\lambda.P \, \mtrans{\Edelay}_2 \, [P \mapsto \lambda]}
  
\bigskip \\

\sosrn{APF1}{}
\sosrule
  {}
  {a.P \, \mtrans{a}_1 \, [P \mapsto \TRUE]}
  
\quad

\sosrn{APF2}{}
\sosrule
  {b \neq a}
  {a.P \, \mtrans{b}_1 \, \zerof_\bools}
  
\quad


\sosrn{APF3}{}
\sosrule
  {}
  {a.P \, \mtrans{\Edelay}_2 \, \zerof_{\nnreals}} 

\quad


\bigskip \\

\sosrn{CHO1}{}
\sosrule
  {P \, \mtrans{a}_1 \, \amset{P} \quad 
   Q \, \mtrans{a}_1 \, \amset{Q}}
  {P \cho Q \  \mtrans{a}_1 \  \amset{P} \cho \amset{Q}} 

\qquad

\sosrn{CHO2}{}
\sosrule
  {P \, \mtrans{\delta}_2 \, \mycal{P} \quad 
   Q \, \mtrans{\delta}_2 \, \mycal{Q}}
  {P \cho Q \  \mtrans{\delta}_2 \  \mycal{P} \cho \mycal{Q}} 

\bigskip \\

\sosrn{PAR1}{}
\sosrule
  {P \, \mtrans{a}_1 \, \amset{P} \quad
   Q \, \mtrans{a}_1 \, \amset{Q} \quad 
   a \notin A}
  {P \prlA Q \  \mtrans{a}_1 \ 
   ( \, \amset{P} \prlA \chut_{\mkern-2mu Q} \, ) 
   \, + \,
   ( \, \chut_{\mkern-2mu P} \prlA \amset{Q} \, )}
  
\qquad

\sosrn{PAR2}{}
\sosrule
  {P \, \mtrans{a}_1 \, \amset{P} \quad 
   Q \, \mtrans{a}_1 \, \amset{Q} \quad 
   a \in A}
  {P \prlA Q \  \mtrans{a}_1 \ 
   \amset{P} \prlA \amset{Q}}
   
\bigskip \\

\sosrn{PAR3}{}
\sosrule
  {P \, \mtrans{\delta}_2 \, \mycal{P} \quad
   Q \, \mtrans{\delta}_2 \, \mycal{Q} \quad 
   \delta \notin A}
  {P \prlA Q \  \mtrans{\delta}_2 \ 
   ( \, \mycal{P} \prlA \chut_{\mkern-2mu Q} \, ) 
   \, + \,
   ( \, \chut_{\mkern-2mu P} \prlA \mycal{Q} \, )}

\bigskip \\  

\sosrn{CON1}{}
\sosrule
  {P \, \mtrans{a}_1 \, \amset{P} \quad 
   X \dfas P}
  {X \, \mtrans{a}_1 \, \amset{P}}

\qquad

\sosrn{CON2}{}
\sosrule
  {P \, \mtrans{\delta}_2 \, \mycal{P} \quad 
   X \dfas P}
  {X \, \mtrans{\delta}_2 \, \mycal{P}}

\end{array}
$} 
\end{displaymath}
\halflineup
\halflineup
\caption{$\FuTS$ Transition Deduction System for $\IML$.}
\label{fig-iml-rules}
\end{figure}

\noindent
Actions~$a \in \calA$ decorate~$\mtrans{}_1$, the
special symbol~$\delta$, with $\delta$ for delay,
decorates~$\mtrans{}_2$. Note that rule~(APF3) and rule~(RPF1) involve
the null-functions of~$\nnreals$ and of~$\bools$, respectively, to
express that a process~$a.P$ does not trigger a delay and a
process~$\lambda.P$ does not execute any action. In
Figure~\ref{fig-iml-rules} and in the rest of this section we use
$\amsetP, \amsetQ \in \fsfn{\prcIML}{\bools}$ as typical interactive
continuations, and $\mycalP, \mycalQ \in \fsfn{\prcIML}{\nnreals}$ as
typical Markovian continuations.

For the parallel construct~$\prlA$, interleaving applies both for
non-synchronized actions~$a \notin A$ as well as for delays.
Therefore we have rule~(PAR1) pertaining to $\mtrans{}_1$ and
rule~(PAR3) pertaining to~$\mtrans{}_2$. The same holds for
non-deterministic choice, rules (CHO1) and~(CHO2), and constants,
rules (CON1) and~(CON2). Finally, $\IML$ does not provide
synchronization of delays in the parallel construct. Hence,
rule~(PAR2) only concerns the transition relation~$\mtrans{}_1$
capturing synchronization on actions. We recall that for all $R \in
\prcIML$, on the one hand,
\begin{displaymath}
  (\amset{P} \prlA  \amset{Q})(R) \: = \:
  \left\{
    \begin{array}{cl}
      \amset{P}(R_1) \land \amset{Q}(R_2) &
      \mbox{if } R= R_1 \prlA R_2 \mbox{ for some } R_1, R_2 \in
      \prcIML \\
      \FALSE & \mbox{otherwise}
    \end{array}
  \right.
\end{displaymath}
and, on the other hand,
\begin{displaymath}
  (\mycal{P} \prlA  \mycal{Q})(R) \: = \:
  \left\{
    \begin{array}{cl}
      \mycal{P}(R_1) \cdot \mycal{Q}(R_2) &
      \mbox{if } R= R_1 \prlA R_2 \mbox{ for some } R_1, R_2 \in
      \prcIML \\
      0 & \mbox{otherwise}
    \end{array}
  \right.
\end{displaymath}
where $\cdot$ is the product in $\nnreals$.
  
\blankline

\begin{exa}
  For $a.(\lambda.\nil + b.\nil),\ \mu.a.\nil \in \prcIML$ and $A =
  \SET{a}$ we have
  \begin{displaymath}
    \def\arraystretch{1.3}
    \begin{array}{rcl}
      \multicolumn{3}{l}{a.(\lambda.\nil + b.\nil) \prlA \mu.a.\nil}
      \\ \qquad & 
      \mtrans{\Edelay}_2 &
      \zerof_{\nnreals} \prlA \chut^{}_{\mkern-2mu \mu.a.\nil}
      \; + \;
      \chut^{}_{\mkern-2mu a.(\lambda.\nil + b.\nil)} \prlA [
      \, a.\nil \mapsto \mu \,  ]
      \\ & = & 
      \zerof_{\nnreals} \prlA [ \, \mu.a.\nil \mapsto 1 \, ]
      \; + \;
      [ a.(\lambda.\nil + b.\nil) \mapsto 1 \, ] \prlA [
      \, a.\nil \mapsto \mu \,  ]
      \\ & = &
      [ \, a.(\lambda.\nil + b.\nil) \prlA a.\nil
      \: \mapsto \: \mu \,  ] 
      \\
    \end{array}
    \def\arraystretch{1.0}
  \end{displaymath}
  For $X \dfas a.\lambda.b.X$ and $Y \dfas a.\mu.b.Y$,
  and $A = \SET{a,b}$
  we have
\begin{displaymath}
\begin{array}{r@{\, \prlA \,}l@{\;}c@{\;}l@{\,}r@{\, \prlA
      \,}l@{\,}lcr@{\, \prlA \,}l@{\;}c@{\;}l@{\,}r@{\, \prlA
      \,}l@{\;}l} 
  X & Y & \mtrans{a}_1 & 
  \, [ & \lambda.b.X & \mu.b.Y \: & \mapsto \TRUE \, ] 
  & \  &
  \lambda.b.X & b.Y & \mtrans{\Edelay}_2 &
  \, [ & b.X & b.Y & \, \mapsto \, \lambda \, ]
  \\
  b.X & b.Y \; & \mtrans{b}_1 &
  \, [ & X & Y & \mapsto \TRUE \, ]
  && b.X & \mu.b.Y \: & \mtrans{\Edelay}_2 &
  \, [ & b.X & b.Y & \, \mapsto \, \mu \, ]
  \\ \multicolumn{15}{c}{%
  \lambda.b.X \, \prlA \, \mu.b.Y \mtrans{\Edelay}_2 
  [  b.X \, \prlA \, \mu.b.Y \mapsto \lambda ,\ 
  \lambda.b.X \prlA b.Y \mapsto \mu \, ] }
\end{array}
\end{displaymath}
\end{exa}

\blankline

\noindent
It is not difficult to verify that $\calSiml$ is a total and
deterministic combined $\FuTS$.

\begin{lem}
\label{lm-iml-total-det}
  The $\FuTS$ $\calSiml$ is total and deterministic.
  \qed
\end{lem}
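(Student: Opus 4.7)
The plan is to argue by guarded induction on the complexity measure $c(\cdot)$ recalled in Section~\ref{sec-preliminaries}, showing simultaneously that for every $P \in \prcIML$ and every $a \in \calA$ there is a unique $\amsetP \in \fsfn{\prcIML}{\bools}$ with $P \mtrans{a}_1 \amsetP$, and that for every $P$ there is a unique $\mycalP \in \fsfn{\prcIML}{\nnreals}$ with $P \mtrans{\delta}_2 \mycalP$. The key observation making this routine is that the rules of Figure~\ref{fig-iml-rules} are syntax-directed: for any top-level constructor of $P$, at most two rules can possibly apply (one per transition relation) and, after fixing a label, only one rule can conclude a transition, so existence and uniqueness of the continuation reduce to the existence and uniqueness of continuations of the immediate subterms.

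First I would dispatch the base and prefix cases. For $P = \nil$ only rules (NIL1) and (NIL2) apply, yielding $\zerof_{\bools}$ and $\zerof_{\nnreals}$ respectively. For $P = \lambda.Q$ only (RPF1) applies to any action label (giving $\zerof_{\bools}$) and only (RPF2) applies for $\delta$ (giving $[Q \mapsto \lambda]$). For $P = a.Q$ exactly one of (APF1) and (APF2) applies depending on whether the label equals $a$, and (APF3) applies uniquely to $\delta$. In each case the continuation is uniquely fixed by the rule without appealing to the induction hypothesis.

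For the composite cases $P = P_1 \cho P_2$ and $P = P_1 \prlA P_2$ I would invoke the induction hypothesis on $P_1$ and $P_2$ (since $c(P_i) < c(P)$) to obtain unique continuations $\amsetP_1,\amsetP_2 \in \fsfn{\prcIML}{\bools}$ and $\mycalP_1,\mycalP_2 \in \fsfn{\prcIML}{\nnreals}$. Then rules (CHO1)/(CHO2), (PAR1)/(PAR2) (depending on whether $a \in A$), and (PAR3) determine the continuation of $P$ uniquely as an expression built from these sub-continuations by the operators $+$, $\prlA$ and Dirac functions on $\fsfn{\prcIML}{\bools}$ or $\fsfn{\prcIML}{\nnreals}$. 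Finiteness of the support of the result follows from finiteness of the supports of the components together with injectivity of $\prlA$ as a syntactic constructor, as noted for~$\PEPA$ in Section~\ref{sec-pepa}.

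The one case requiring guarded induction rather than structural induction is the constant case $P = X$ with $X \dfas Q$. Here (CON1) and (CON2) are the only rules that can conclude a transition from $X$, and the guardedness discipline on constant definitions together with the strict inequality $c(X) > c(Q)$ of the complexity measure allows us to apply the induction hypothesis to $Q$, obtaining unique continuations for~$Q$ and hence for~$X$. This is also the only step where the argument is not purely structural, and thus the only point worth flagging; in all other cases the syntax-directedness of the rule system does the work.
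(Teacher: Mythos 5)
Your proof is correct and matches the paper's intended argument: the paper gives no explicit proof of this lemma, remarking only that the claim "is not difficult to verify" and indicating (as for the analogous Lemma~\ref{lm-pepa-total-det} for $\PEPA$) that guarded induction is the method, which is precisely what you carry out. The case analysis on the syntax-directed rules of Figure~\ref{fig-iml-rules}, including the treatment of constants via the guardedness discipline, is exactly the routine verification the authors had in mind.
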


\noindent
Below we use $\calSiml = \threetuple {\prcIML}{\theta_1}{\theta_2}$
and write $\FuTSbis{\iml}$ rather than $\FuTSbis{\calSiml}$, the
bisimulation equivalence induced by~$\calSiml$.
 
\blankline

\noindent
The standard SOS semantics of~$\IML$~\cite{Her02:springer} is given in
Figure~\ref{fig-standard-iml} involving the transition relations
\begin{displaymath}
  {\trans{}} \subseteq \prcIML \times \calA \times \prcIML 
  \qquad \text{and} \qquad 
  {\Mtrans{}} \subseteq \prcIML \times \poreals \times \prcIML
\end{displaymath}
Below we will use functions $\bfT$ and~$\bfR$ based on $\trans{}$
and~$\Mtrans{}$, cf.~\cite{HK10:fmco}.  We have $\bfT \colon \prcIML
\times \calA \times {\textbf{2}}^{\prcIML} \to \bools$ given by $\bfT(
P, a, C ) = \TRUE$ if the set $\ZSET{ P' \in C }{ P \trans{a} P' }$ is
non-empty, for all $P \in \prcIML$, $a \in \calA$ and any subset~$C
\subseteq \prcIML$.  For $\bfR \colon \prcIML \times \prcIML \to
\nnreals$ we put $\bfR(P,P') = \tssum \MSET{ \lambda }{ P
  \Mtrans{\lambda} P' }$.  Here, as common for probabilistic and
stochastic process algebras, the comprehension is over the multiset of
transitions leading from~$P$ to~$P'$ with label~$\lambda$.
Alternatively, one could define an explicit $\cnt$-function, $\cnt :
\prcIML \times \poreals \times \prcIML \to \nnreals$ returning the
number of multiplicities of a transition $P \Mtrans{\lambda} P'$, or
other means of decorations. We extend $\bfR$ to $\prcIML \times
{\textbf{2}}^{\prcIML}$ by $\bfR(P,C) = \tssum_{P' \myin C} \; \tssum
\MSET{ \lambda }{ P \Mtrans{\lambda} P' }$, for $P \in \prcIML$, $C
\subseteq \prcIML$ .
  
%
%
%
%
%
%
%
%
%
%
%
%
%
%
%
%
%
%
%
%
%
%
%
%
%
%

\begin{figure}
\begin{displaymath}
\scalebox{0.825}{$
\begin{array}{c}

\sosrn{APF}{}
\sosrule
  {}
  {\pfx{a}{P} \, \trans{a} \, P}
  
\quad

\sosrn{RPF}{}
\sosrule{}
  {\pfx{\lambda}{P} \, \Mtrans{\lambda} \, P}

\bigskip \\

\sosrn{CHO1}{}
\sosrule
  {P \, \trans{a} \, R}
  {P \cho Q  \, \trans{a} \, R}

\qquad

\sosrn{CHO2}{}
\sosrule
  {Q \, \trans{a} \, R}
  {P \cho Q  \, \trans{a} \, R}

\bigskip \\
 
\sosrn{CHO3}{}
\sosrule
  {P \, \Mtrans{\lambda} \, R}
  {P \cho Q  \, \Mtrans{\lambda} \, R}

\qquad

\sosrn{CHO4}{}
\sosrule
  {Q \, \Mtrans{\lambda} \, R}
  {P \cho Q \, \Mtrans{\lambda} \, R}
 
\bigskip \\ 

\sosrn{PAR1a}{}
\sosrule
  {P \, \trans{a} \, P' \quad a \notin A}
  {P \prlA Q \, \trans{a}\ , P' \prlA Q}

\qquad

\sosrn{PAR1b}{}
\sosrule
  {Q \, \trans{a} \, Q' \quad a \notin A}
  {P \prlA Q \, \trans{a} \, P \prlA Q'}

\qquad

\bigskip \\

\qquad

\qquad

\sosrn{PAR1c}{}
\sosrule
  {P \, \Mtrans{\lambda} \, P'}
  {P \prlA Q \, \Mtrans{\lambda} \, P' \prlA Q}

\qquad

\sosrn{PAR1d}{}
\sosrule
  {Q \, \Mtrans{\lambda} \, Q'}
  {P \prlA Q \, \Mtrans{\lambda} \, P \prlA Q'}

\bigskip\\

\sosrn{PAR2}{}
\sosrule
  {P \, \trans{a} \, P' \quad 
   Q \, \trans{a} \, Q' \quad a \in A}
  {P \prlA Q \, \trans{a} \, P' \prlA Q'}
  
\bigskip\\

\sosrn{CON1}{}
\sosrule
  {P \, \trans{a} \, Q \quad X \dfas P}
  {X \, \trans{a} \, Q}

\qquad

\sosrn{CON2}{}
\sosrule
  {P \,  \Mtrans{\lambda} \, Q \quad X \dfas P}
  {X \, \Mtrans{\lambda} \, Q}
  
\end{array}
$} 
\end{displaymath}

\caption{Standard Transition Deduction System for $\IML$.}
\label{fig-standard-iml}
\end{figure}

For $\IML$ we have the following notion of strong
bisimulation~\cite{Her02:springer,HK10:fmco} that we will compare with
the notion of bisimulation associated with the $\FuTS$~$\calSiml$.
 
\begin{defi}
\label{df-iml-strong-bisimulation}
  An equivalence relation $R \subseteq \prcIML \times \prcIML$ is
  called a strong bisimulation for~$\IML$ if, for all $P_1, P_2 \in
  \prcIML$ such that $R(P_1,P_2)$, it holds that
  \begin{itemize}
  \item for all $a \in \calA$ and $Q \in \prcIML$: $\bfT
    \threetuple {P_1} a {\Rclass{Q}} \iff \bfT \threetuple {P_2} a
    {\Rclass{Q}}$
  \smallskip
  \item for all $Q \in \prcIML$: $\bfR \twotuple
    {P_1}{\Rclass{Q}} \; = \; \bfR \twotuple {P_2}{\Rclass{Q}}$.
\end{itemize}
  Two processes $P_1, P_2 \in \prcIML$ are called strongly bisimilar
  if $R(P_1,P_2)$ for a strong bisimulation~$R$ for~$\IML$, notation
  $P_1 \IMLsbis P_2$.  
\end{defi}

\noindent
To establish the correspondence of $\FuTS$
bisimilarity~$\FuTSbis{\iml}$ for~$\calSiml$ as given by
Definition~\ref{df-iml-sem} and strong bisimilarity~$\IMLsbis$
for~$\IML$ as given by Definition~\ref{df-iml-strong-bisimulation}, we
need to connect the state-to-function relation~$\mtrans{}_1$ and the
transition relation~$\trans{}$ as well as the state-to-function
relation~$\mtrans{}_2$ and the transition relation~$\Mtrans{} \,$.

\pagebreak[3]
\begin{lem}
\label{lm-mtrans-trans}
\hfill
\begin{enumerate}[label=\({\alph*}]
\item Let $P \in \prcIML$ and $a \in \calA$.  If $P \mtrans{a}_1
  \amsetP$ then $P \trans{a} P' \iff \amsetP( P' ) = \TRUE$.
\item Let $P \in \prcIML$ .  If $P \mtrans{\Edelay}_2 \mycalP$
  then $\tssum \MSET{ \lambda }{ P \Mtrans{\lambda} P' } = \mycalP( P'
  )$.  
\end{enumerate}
\end{lem}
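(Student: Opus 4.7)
The plan is to prove both statements simultaneously by guarded induction on $P$, in the spirit of Lemma~\ref{lm-cnt-ltfs-match} for $\PEPA$ but substantially simpler, since (i)~part~(a) lives over the Boolean semiring~$\bools$ so multiplicities collapse to logical disjunction, and (ii)~in part~(b) delays never synchronize in~$\IML$, so the parallel case reduces to plain interleaving.

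For the base cases, $\nil$ gives $\zerof_\bools$ and $\zerof_{\nnreals}$ by~(NIL1) and~(NIL2), matching the fact that $\nil$ admits neither interactive nor Markovian transitions. For the action prefix $a.P$, rule~(APF1) yields $\amset{(a.P)} = \chut_P$, matching the single transition $a.P \trans{a} P$ from~(APF), while for $b \neq a$ rules~(APF2) and~(APF3) yield null continuations, consistent with the absence of $b$- and $\delta$-transitions. Rule~(RPF1) and~(RPF2) handle $\lambda.P$ analogously: the interactive continuation is $\zerof_\bools$ while the Markovian continuation is $[P \mapsto \lambda]$, and indeed $\lambda.P \Mtrans{\lambda} P$ is derivable in exactly one way by~(RPF), so $\tssum \MSET{\mu}{\lambda.P \Mtrans{\mu} P'} = \lambda$ iff $P' = P$. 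The constant case is immediate via~(CON1)/(CON2) and the corresponding standard rules, invoking guarded induction on the body.

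For the non-deterministic choice $P + Q$, in part~(a) the combined continuation is $\amsetP + \amsetQ$ (pointwise disjunction in~$\bools$), matching exactly the union of the $a$-successors of $P$ and $Q$ supplied by~(CHO1)/(CHO2). In part~(b) the pointwise sum in~$\nnreals$ of $\mycalP + \mycalQ$ matches the additive combination of multisets $\MSET{\lambda}{P \cho Q \Mtrans{\lambda} P'}$ as supplied by~(CHO3)/(CHO4). For the parallel composition $P \prlA Q$, I distinguish $a \in A$ and $a \notin A$ in part~(a), and treat only interleaving in part~(b). When $a \notin A$, $P'$ must have shape $P'_1 \prlA Q$ or $P \prlA Q'_1$, and $(\amsetP \prlA \chut_{Q}) + (\chut_{P} \prlA \amsetQ)$ returns $\amsetP(P'_1) \lor \amsetQ(Q'_1)$ in the respective cases; this matches~(PAR1a)/(PAR1b) together with the injectivity of~$\prlA$. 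When $a \in A$, $P' = P'_1 \prlA Q'_1$ and rule~(PAR2) together with the definition of $\amsetP \prlA \amsetQ$ gives $\amsetP(P'_1) \land \amsetQ(Q'_1)$. For the Markovian case~(b), only~(PAR1c)/(PAR1d) apply, so by injectivity of~$\prlA$ we recover the sum over interleavings, matching the definition of $\mycalP \prlA \chut_Q + \chut_P \prlA \mycalQ$.

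The only substantive worry is the parallel synchronization clause of part~(a): one must verify that the fused transition counted once by~(PAR2) corresponds cleanly to the Boolean product $\amsetP(P'_1) \land \amsetQ(Q'_1)$ without overcounting issues. This is where the choice of the Boolean semiring pays off, since $\TRUE \lor \TRUE = \TRUE$ absorbs any apparent duplication arising from multiple derivations of the same $a$-transition from $P$ or $Q$; consequently the argument is formally easier than its $\PEPA$ counterpart and no analogue of Lemma~\ref{lm-sem-syn-arf} is needed. Totality and determinism of $\calSiml$ (Lemma~\ref{lm-iml-total-det}) justifies writing $\amsetP$ and~$\mycalP$ unambiguously throughout.
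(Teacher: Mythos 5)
Your proposal is correct and follows essentially the same route as the paper's proof: guarded induction on~$P$, with the base cases read off directly from the prefix rules and the inductive cases resolved by matching the continuation operators ($\cho$, $\prlA$, and the Dirac functions) against the case analysis of the standard transition relations, the Boolean semiring absorbing multiplicities in part~(a) and pointwise summation handling them in part~(b). The paper merely writes out two representative cases per part ($\lambda.P$ and the interleaving parallel case) and leaves the rest, including your synchronization case, to the reader, so your sketch covers the same ground at a comparable level of detail.
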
\newpage

\begin{proof}\hfill\smallskip

\noindent \(a\
 Guarded induction. Let $a \in \calA$. We treat two typical cases,
 viz.\ $\lambda.P$ and $P_1 \prlA P_2$ for $a \notin A$.
 
 Case~$\lambda.P$.  Suppose $\lambda.P \mtrans{a}_1 \amsetP$.  Then we
 have $\amsetP = \zerof_\bools$. We have $\lambda.P \trans{a} P'$ for
 no~$P' \in \prcIML$, as no transition is provided in~$\trans{}$, and
 we have $\amsetP( P' ) = \FALSE$ by definition of~$\zerof_\bools$,
 for all~$P' \in \prcIML$.
 
 Case~$P_1 \prlA P_2$, $a \notin A$.  Suppose $P_1 \mtrans{a}_1
 \amsetP_1$, $P_2 \mtrans{a}_1 \amsetP_2$ and $P_1 \prlA P_2
 \mtrans{a}_1 \amsetP$.  Then it holds that $\amsetP = ( \amsetP_1
 \prlA \chut_{P_2} ) \cho ( \chut_{P_1} \prlA \amsetP_2 )$.  Recall,
 for~$Q \in \prcIML$, by definition of $\chut_Q \in
 \fsfn{\prcIML}{\bools}$, $\chut_Q(Q') = \TRUE$ iff $Q' = Q$, for~$Q'
 \in \prcIML$.  We have
\begin{displaymath}
\def\arraystretch{1.2}
\begin{array}{rcl}
 \multicolumn{3}{l}{%
 P_1 \prlA P_2 \trans{a} P'
 } 
 \\ & \Leftrightarrow &
 ( \, P_1 \trans{a} P'_1 \land P' = P'_1 \prlA P_2 \, ) 
 \; \lor \;
 ( \, P_2 \trans{a} P'_2 \land P' = P_1 \prlA P'_2 \, ) 
 \\ &&
 \text{by analysis of $\trans{}$}
 \\ & \Leftrightarrow &
 ( \, \amsetP_1( P'_1 ) = \TRUE \land P' = P'_1 \prlA P_2 \, )
 \; \lor \;
 ( \, \amsetP_2( P'_2 ) = \TRUE \land P' = P_1 \prlA P'_2 \, )
 \\ && \text{by the induction hypothesis}
 \displaybreak[2] \\ & \Leftrightarrow &
 ( \, \amsetP_1( P'_1 ) \cdot \chut_{P_2}(P_2) = \TRUE \land P' = P'_1 \prlA P_2 \, )
 \; \lor \;
 {} \\ && \qquad \qquad \qquad
 ( \, \chut_{P_1}(P_1) \cdot \amsetP_2( P'_2 ) = \TRUE \land P' = P_1 \prlA P'_2 \, )
 \\ && \text{by definition of $\chut_{P_1}$ and~$\chut_{P_2}$}
 \displaybreak[2] \\ & \Leftrightarrow &
 ( \, ( \amsetP_1 \prlA \chut_{P_2} )(P'_1 \prlA P_2) = \TRUE \land P' = P'_1 \prlA P_2 \, )
 \; \lor \;
 {} \\ && \qquad \qquad \qquad
 ( \, ( \chut_{P_1} \prlA \amsetP_2 )( P_1 \prlA P'_2 ) = \TRUE \land P' = P_1 \prlA P'_2 \, )
 \\ && \text{by definition of~$\prlA$}
 \displaybreak[2] \\ & \Leftrightarrow &
 ( \amsetP_1 \prlA \chut_{P_2} )(P') = \TRUE 
 \; \lor \;
 ( \chut_{P_1} \prlA \amsetP_2 )(P') = \TRUE 
 \\ && \text{by definition of~$\prlA$, $\chut_{P_1}$ and~$\chut_{P_2}$}
 \displaybreak[2] \\ & \Leftrightarrow &
 ( \, ( \amsetP_1 \prlA \chut_{P_2} ) \cho ( \chut_{P_1} \prlA \amsetP_2 ) \, )( P' ) = \TRUE
 \\ && \text{by definition of~$\cho$ on~$\fsfn{\prcIML}{\bools}$}
 \\ & \Leftrightarrow &
 \amsetP( P' ) = \TRUE
\end{array}
\def\arraystretch{1.0}
\end{displaymath}
The other cases are standard, or similar and easier.\medskip

\noindent\(b\ Guarded induction. We treat the cases for~$\mu.P$ and $P_1 \prlA
 P_2$.

Case~$\mu.P$.  Assume $\mu.P \mtrans{\Edelay}_2 \mycalP$, then
$\mycalP = [ \, P \mapsto \mu \, ]$. Moreover, it holds that $\mu.P$
admits a single $\Mtrans{} \mkern3mu$-transition, viz.\ $\mu.P
\Mtrans{\mu} P$.  \\Thus we have $\tssum \MSET{ \lambda }{ \mu.P
  \Mtrans{\lambda} P' } = \mu = [ \, P \mapsto \mu \, ](P) = \mycalP(
P )$.
 
 Case~$P_1 \prlA P_2$.  Assume $P_1 \mtrans{\Edelay}_2 \mycalP_1$,
 $P_2 \mtrans{\Edelay}_2 \mycalP_2$ and $P_1 \prlA P_2
 \mtrans{\Edelay}_2 \mycalP$.  It holds that $\mycalP = ( \mycalP_1
 \prlA \chut_{P_2} ) \cho ( \chut_{P_1} \prlA \mycalP_2 )$.  We
 calculate
\allowdisplaybreaks
\def\arraystretch{1.2}
\begin{align*}
 \tssum \MSET{ \lambda & }{ P_1 \prlA P_2 \Mtrans{\lambda} P' }
 \\ & = 
  \tssum \MSET{ \lambda }{ P_1 \Mtrans{\lambda} P'_1 ,\ P' = P'_1
    \prlA P_2 } \cho \tssum \MSET{ \lambda }{ P_2 \Mtrans{\lambda}
    P'_2 ,\ P' = P_1 \prlA P'_2 } 
 \\ & \qquad \text{by analysis of $\Mtrans{}$}
 \\ & = 
 ( \, \texttt{if} \; P' = P'_1 \prlA P_2 \; \texttt{then} \  \tssum
 \MSET{ \lambda }{ P_1 \Mtrans{\lambda} P'_1 } \ \texttt{else} \ 0
 \ \texttt{end} \, )  
 \cho 
 {} \\ & \qquad \qquad
 ( \, \texttt{if} \; P' = P_1 \prlA P'_2 \; \texttt{then} \  \tssum
 \MSET{ \lambda }{ P_2 \Mtrans{\lambda} P'_2 } \  \texttt{else} \ 0
 \ \texttt{end} \, ) 
 \\ & = 
 ( \, \texttt{if} \; P' = P'_1 \prlA P_2 \; \texttt{then}
 \  \mycalP_1(P'_1) \ \texttt{else} \ 0 \ \texttt{end} \, )  
 \cho 
 {} \\ & \qquad \qquad
 ( \, \texttt{if} \; P' = P_1 \prlA P'_2 \; \texttt{then}
 \  \mycalP_2(P'_2) \  \texttt{else} \ 0 \ \texttt{end} \, ) 
 \\ & \qquad \text{by induction hypothesis for $P_1$ and~$P_2$}
 \\ & = 
 ( \, \mycalP_1 \prlA \chut_{P_2} \, )(P') \cho ( \, \chut_{P_1} \prlA \mycalP_2 \, )(P')
 \\ & \qquad \text{by definition of~$\prlA$, $\chut_{P_1}$, $\chut_{P_2}$
   and~$\cho$ on~$\fsfn{\prcIML}{\nnreals}$} 
 \\ & = 
 \mycalP(P')
\end{align*}
\def\arraystretch{1.0}
 The remaining cases are left to the reader.
\end{proof}

\noindent
We are now in a position to relate $\FuTS$ bisimilarity and standard
strong bisimilarity for~$\IML$. In essence,
Lemma~\ref{lm-mtrans-trans} is all we need.

\begin{thm}
\label{th-correpondence-iml}
  For any two processes $P_1, P_2 \in \prcIML$ it holds that 
  $P_1  \FuTSbis{\iml} P_2$ iff $P_1 \IMLsbis P_2$.
\end{thm}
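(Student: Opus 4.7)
The plan is to reduce both notions of bisimulation to identical conditions on the continuation functions by invoking Lemma~\ref{lm-mtrans-trans}, which already does the heavy lifting of translating between the relation-based semantics $\trans{}, \Mtrans{}\,$ and the function-based continuations $\amsetP, \mycalP$.

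First I would fix an equivalence relation $R \subseteq \prcIML \times \prcIML$ and, for arbitrary $P \in \prcIML$, $a \in \calA$, $Q \in \prcIML$, compute the two sums that occur in Definition~\ref{df-gltfs-bisim} in terms of $\bfT$ and $\bfR$. For the interactive part, with $P \mtrans{a}_1 \amsetP$, the sum in~$\bools$ is disjunction, so
\begin{displaymath}
    \tssum_{P' \in \Rclass{Q}} \amsetP(P') \;=\; \TRUE
    \;\Longleftrightarrow\;
    \exists P' \in \Rclass{Q}.\ \amsetP(P') = \TRUE
    \;\Longleftrightarrow\;
    \exists P' \in \Rclass{Q}.\ P \trans{a} P'
\end{displaymath}
by Lemma~\ref{lm-mtrans-trans}(a), and the right-hand side is exactly $\bfT(P, a, \Rclass{Q}) = \TRUE$. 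For the Markovian part, with $P \mtrans{\Edelay}_2 \mycalP$, using Lemma~\ref{lm-mtrans-trans}(b) and the definition of $\bfR$,
\begin{displaymath}
    \tssum_{P' \in \Rclass{Q}} \mycalP(P')
    \;=\;
    \tssum_{P' \in \Rclass{Q}} \tssum\MSET{\lambda}{P \Mtrans{\lambda} P'}
    \;=\;
    \bfR(P, \Rclass{Q}).
\end{displaymath}

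Next I would combine these two identities. Using Lemma~\ref{lm-iml-total-det}, let $\theta_1(P_i)(a) = \amsetP_i$ and $\theta_2(P_i)(\delta) = \mycalP_i$ for $i=1,2$. Then
\begin{displaymath}
    \tssum_{P' \in \Rclass{Q}} \theta_1(P_1)(a)(P') = \tssum_{P' \in \Rclass{Q}} \theta_1(P_2)(a)(P')
\end{displaymath}
is equivalent, by the Boolean computation above, to
$\bfT(P_1, a, \Rclass{Q}) \Leftrightarrow \bfT(P_2, a, \Rclass{Q})$, and analogously the $\theta_2$-equality is equivalent to $\bfR(P_1, \Rclass{Q}) = \bfR(P_2, \Rclass{Q})$.

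Therefore, an equivalence relation $R$ satisfies the clauses of Definition~\ref{df-gltfs-bisim} for $\calSiml$ if and only if it satisfies the two clauses of Definition~\ref{df-iml-strong-bisimulation}. In other words, the $\calSiml$-bisimulations and the $\IML$ strong bisimulations coincide as sets of relations, which immediately yields $P_1 \FuTSbis{\iml} P_2 \iff P_1 \IMLsbis P_2$. There is no real obstacle here beyond being careful that the sum $\fsum$ in $\bools$ really is disjunction (so that `existential' and `cumulative' coincide), which is exactly the property that makes the Boolean semiring appropriate for encoding non-determinism in the $\FuTS$ framework.
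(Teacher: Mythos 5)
Your proposal is correct and follows essentially the same route as the paper's own proof: both use Lemma~\ref{lm-mtrans-trans} to rewrite $\bfT(P,a,\Rclass{Q})$ and $\bfR(P,\Rclass{Q})$ as the class-wise sums of $\theta_1$ and $\theta_2$, observing that summation in $\bools$ is disjunction, and then conclude that the $\calSiml$-bisimulations and the $\IML$ strong bisimulations are the same relations. No gaps.
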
 

\begin{proof}
 Let~$R$ be an equivalence relation on~$\prcIML$.
 Pick $P \in \prcIML$, $a \in \calA$ and choose any~$Q \in \prcIML$.
 Suppose $P \mtrans{a}_1 \amsetP$. 
 Thus $\theta_1(P)(a) = \amsetP$.
 Then we have
\begin{displaymath} 
\def\arraystretch{1.2}
\begin{array}{rcll}
 \bfT \threetuple {P} a {\Rclass{Q}}  
 & \Leftrightarrow &
 \exists Q' \in \Rclass{Q} \colon P \trans{a} Q'
 & \text{by definition of $\bfT$}
 \\ & \Leftrightarrow &
 \exists Q' \in \Rclass{Q} \colon \amsetP(Q') = \TRUE
 & \text{by Lemma~\ref{lm-mtrans-trans}a}
 \\ & \Leftrightarrow &
 \tssum_{Q' \myin \Rclass{Q}} \; \theta_1(P)(a)(Q') = \TRUE
 & \text{by definition of $\theta_1$}
\end{array}
\def\arraystretch{1.0}
\end{displaymath}
Note, summation in~$\bools$ is disjunction.  Likewise, on the
Markovian side, we have
\begin{displaymath} 
\def\arraystretch{1.2}
\begin{array}{rcll}
 \bfR \twotuple {P}{\Rclass{Q}}  
 & = &
 \tssum_{Q' \myin \Rclass{Q}} \; \tssum \MSET{ \lambda }{ P \Mtrans{\lambda} Q' }
 & \text{by definition of $\bfR$}
 \\ & = &
 \tssum_{Q' \myin \Rclass{Q}} \; \mycalP(Q' \mkern1mu ) 
 & \text{by Lemma~\ref{lm-mtrans-trans}b}
 \\ & = &
 \tssum_{Q' \myin \Rclass{Q}} \; \theta_2(P)(\Edelay)(Q)
 & \text{by definition of $\theta_2$}
\end{array}
\def\arraystretch{1.0}
\end{displaymath}
We conclude that a strong bisimulation
for~$\IML$ is also an $\calSiml$-bisimulation for the
$\pFuTS$~$\calSiml$, and vice versa.  From this the theorem follows.
\end{proof}

\noindent
From the theorem we conclude that also for~$\IML$ the concrete notion
of strong bisimilarity $\IMLsbis$ is coalgebraically underpinned, as
it coincides with the behavioral equivalence $\FuTSbis{\iml}$ that
comes with the corresponding $\FuTS$~$\calSiml$.

\section{\texorpdfstring{$\FuTS$}{FuTS} Semantics of \texorpdfstring{$\TPC$}{TPC}}
\label{sec-tpc}

In this section we consider a simple language of timed processes for
which we provide a combined~$\FuTS$. The language is a relevant
fragment of the timed process algebra $\TPC$ presented
in~\cite{ABC10:springer}. The model of time under consideration is
discrete and deterministic. The relevant construct is the
time-prefix~$(n).P$, with $n \in \nats$, $n > 0$, expressing that the
process~$P$ is to be executed after $n$~time steps. We will provide a
$\FuTS$ semantics and compare the induced notion of bisimulation to
the notion of timed bisimulation underlying the operational semantics
reported in~\cite{ABC10:springer}.

To the best of our knowledge, this is the first time a
deterministically timed model is dealt with in the coalgebraic
framework. As we will see, we resort to~$\textbf{2}^{\nats}$ as
co-domain for the time continuations, instead of just $\nats$, as one
may expect.  In particular, we use the semiring ~$\textbf{2}^{\nats}$
with set union as sum and intersection as multiplication.
The reason of this choice is mainly technical and is
connected to the proof of the bisimulation correspondence theorem
(Theorem~\ref{th-correpondence-tpc} below). Furthermore, the
appropriate treatment of delays requires the extension of the set of
operators on continuations.

\begin{defi}
  The set $\prcTPC$ of $\TPC$ processes is given by the grammar below:
  \begin{displaymath}
    P \bnfeq \nil \mid a.P \mid (n) \,.\, P \mid P + P \mid P \prlA P
    \mid X 
  \end{displaymath}
  where $a$~ranges over the set of actions~$\calA$, $n$~over $\nats$
  with~$n > 0$, $A$~over the set of finite subsets of~$\calA$, and
  $X$~over the set of constants~$\calX$.
\end{defi}

\noindent
We assume the same notation and guardedness requirements for constant
definition and usage as for $\PEPA$ or~$\IML$.

\begin{defi}
  \label{df-tpc-sem}
  The formal semantics of $\prcTPC$ is given by the $\FuTS$ $\calStpc
  = \threetuple{\, \prcTPC}{\mtrans{}_1}{\mtrans{}_2}$, a combined
  $\FuTS$ over the label sets $\calA$ and~$\Theta$ with $\Theta =
  \SET{ \mkern2mu \surd \mkern4mu }$ and the semirings $\bools$
  and~$\textbf{2}^\nats$ with transition relations ${\mtrans{}_1}
  \subseteq \prcTPC \times \calA \times \fsfn{\prcTPC}{\bools}$ and
  ${\mtrans{}_2}\subseteq \prcTPC \times \Theta \times
  \fsfn{\prcTPC}{\textbf{2}^\nats}$ defined as the least relations
  satisfying the rules of Figure~\ref{fig-tpc-rules}.
\end{defi}

\begin{figure}
\begin{displaymath}
\scalebox{0.90}{$
\begin{array}{c}
\sosrn{NIL1}{}
\sosrule
  {a \in \calA}
  {\nil \, \mtrans{a}_1 \, \zerof_\bools} 
\qquad
\sosrn{NIL2}{}
\sosrule
  {}
  {\nil \, \mtrans{\surd}_2 \, \zerof_{\textbf{2}^\nats}}
\bigskip \\
\sosrn{APF1}{}
\sosrule
  {}
  {a.P \, \mtrans{a}_1 \, [P \mapsto \TRUE]}
\qquad
\sosrn{APF2}{}
\sosrule
  {b \neq a}
  {a.P \, \mtrans{b}_1 \, \zerof_\bools}
\qquad
\sosrn{APF3}{}
\sosrule
  {}
  {a.P \, \mtrans{\surd}_2 \, \zerof_{\textbf{2}^\nats}} 
\qquad
\bigskip \\
\sosrn{TPF1}{}
\sosrule
  {a \in \calA}
  {(n).P \, \mtrans{a}_1 \, \zerof_\bools}
\quad
\sosrn{TPF2}{}
\sosrule
  {P \, \mtrans{\surd}_2 \, \mycalP}
  {(n).P \, \mtrans{\surd}_2 \, [n;P] + [P \mapsto \SET{n}] + (n + \mycalP)}
\bigskip \\
\sosrn{CHO1}{}
\sosrule
  {P \, \mtrans{a}_1 \amset{P} \quad 
   Q \, \mtrans{a}_1 \amset{Q}}
  {P \cho Q \  \mtrans{a}_1 \, \amset{P} \! \cho \! \amset{Q}}
\qquad
\sosrn{CHO2}{}
\sosrule
  {P \, \mtrans{\surd}_2 \, \mycal{P} \quad 
   Q \, \mtrans{\surd}_2 \, \mycal{Q}}
  {P \cho Q \  \mtrans{\surd}_2 \  \mycal{P} \myboxed{\cho} \mycal{Q}}
\bigskip \\
\sosrn{PAR1}{}
\sosrule
  {P \, \mtrans{a}_1  \amset{P} \quad
   Q \, \mtrans{a}_1  \amset{Q} \quad 
   a \notin A}
  {P \prlA Q \  \mtrans{a}_1 \ 
   ( \, \amset{P} \prlA \chut^{}_{\mkern-2mu Q} \, ) 
   \, + \,
   ( \, \chut^{}_{\mkern-2mu P} \prlA \! \amset{Q} \, )}
 \qquad
\sosrn{PAR2}{}
\sosrule
  {P \, \mtrans{a}_1  \amset{P} \quad 
   Q \, \mtrans{a}_1  \amset{Q} \quad 
   a \in A}
  {P \prlA Q \  \mtrans{a}_2 \ 
   \amset{P} \prlA \! \amset{Q}}
\bigskip \\
\sosrn{PAR3}{}
\sosrule
  {P \, \mtrans{\surd}_2 \, \mycal{P} \quad 
   Q \, \mtrans{\surd}_2 \, \mycal{Q} }
  {P \prlA Q \  \mtrans{\surd}_2 \ 
   \mycal{P} \myboxed{\,\prlA\,} \mycal{Q}}
\bigskip \\
\sosrn{CON1}{}
\sosrule
  {P \, \mtrans{\alpha}_1  \amset{P} \quad 
   X \dfas P}
  {X \, \mtrans{\alpha}_1  \amset{P}}
\qquad
\sosrn{CON2}{}
\sosrule
  {P \, \mtrans{\alpha}_2 \, \mycal{P} \quad 
   X \dfas P}
  {X \, \mtrans{\alpha}_2 \, \mycal{P}}
\end{array}
$} 
\end{displaymath}
\caption{$\FuTS$ Transition Deduction System for $\TPC$.}
\label{fig-tpc-rules}
\end{figure}

\noindent
Also $\calStpc$ is a \emph{combined} $\FuTS$, having the two
state-to-function relations $\mtrans{}_1$ and~$\mtrans{}_2$.
Actions~$a \in \calA$ decorate~$\mtrans{}_1$, the special
symbol~$\surd$ decorates~$\mtrans{}_2$ (with a similar role
as~$\delta$ for~$\IML$). As for~$\mtrans{}_2$ the label is always the
same, we occasionally suppress it. Note rule~(APF3) and rule~(TPF1)
involve the null-functions of~$\textbf{2}^\nats$ and of~$\bools$,
respectively, to express that a process~$a.P$ does not trigger a delay
and a process~$(n).P$ does not execute an action. In
Figure~\ref{fig-tpc-rules} and in the rest of this section we use
$\amsetP, \amsetQ \in \fsfn{\prcTPC}{\bools}$ as typical action
continuations, and $\mycalP, \mycalQ \in
\fsfn{\prcTPC}{\textbf{2}^\nats}$ as typical time continuations.

The second time prefix rule {(TPF2)} combines a possible evolution
over time of the process~$P$ into its continuation~$\mycalP$ with the
elapse of the prefix. Note, the continuation in the conclusion of rule
{(TPF2)} is a sum of three parts, viz.\ $[n;P]$, $[ \, P \mapsto
\SET{n} \,]$, and~$(n + \mycalP)$. The auxiliary mappings $[n;P]$ and
$(n + \mycalP)$, for timed continuations, are given by
\begin{displaymath}
  [n;P](Q) =
  \left \lbrace
    \begin{array}{@{\mkern1mu}cl}
      \SET{m} & \text{if $Q = (n-m).P$, $0<m<n$} \\
      \emptyset & \text{otherwise}
    \end{array}
  \right .
  \quad\ 
  (n + \mycalP)(Q) =
  \ZSET{n+m}{m \in \mycalP(Q)}
\end{displaymath}
It is easy to see that, for $n \in \nats$, $Q \in \prcTPC$, and
$\mycalP \in \fsfn{\prcTPC}{\textbf{2}^\nats}$, $[n;Q] = [ \, (n-i).Q
\mapsto \SET{i} \, ]_{i=1}^{n-1}$, and if $\mycalP(Q)= \emptyset$,
then also $(n + \mycalP)(Q) = \emptyset$.  Time progress taking fewer
steps than~$n$ is covered by the continuation~$[n;P]$. For~$m$
strictly between $0$ and~$n$, after $m$~time steps there remains
$(n-m).P$ to be executed. After exactly $n$~time steps, $P$~is to be
executed, i.e. the component $[P \mapsto \SET{n}]$ is used). After
more than~$n$ time steps, say $n+m$~time steps, process~$Q$ is to be
executed if $m \in \mycalP(Q)$. Thus, if no such~$m$ exist, i.e.\ if
$\mycalP(Q) = \emptyset$, this yields an empty set too.

The rules for the choice and parallel construct of~$\TPC$ make use of
corresponding operations on $\fsfn{\prcTPC}{\bools}$ and
$\fsfn{\prcTPC}{\textbf{2}^\nats}$. For $\amsetP, \amsetQ \in
\fsfn{\prcTPC}{\bools}$, the functions $\amsetP + \amsetQ$ and
$\amsetP \prlA \amsetQ$ are as before. For
$\fsfn{\prcTPC}{\textbf{2}^\nats}$ the following operators are used:
\begin{displaymath}
  (\mycalP \myboxed{+} \mycalQ)(R) =
  \left \lbrace
    \begin{array}{cl}
      \mycalP(P) \cap \mycalQ(Q) & \text{if $R = P+Q$ for $P, Q \in
        \prcTPC$} \\  
      \emptyset & \text{otherwise}
    \end{array}
  \right .  
\end{displaymath}
and, likewise
\begin{displaymath}
  (\mycalP \myboxed{\, \prlA \,} \mycalQ)(R) =
  \left \lbrace
    \begin{array}{cl}
      \mycalP(P) \cap \mycalQ(Q) & \text{if $R = P \prlA Q$, for $P,Q \in
        \prcTPC$} \\ 
      \emptyset & \text{otherwise}
    \end{array}
  \right .
\end{displaymath}
We have that for $P \in \prcTPC$ there exists a unique $\mycalP \in
\fsfn{\prcTPC}{\textbf{2}^\nats}$ such that $P \mtrans{}_2
\mycalP$. Moreover, given the rules for~$\calStpc$ and the definition
of the operators above, it can verified that, for $P, Q \in \prcTPC$
and $\mycalP \in \fsfn{\prcTPC}{\textbf{2}^\nats}$ such that $P
\mtrans{}_2 \mycalP$ it holds that $\mycalP(Q)$ is either a singleton
or the empty set. 
See Lemma~\ref{lm-tpc-total-det} below.

In order to prove the lemma we introduce an auxiliary function
$\mdelay : \prcTPC \to \nats$, establishing the so-called maximum
delay of a process, given by
\begin{displaymath}
  \begin{array}{rclcrcl}
    \mdelay(\nil) & = & 0 & \quad & 
    \mdelay(P_1 + P_2) & = & \min \lbrace \, \mdelay(P_1) ,\,
    \mdelay(P_2) \, \rbrace \\
    \mdelay(a.P) & = & 0 && 
    \mdelay(P_1 \prlA P_2) & = & \min \lbrace \, \mdelay(P_1) ,\,
    \mdelay(P_2) \, \rbrace \\
    \mdelay((n).P) & = & n + \mdelay(P) &&
    \mdelay((X) & = & \mdelay(P) \quad \text{if $X \dfas P$}
  \end{array}
\end{displaymath}
By guarded induction, one straightforwardly verifies the property that
$\mdelay(Q') < \mdelay(Q)$ for $Q,Q' \in \prcTPC$ and $\mycalQ \in
\fsfn{\prcTPC}{\nats}$ such that $Q \mtrans{}_2 \mycalQ$ and
$\mycalQ(Q') \neq \emptyset$. From this observation is follows that
$[n;P]$, $[P \mapsto \SET{n}]$ and $(n + \mycalP)$ have disjoint
supports: We have that (i)~if $[n;P](P') \neq \emptyset$ then $P' =
(n-m).P$ for $0 < m < n$, hence $\mdelay(P') = (m-n) + \mdelay(P) >
\mdelay(P)$; (ii)~if $[ \, P \mapsto \SET{n} \, ](P') \neq \emptyset$
then $P' = P$, hence $\mdelay(P') = \mdelay(P)$; (iii)~if $(n +
\mycalP)(P') \neq \emptyset$ then $\mycalP(P') \neq \emptyset$ hence,
using the property above, $\mdelay(P') < \mdelay(P)$.


%

\begin{lem}
  \label{lm-tpc-total-det}
  \hfill
  \begin{enumerate}[label=\({\alph*}]
  \item  The $\FuTS$ $\calStpc$ is total and deterministic.
  \item  If $P \mtrans{\surd}_2 \mycalP$ then either $\mycalP(Q)
    = \SET{n}$ for some $n > 0$ or $\mycalP(Q) = \emptyset$.
  \end{enumerate}
\end{lem}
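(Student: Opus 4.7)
The plan is to prove both items simultaneously by guarded induction on the complexity $c(P)$, traversing the cases of the deduction system in Figure~\ref{fig-tpc-rules}. For each syntactic form, I verify (a) that for every label there is exactly one continuation satisfying the rules, and (b) whenever $P \mtrans{\surd}_2 \mycalP$, that $\mycalP(Q)$ is either a singleton or empty for every $Q \in \prcTPC$.

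The base cases $\nil$ and $a.P$ are immediate: for (a), rules (NIL1), (NIL2), and (APF1--3) determine a unique continuation per label; for (b) both yield $\zerof_{\textbf{2}^\nats}$, which is $\emptyset$ everywhere. For choice $P_1+P_2$ and parallel $P_1 \prlA P_2$, the induction hypothesis gives unique continuations $\amsetP_1,\amsetP_2$ and $\mycalP_1,\mycalP_2$, which (CHO1), (CHO2), (PAR1--3) combine in a unique way, settling (a). For (b), at $R = P'_1 + P'_2$ we have $(\mycalP_1 \myboxed{+} \mycalP_2)(R) = \mycalP_1(P'_1) \cap \mycalP_2(P'_2)$; since by IH both factors are singletons or empty, the intersection is a singleton or empty. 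The parallel case is analogous with $\myboxed{\,\prlA\,}$. Constants are handled by (CON1), (CON2), using guardedness to invoke the induction hypothesis on the (smaller-complexity) body.

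The delicate case is the time prefix $(n).P$. For (a), rule (TPF1) is immediate on actions, and rule (TPF2) applied to the unique $\mycalP$ given by the IH yields a unique time continuation $[n;P] + [P \mapsto \SET{n}] + (n+\mycalP)$. For (b), I use the disjoint-support observation already recorded before the lemma: by the strict $\mdelay$-inequality, at every $Q$ at most one of the three summands is non-empty, since $[n;P](Q) \neq \emptyset$ forces $\mdelay(Q) > \mdelay(P)$, $[P \mapsto \SET{n}](Q) \neq \emptyset$ forces $\mdelay(Q) = \mdelay(P)$, and $(n+\mycalP)(Q) \neq \emptyset$ forces $\mdelay(Q) < \mdelay(P)$. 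Consequently, the pointwise sum at $Q$ collapses to at most one of: $[n;P](Q) = \SET{m}$ for the unique $0<m<n$ with $Q=(n-m).P$, or $[P\mapsto \SET{n}](Q) = \SET{n}$ when $Q = P$, or $(n+\mycalP)(Q) = \SET{n+m}$ when $\mycalP(Q)=\SET{m}$ by IH. In each case the result is a singleton or empty.

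The main obstacle is precisely this last step: without the $\mdelay$-based disjointness of supports, the pointwise union of the three summands could in principle produce a set of more than one delay at a single target process, breaking property (b). Once disjointness is in hand, however, (a) and (b) for the time prefix reduce to a straightforward case analysis, and the remainder of the induction goes through routinely.
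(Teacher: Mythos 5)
Your proof is correct and follows essentially the same route as the paper's: guarded induction for both parts, the $\mdelay$-based disjointness of the supports of $[n;P]$, $[P \mapsto \SET{n}]$ and $(n + \mycalP)$ for the time prefix, and the observation that intersecting singletons of positive numbers under $\myboxed{+}$ and $\myboxed{\,\prlA\,}$ yields a singleton or the empty set. The only cosmetic difference is that you run the two items as one simultaneous induction where the paper states them as two; the substance is identical.
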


\begin{proof}
  Part~\(a\ goes by guarded induction on~$P$, both for $\mtrans{}_1$
  and~$\mtrans{}_2$. 
  Part~\(b\ follows by guarded induction. For the time prefix~$(n).P$
  we use that $[n;P]$, $[P \mapsto \SET{n}]$ and $(n + \mycalP)$ have
  disjoint supports, as noted above. For the constructs $P \cho Q$
  and~$P \prlA Q$ we observe that the operations $\myboxed{+}$
  and~$\myboxed{\, \prlA \,}$ preserve the property mentioned, as the
  intersection of two singletons holding a positive number is either a
  singleton with a positive number or the empty set.
\end{proof}

\noindent
Below we have $\calStpc = \threetuple {\prcTPC}{\theta_1}{\theta_2}$
and use $\FuTSbis{\tpc}$ to denote the bisimulation equivalence
induced by~$\calStpc$.

\blankline

\noindent
The standard SOS semantics of the $\TPC$ fragment of interest is given
in Figure~\ref{fig-standard-tpc}, involving the transition relations
\begin{displaymath}
  {\trans{}} \: \subseteq \: { \prcTPC \times \calA \times \prcTPC } 
  \qquad \text{and} \qquad 
  {\Ttrans{}} \: \subseteq \: { \prcTPC \times \nats_{{>}0} \times
    \prcTPC }
\end{displaymath}
Note that for timed transitions $P \Ttrans{n} P'$ it is
required that $n > 0$. Therefore, regarding rule (DEC), a
process~$(n).P$ for example with a timed prefix will \emph{not} yield
a zero-time step $(n).P \Ttrans{0} (n).P$ for which time does not
progress. The case for $(n).P$ where $n$~time step elapse, is covered
by rule~(PRE).

\begin{figure}
\begin{displaymath}
\scalebox{0.825}{$
\begin{array}{c}

\sosrn{APF}{}
\sosrule
  {}
  {\pfx{a}{P} \, \trans{a} \, P}

\bigskip\\

\sosrn{PRE}{}
\sosrule{}
  {\pfx{(n)}{P} \, \Ttrans{n} \, P}

\qquad
  
\sosrn{DEC}{}
\sosrule{n=m+\ell}
  {\pfx{(n)}{P} \, \Ttrans{m} \, (\ell).P}

\qquad
  
\sosrn{SUM}{}
\sosrule{P \Ttrans{n} P'}
  {\pfx{(m)}{P} \, \Ttrans{n+m} \, P'}

\bigskip \\

\sosrn{CHO1}{}
\sosrule
  {P \, \trans{a} \, R}
  {P \cho Q  \, \trans{a} \, R}

\qquad

\sosrn{CHO2}{}
\sosrule
  {Q \, \trans{a} \, R}
  {P \cho Q  \, \trans{a} \, R}

\qquad
 
\sosrn{ALT}{}
\sosrule
  {P \, \Ttrans{n} \, P' \quad Q \, \Ttrans{n} \, Q'}
  {P \cho Q  \, \Ttrans{n} \, P' + Q'}

\displaybreak[3] 
\bigskip \\

\sosrn{PAR1a}{}
\sosrule
  {P \, \trans{a} \, P' \quad a \notin A}
  {P \prlA Q \, \trans{a} \, P' \prlA Q}

\qquad

\sosrn{PAR1b}{}
\sosrule
  {Q \, \trans{a} \, Q' \quad a \notin A}
  {P \prlA Q \, \trans{a} \, P \prlA Q'}

\bigskip \\

\sosrn{PAR2}{}
\sosrule
  {P \, \trans{a} \, P' \quad 
   Q \, \trans{a} \, Q' \quad a \in A}
  {P \prlA Q \, \trans{a} \, P' \prlA Q'}

\qquad

\sosrn{SYN}{}
\sosrule
  {P \, \Ttrans{n} \, P' \quad Q \, \Ttrans{n} \, Q'}
  {P \prlA Q \, \Ttrans{n} \, P' \prlA Q'}

\bigskip\\

\sosrn{CON1}{}
\sosrule
  {P \, \trans{a} \, Q \quad X \dfas P}
  {X \, \trans{a} \, Q}

\qquad

\sosrn{CON2}{}
\sosrule
  {P \,  \Ttrans{n} \, Q \quad X \dfas P}
  {X \, \Ttrans{n} \, Q}
  
\end{array}
$} 
\end{displaymath}
\caption{Standard Transition Deduction System for $\TPC$.}
\label{fig-standard-tpc}
\end{figure}

The definition of timed bisimilarity for $\TPC$ we give below is a bit
more concise than the one originally introduced
in~\cite{ABC10:springer}, but the two notions can be easily proven to
coincide. We will compare timed bisimilarity with the notion of
bisimulation associated with the combined $\FuTS$~$\calStpc$.

\begin{defi}
  \label{df-tpc-timed-bisimulation}
  An equivalence relation $R \subseteq \prcTPC \times \prcTPC$ is a
  timed bisimulation for~$\TPC$ if, for all $P_1, P_2 \in \prcTPC$
  such that $R(P_1,P_2)$, it holds that for all $a \in \calA$ and
  $n\in \nats$
  \begin{itemize}
  \item whenever $P_1 \trans{a} Q_1$, then $P_2 \trans{a} Q_2$
    for some $Q_2 \in \prcTPC$ with
    $R(Q_1,Q_2)$; 
  \item whenever $P_1 \Ttrans{n} Q_1$, then $P_2 \Ttrans{n} Q_2$ for
    some $Q_2 \in \prcTPC$ with $R(Q_1,Q_2)$.
  \end{itemize}
  Two processes $P_1, P_2 \in \prcTPC$ are called timed bisimilar,
  notation $P_1\TPCsbis P_2$ if $R(P_1,P_2)$ for some timed
  bisimulation for~$\prcTPC$.
\end{defi}

\noindent
To establish the correspondence of $\FuTS$
bisimilarity~$\FuTSbis{\tpc}$ for~$\calStpc$ of
Definition~\ref{df-tpc-sem} and timed bisimilarity~$\TPCsbis$
for~$\TPC$ of Definition~\ref{df-tpc-timed-bisimulation}, we need to
connect the state-to-function relation~$\mtrans{}_1$ and the
transition relation~$\trans{}$ as well as the state-to-function
relation~$\mtrans{}_2$ and the transition relation~$\Ttrans{} \,$.
The connection is established by Lemma~\ref{lm-mtrans-Ttrans}. First
we state an auxiliary result, which is commonly referred to as
time-determinism (cf.~\cite{BM02:springer}) and which can be shown
straightforwardly by guarded induction.

\begin{lem}
  \label{lm-time-determinism}
  If $P \Ttrans{n} P'$ and $P \Ttrans{n} P''$, for $P, P', P'' \in
  \prcTPC$ and $n > 0$, then~$P' = P''$.
  \qed
\end{lem}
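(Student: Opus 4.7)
The plan is to proceed by guarded induction on the structure of $P$, treating each production of the grammar in turn and exploiting the fact that the SOS rules for $\Ttrans{}$ in Figure~\ref{fig-standard-tpc} are syntax-directed: for each process form, only very few rules can yield a timed transition, so determinism reduces to a routine case analysis together with appeal to the induction hypothesis on structurally simpler (or, for constants, guarded) subterms.

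The trivial base cases are $\nil$ and $a.P$, for which no rule in Figure~\ref{fig-standard-tpc} derives a timed transition, so the conclusion holds vacuously. For $P = (n).Q$ the hard part is isolating which rule is responsible for a given step $(n).Q \Ttrans{k} R$; here I would split on the relationship between $k$ and $n$. If $k = n$, only rule (PRE) applies (since (DEC) requires $k < n$ to give a valid $(\ell).Q$ with $\ell > 0$, and (SUM) forces $k > n$ because its premise requires a positive-time step of~$Q$), so necessarily $R = Q$. If $0 < k < n$, only rule (DEC) applies and forces $R = (n-k).Q$, which is uniquely determined. If $k > n$, only (SUM) applies, giving $R = R'$ for some $R'$ with $Q \Ttrans{k-n} R'$, and uniqueness of~$R'$ follows from the induction hypothesis on~$Q$.

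For the compound constructs $P_1 + P_2$ and $P_1 \prlA P_2$, only rules (ALT) and (SYN) respectively produce timed transitions, and each of them forces the target to have the matching syntactic shape $P_1' + P_2'$ (resp.\ $P_1' \prlA P_2'$) with $P_1 \Ttrans{n} P_1'$ and $P_2 \Ttrans{n} P_2'$; the induction hypothesis applied to $P_1$ and $P_2$ then pins down $P_1'$ and~$P_2'$ uniquely, hence so is the target. For a constant $X$ with $X \dfas P$, the only applicable rule is (CON2), so any target of $X \Ttrans{n} Q$ arises from $P \Ttrans{n} Q$; the guardedness convention ensures that the complexity function discussed in Section~\ref{sec-preliminaries} decreases on passing to the body~$P$, legitimizing the inductive step.

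The main obstacle, such as it is, lies in the time prefix case, where one must be careful that (DEC) and (SUM) never overlap with (PRE) or with each other, using that $\Ttrans{}$ is defined only for strictly positive durations and that the grammar admits $(n).P$ only for $n > 0$. Once this is observed, the three sub-cases partition cleanly and the lemma follows.
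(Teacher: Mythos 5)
Your proof is correct and follows exactly the route the paper intends: the paper states this lemma with the bare remark that it ``can be shown straightforwardly by guarded induction'' and gives no further detail, and your case analysis (in particular the three-way split on the relationship between the duration and the prefix $n$, which cleanly separates the applicability of (PRE), (DEC) and (SUM)) is the straightforward argument being alluded to. The only point worth flagging is that your appeal to the induction hypothesis on the body $Q$ of a time prefix $(n).Q$ in the (SUM) sub-case implicitly treats the time prefix as non-guarding for the complexity measure, but the paper makes the very same move in its own proof of Lemma~\ref{lm-mtrans-Ttrans}(b), so this is consistent with the paper's conventions.
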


\noindent
We use time-determinism of~$\TPC$ in the proof of the following lemma.

\begin{lem}
\label{lm-mtrans-Ttrans}
\hfill
\begin{enumerate}[label=\({\alph*}]
\item  Let $P \in \prcTPC$ and $a \in \calA$.  If $P \mtrans{a}_1
  \! \amsetP$ then $P \trans{a} P' \iff \amsetP( P' ) = \TRUE$.
\item  Let $P \in \prcTPC$.  If $P \mtrans{\surd}_2 \mycalP$
  then $P \Ttrans{n} P' \iff \mycalP( P' ) = \SET{n}$. 
\end{enumerate}
\end{lem}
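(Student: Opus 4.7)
The plan is to prove both items by guarded induction on the structure of~$P$, following the same blueprint as Lemma~\ref{lm-mtrans-trans} for~$\IML$. Part~(a) is essentially a repetition of that earlier argument: the continuation operators $+$ and $\prlA$ on $\fsfn{\prcTPC}{\bools}$ mirror the action-rules (APF), (CHO1/2), (PAR1/2) and (CON1) one-for-one, so the equivalence $P \trans{a} P' \IFF \amsetP(P') = \TRUE$ unfolds exactly as before. No new ideas are needed; I would only write out the parallel case, with the other cases reduced to a remark.

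For part~(b) I would proceed case by case on~$P$. The base cases $\nil$ and $a.P$ are trivial since $\mycalP = \zerof_{\textbf{2}^\nats}$ and neither admits any $\Ttrans{n}$-transition. The constants, choice and parallel cases are handled inductively: the rules (ALT) and (SYN) force the two operands to tick the same amount~$n$, which matches the intersection built into $\myboxed{+}$ and $\myboxed{\,\prlA\,}$; combining with Lemma~\ref{lm-tpc-total-det}(b), which guarantees that each operand's continuation delivers either a singleton or~$\emptyset$ on any fixed process, the intersection of two singletons is either $\SET{n}$ or~$\emptyset$, matching the SOS rule precisely. In each direction one plugs the induction hypothesis into the two operands and reads off the result.

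The substantive case is the time prefix $(n).P$, whose continuation is the three-part sum $[n;P] + [P \mapsto \SET{n}] + (n + \mycalP)$. The three SOS rules that can fire on $(n).P \Ttrans{m} Q$ are (DEC) for $0 < m < n$ and $Q = (n-m).P$, (PRE) for $m = n$ and $Q = P$, and (SUM) for $m = n+k$ with $P \Ttrans{k} Q$. I would show that these three possibilities correspond, respectively, to the three summands of the continuation, invoking the induction hypothesis on~$P$ for the (SUM) case. For the $\Leftarrow$ direction I rely on the observation made in the paragraph preceding Lemma~\ref{lm-tpc-total-det}, namely that the three summands have pairwise disjoint supports (measured via~$\mdelay$), so that $\mycalP(Q) = \SET{m}$ uniquely pins down from which summand the value comes, and hence which SOS rule produced the transition.

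The main obstacle is precisely this $\Leftarrow$ direction: extracting a witness $P \Ttrans{n} P'$ from the datum $\mycalP(P') = \SET{n}$ requires that the singleton cannot arise ambiguously from several competing subderivations. Two facts combine to rule this out: the disjointness of supports in the time-prefix sum (which handles the structural ambiguity at $(n).P$), and the time-determinism Lemma~\ref{lm-time-determinism} (which ensures no second $n'$ could witness $P \Ttrans{n'} P'$ with the same $P'$ but different~$n'$, and in particular that the intersections taken in $\myboxed{+}$ and $\myboxed{\,\prlA\,}$ really do reflect genuine derivations on both sides of a choice or parallel). Once these two points are in hand, the inductive step for the time prefix, and therefore the whole lemma, closes cleanly.
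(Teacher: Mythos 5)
Your proposal is correct and follows essentially the same route as the paper's proof: guarded induction, part~(a) deferred to the argument of Lemma~\ref{lm-mtrans-trans}, and for part~(b) the time-prefix case handled by matching the three summands $[n;P]$, $[P \mapsto \SET{n}]$, $(n+\mycalP)$ against rules (DEC), (PRE), (SUM) using the disjoint-supports observation and Lemma~\ref{lm-time-determinism}, with the choice and parallel cases resolved via Lemma~\ref{lm-tpc-total-det}(b) and the singleton-intersection behaviour of $\myboxed{+}$ and $\myboxed{\,\prlA\,}$. The only quibble is that Lemma~\ref{lm-time-determinism} fixes the \emph{target} given the delay (which is how the paper uses it, to identify which rule fired), not the delay given the target, but this does not affect the soundness of your argument.
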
 

\begin{proof}
  Part~\(a\ is similar to the corresponding part of
  Lemma~\ref{lm-mtrans-trans}. Part~\(b\ can be shown by guarded
  induction for which we exhibit two cases (the others being similar or
  straightforward). For readability, we suppress the label~$\surd$
  of~$\mtrans{}_2$.

  Case~$(m).P$. Suppose $(m).P \mtrans{}_2 \mycalP$ and $P \mtrans{}_2
  \mycalP'$. Then, by (TPF2), we have $\mycalP(P') = \SET{\ell}$, for
  $0 < \ell < 
  m$, iff $P' = (m-\ell).P$, $\mycalP(P') = \SET{m}$ iff $P' = P$, and
  $\mycalP(P') = \SET{\ell}$, for $\ell > m$ iff $\mycalP'(P') =
  \SET{\ell-m}$.
  Now, if $(m).P \Ttrans{n} P'$ for $0 < n < m$, then $P' = (m-n).P$,
  because of rules (PRE) and~(DEC) and
  Lemma~\ref{lm-time-determinism}. Therefore, $\mycalP(P') =
  \mycalP((m-n).P) = \SET{n}$. If $(m).P \Ttrans{n} P'$ with $n = m$,
  then $P' = P$, as (PRE) applies (and with an appeal to
  Lemma~\ref{lm-time-determinism}). Therefore, $\mycalP(P') = \mycalP(P) =
  \SET{m} = \SET{n}$. Finally, if $(m).P \Ttrans{n} P'$ for $n > m$,
  then we have $P \Ttrans{n-m} P'$, in view of rule~(SUM) and because
  of time-determinism. By induction hypothesis, we obtain
  $\mycalP'(P') = \SET{n-m}$ and therefore $\mycalP(P') =
  (m+\mycalP')(P') = \ZSET{ m + n }{n \in 
    \mycalP'(P')} = \SET{m + n - m} = \SET{n}$.
  Reversely, by rules (PRE) and~(DEC) we have $(m).P \Ttrans{\ell}
  (m-\ell).P$, for $0 < \ell < m$ and $(m).P \Ttrans{m} P$. Moreover,
  if $\mycalP(P') = \SET{\ell}$, for $\ell > m$, then $\mycalP'(P') =
  \SET{\ell - m}$. By induction hypothesis, $P \Ttrans{\ell - m}
  P'$. Hence, $(m).P \Ttrans{m+\ell-m} P'$, i.e.\ $(m).P \Ttrans{\ell}
  P'$, by~(SUM).
  
  Case $P_1 + P_2$. Suppose $P_1 + P_2 \mtrans{}_2 \mycalP$. Then
  $\mycalP = \mycalP_1 \myboxed{+} \mycalP_2$ for $\mycalP_1,
  \mycalP_2 \in \fsfn{\prcTPC}{\textbf{2}^\nats}$ such that $P_1
  \mtrans{}_2 \mycalP_1$ and $P_2 \mtrans{}_2 \mycalP_2$.
  If $P_1 + P_2 \Ttrans{n} P'$, then exist $P'_1, P'_2 \in \prcTPC$
  such that $P_1 \Ttrans{n} P'_1$, $P_2 \Ttrans{n} P'_2$ and $P' =
  P'_1 + P'_2$, because (ALT) is the only rule applicable. By
  induction hypothesis, $\mycalP_1(P'_1) = \SET{n}$ and
  $\mycalP_2(P'_2) = \SET{n}$. Hence $\mycalP(P') = ( \mycalP_1
  \myboxed{+} \mycalP_2 \mkern1mu )(P'_1 + P'_2) = \SET{n}$.
  In the other direction, if $\mycalP(P') = \SET{n}$, then $P' = P'_1
  + P'_2$ for processes $P'_1, P'_2 \in \prcTPC$ such that
  $\mycalP_1(P'_1) = \SET{n}$ and $\mycalP_2(P'_2) = \SET{n}$. By
  induction hypothesis, $P_1 \Ttrans{n} P'_1$ and $P_2 \Ttrans{n}
  P'_2$, from which it follows that $P_1 + P_2 \Ttrans{n} P'_1 +
  P'_2$, i.e.\ $P_1 + P_2 \Ttrans{n} P'$, by~(SUM).
\end{proof}

\noindent
With Lemma~\ref{lm-mtrans-Ttrans} in place we are ready to show the
correspondence of $\FuTS$ bisimilarity and timed bisimilarity
for~$\TPC$.

\begin{thm}
\label{th-correpondence-tpc}
  For any two processes $P_1, P_2 \in \prcTPC$ it holds that 
  $P_1  \FuTSbis{\tpc} P_2$ iff $P_1 \TPCsbis P_2$.
\end{thm}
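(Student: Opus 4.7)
The plan is to follow the same schema as the proof of Theorem~\ref{th-correpondence-iml}, showing that an equivalence relation $R$ on $\prcTPC$ is a $\calStpc$-bisimulation in the sense of Definition~\ref{df-gltfs-bisim} if and only if it is a timed bisimulation in the sense of Definition~\ref{df-tpc-timed-bisimulation}. The translation between the two formats is provided by Lemma~\ref{lm-mtrans-Ttrans}, which connects the two state-to-function relations of $\calStpc$ with the two SOS transition relations of $\TPC$.

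First I would handle the action component. Pick $P \in \prcTPC$, $a \in \calA$, $Q \in \prcTPC$, and let $P \mtrans{a}_1 \amsetP$, so that $\theta_1(P)(a) = \amsetP$. Using that summation in the semiring $\bools$ is disjunction and applying Lemma~\ref{lm-mtrans-Ttrans}(a), one gets
\begin{displaymath}
\tssum_{Q' \in \Rclass{Q}} \theta_1(P)(a)(Q') = \TRUE
\iff \exists Q' \in \Rclass{Q} : \amsetP(Q') = \TRUE
\iff \exists Q' \in \Rclass{Q} : P \trans{a} Q'.
\end{displaymath}
Hence, for action-labelled transitions the $\calStpc$-bisimulation condition $\tssum_{Q' \in \Rclass{Q}} \theta_1(P_1)(a)(Q') = \tssum_{Q' \in \Rclass{Q}} \theta_1(P_2)(a)(Q')$ is equivalent to the standard action transfer property used in Definition~\ref{df-tpc-timed-bisimulation}, exactly as in the proof for $\IML$.

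The more delicate step is the timed component, which operates in the semiring $\textbf{2}^{\nats}$ whose sum is set union. Let $P \mtrans{\surd}_2 \mycalP$, so $\theta_2(P)(\surd) = \mycalP$. The key observation, combining Lemma~\ref{lm-tpc-total-det}(b) with Lemma~\ref{lm-mtrans-Ttrans}(b), is that for every $Q' \in \prcTPC$ the value $\mycalP(Q')$ is a singleton $\SET{n}$ precisely when $P \Ttrans{n} Q'$, and is empty otherwise. Consequently
\begin{displaymath}
\tssum_{Q' \in \Rclass{Q}} \theta_2(P)(\surd)(Q') \; = \;
\bigcup_{Q' \in \Rclass{Q}} \mycalP(Q') \; = \;
\ZSET{ n \in \nats_{>0} }{ \exists Q' \in \Rclass{Q} : P \Ttrans{n} Q' }.
\end{displaymath}
Equality of these sets for $P_1$ and $P_2$ is therefore equivalent to: for every $n > 0$, there exists $Q'_1 \in \Rclass{Q}$ with $P_1 \Ttrans{n} Q'_1$ iff there exists $Q'_2 \in \Rclass{Q}$ with $P_2 \Ttrans{n} Q'_2$. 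Since $Q'_i \in \Rclass{Q}$ is the same as $R(Q,Q'_i)$, and by transitivity of $R$ this says precisely that whenever $P_1 \Ttrans{n} Q_1$ there is $Q_2$ with $R(Q_1,Q_2)$ and $P_2 \Ttrans{n} Q_2$ (and symmetrically), I recover the timed transfer condition of Definition~\ref{df-tpc-timed-bisimulation}.

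Combining both components, $R$~is a $\calStpc$-bisimulation iff $R$~is a timed bisimulation, from which $P_1 \FuTSbis{\tpc} P_2 \iff P_1 \TPCsbis P_2$ follows. I expect the main obstacle to be a conceptual one rather than a calculational one: convincing the reader that the slightly nonstandard choice of the semiring $\textbf{2}^{\nats}$ is exactly what makes the union-based aggregation over an equivalence class correspond, via Lemma~\ref{lm-tpc-total-det}(b) and Lemma~\ref{lm-mtrans-Ttrans}(b), to existential quantification of a time-deterministic target. Once that bridging lemma package is in place, the proof is a routine chain of equivalences parallel to Theorem~\ref{th-correpondence-iml}.
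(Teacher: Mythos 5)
Your proposal is correct and follows essentially the same route as the paper's proof: both hinge on Lemma~\ref{lm-mtrans-Ttrans} together with Lemma~\ref{lm-tpc-total-det}(b), and on reading summation in $\bools$ as disjunction and summation in $\textbf{2}^{\nats}$ as union so that aggregation over an equivalence class becomes existential quantification. The only difference is presentational: the paper splits the argument into two directions (verifying the transfer conditions pointwise one way, and running an iff-chain the other), whereas you package both directions into a single biconditional chain in the style of Theorem~\ref{th-correpondence-iml}; the mathematical content is the same.
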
 

\begin{proof}
  Suppose $P_1 \FuTSbis{\tpc} P_2$, for $P_1, P_2 \in \prcTPC$. Let $R
  \subseteq \prcTPC \times \prcTPC$ be a bisimulation with respect
  to~$\calStpc$ such that~$R(P_1,P_2)$. We verify that $R$ meets the
  two transfer conditions of Definition~\ref{df-tpc-timed-bisimulation}.

  If $P_1 \trans{a} Q_1$, for some $a \in \calA$ and $Q_1 \in
  \prcTPC$, then $\theta_1(P_1)(a)(Q_1) = \TRUE$ by
  Lemma~\ref{lm-mtrans-Ttrans}. From the definition of a $\FuTS$
  bisimulation we obtain
  \begin{equation}
    \label{eq-tpc-bisim-non-det}
    \tssum_{Q' \myin \Rclass{Q}} \; \theta_1(P_1)(a)(Q')
    = 
    \tssum_{Q' \myin \Rclass{Q}} \; \theta_1(P_2)(a)(Q')
  \end{equation}
  for all $Q \in \prcTPC$.  As we have seen before, we argue that
  summation of~$\bools$ is disjunction, and since
  $\theta_1(P_1)(a)(Q_1) = \TRUE$, there must exist $Q_2 \in
  \Rclass{Q_1}$ such that $\theta_1(P_2)(Q_2) =
  \TRUE$. Hence, $R(Q_1,Q_2)$ and, by Lemma~\ref{lm-mtrans-Ttrans}, $P_2
  \trans{a} Q_2$.

  If $P_1 \Ttrans{n} Q_1$, for some~$n > 0$, then, by
  Lemma~\ref{lm-mtrans-Ttrans}, $\theta_2(P_1)(\surd)(Q_1) =
  \SET{n}$. From the definition of   $\FuTS$
  bisimulation we obtain
  \begin{equation}
    \label{eq-tpc-bisim-timed}
    \tssum_{Q' \myin \Rclass{Q}} \; \theta_2(P_1)(\surd)(Q')
    = 
    \tssum_{Q' \myin \Rclass{Q}} \; \theta_2(P_2)(\surd)(Q')
  \end{equation}
  for all $Q \in \prcTPC$. Note, summation of the
  semiring~$\textbf{2}^{\nats}$ is union of sets. So, by picking $Q =
  Q_1$ we have $n \in \tssum_{Q' \myin \Rclass{Q_1}} \;
  \theta_2(P_2)(\surd)(Q')$. Thus, for some $Q_2 \in \prcTPC$ with
  $R(Q_1,Q_2)$ it holds that $n \in \theta_2(P_2)(\surd)(Q_2)$.  It
  follows from Lemma~\ref{lm-tpc-total-det}b that
  $\theta_2(P_2)(\surd)(Q_2) = \SET{n}$, and thus, again by
  Lemma~\ref{lm-mtrans-Ttrans}, $P_2 \Ttrans{n} Q_2$.

  Now suppose $P_1 \TPCsbis P_2$, for $P_1, P_2 \in \prcTPC$. Let $R
  \subseteq \prcTPC \times \prcTPC$ be a timed bisimulation such
  that~$R(P_1,P_2)$. We verify that, with respect to $P_1$
  and~$P_2$, $R$ meets the two summation conditions of
  Definition~\ref{df-gltfs-bisim} for the case of~$\calStpc$, i.e.,
  equations (\ref{eq-tpc-bisim-non-det})
  and~(\ref{eq-tpc-bisim-timed}), for all $Q \in \prcTPC$ and $a \in
  \calA$. We have
  \begin{displaymath} 
    \def\arraystretch{1.2}
    \begin{array}{rcll}
      \multicolumn{4}{l}{\tssum_{Q' \myin \Rclass{Q}} \;
        \theta_1(P_1)(a)(Q')} \\
      & \Leftrightarrow &
      \exists \mkern1mu Q' \in \prcTPC \colon
      R(Q' \mkern-2mu ,Q) \land \theta_1(P_1)(a)(Q') = \TRUE
      & \text{by structure of $\bools$} \\
      & \Leftrightarrow &
      \exists \mkern1mu Q' \in \prcTPC \colon
      R(Q' \mkern-2mu ,Q) \land P_1 \trans{a} Q'
      & \text{by Lemma~\ref{lm-mtrans-Ttrans}} \\
      & \Leftrightarrow &
      \exists \mkern1mu Q'' \in \prcTPC \colon
      R(Q'' \mkern-2mu ,Q) \land P_2 \trans{a} Q''
      & \text{$R(P_1,P_2)$ and $R$ timed bisimulation} \\
      & \Leftrightarrow &
      \exists \mkern1mu Q'' \in \prcTPC \colon
      R(Q'' \mkern-2mu ,Q) \land \theta_1(P_2)(a)(Q'') = \TRUE
      & \text{by Lemma~\ref{lm-mtrans-Ttrans}} \\
      & \Leftrightarrow &
      \tssum_{Q'' \myin \Rclass{Q}} \; \theta_1(P_2)(a)(Q'')
      & \text{by structure of $\bools$} \\
   \end{array}
    \def\arraystretch{1.0}
  \end{displaymath}
  and also
  \begin{displaymath} 
    \def\arraystretch{1.2}
    \begin{array}{rcll}
      \multicolumn{4}{l}{n \in \tssum_{Q' \myin \Rclass{Q}} \;
        \theta_2(P_1)(\surd)(Q')} \\
      & \Leftrightarrow &
      \exists \mkern1mu Q' \in \prcTPC \colon
      R(Q' \mkern-2mu ,Q) \land {n \in \theta_2(P_1)(\surd)(Q')}
      & \text{by structure of $\textbf{2}^\nats$} \\
      & \Leftrightarrow &
      \exists \mkern1mu Q' \in \prcTPC \colon
      R(Q' \mkern-2mu ,Q) \land {\theta_2(P_1)(\surd)(Q') = \SET{n}}
      & \text{by Lemma~\ref{lm-tpc-total-det}} \\
      & \Leftrightarrow &
      \exists \mkern1mu Q' \in \prcTPC \colon
      R(Q' \mkern-2mu ,Q) \land P_1 \Ttrans{n} Q'
      & \text{by Lemma~\ref{lm-mtrans-Ttrans}} \\
      & \Leftrightarrow &
      \exists \mkern1mu Q'' \in \prcTPC \colon
      R(Q'' \mkern-2mu ,Q) \land P_2 \Ttrans{n} Q''
      & \text{$R(P_1,P_2)$ and $R$ timed bisimulation} \\
      & \Leftrightarrow &
      \exists \mkern1mu Q'' \in \prcTPC \colon
      R(Q'' \mkern-2mu ,Q) \land \theta_2(P_2)(\surd)(Q'') = \SET{n}
      & \text{by Lemma~\ref{lm-mtrans-Ttrans}} \\
      & \Leftrightarrow &
      \exists \mkern1mu Q'' \in \prcTPC \colon
      R(Q'' \mkern-2mu ,Q) \land {n \in \theta_2(P_2)(\surd)(Q'')}
      & \text{by Lemma~\ref{lm-tpc-total-det}} \\
      & \Leftrightarrow &
      \tssum_{Q'' \myin \Rclass{Q}} \; \theta_2(P_2)(\surd)(Q'')
      & \text{by structure of $\textbf{2}^\nats$} \\
   \end{array}
    \def\arraystretch{1.0}
  \end{displaymath}
  Thus, $R$ satisfies the conditions for a $\FuTS$ bisimulation
  for~$\calStpc$.
\end{proof}

\noindent
We conclude that also in the setting of a $\FuTS$ for discrete time
involving the semiring~$\textbf{2}^\nats$, we have an example of a
correspondence result of $\FuTS$-bisimilarity and bisimilarity based
on a standard SOS definition.  It is worth pointing out that in the
proof above, the equivalence of $n \in \tssum_{Q' \myin \Rclass{Q}} \;
\theta_2(P_1)(\surd)(Q')$ and $\exists \mkern1mu Q' \in \prcTPC \colon
R(Q' \mkern-2mu ,Q) \land {n \in \theta_2(P_1)(\surd)(Q')}$, holds
because we are working with a semiring of (finite) \emph{sets}
over~$\nats$ with summation to be interpreted as (finite) union. Was
summation to be interpreted as sum over~$\nats$, as it would have been
the case if we would have used the semiring~$\nats$, i.e.\ using
$\fsfn{\prcTPC}{\nats}$ instead of $\fsfn{\prcTPC}{\textbf{2}^\nats}$,
then, from $ n = \tssum_{Q' \myin \Rclass{Q}} \;
\theta_2(P_1)(\surd)(Q')$ we would not have been able to conclude
$\exists \mkern1mu Q' \in \prcTPC \colon R(Q' \mkern-2mu ,Q) \land {n
  = \theta_2(P_1)(\surd)(Q')} $, and vice-versa.

\section{Nested \texorpdfstring{$\FuTS$}{FuTS}}
\label{sec-nested}

In this section we extend the applicability of the $\FuTS$ framework
to more complex models, in particular those in which different aspects
of behaviour are integrated in a non-orthogonal way---as it is the
case for non-deterministic choice of probabilistic distributions over
behaviour in probabilistic and Markov automata. We introduce the
notion of a \emph{nested} $\FuTS$, namely a $\FuTS$ where the
transition relation involves continuation functions that do not act on
the set of states~$S$ directly, but instead on {\em functions} acting
on~$S$ or, in the general case, on functions over the latter and so
on. As mentioned in the introduction, here we restrict our
investigation on nested $\FuTS$s with two levels, namely nested
$\FuTS$s where the domain of the continuation functions is a set of
functions the domain of which is the set~$S$ of states.
In the following, we give the formal
definition of a {\em simple} two-level nested $\FuTS$, i.e.\ a nested
$\FuTS$ involving two levels of continuations that has a single
transition relation.

\begin{defi}
  \label{df-2-FuTS}
  Let $\calL$ be a set of labels and $\calR_1$ and $\calR_2$ be two
  semirings. A (simple) two-level nested $\FuTS$~$\calS$, over $\calL$
  and $\calR_1$ and~$\calR_2$ is a tuple $\calS = ( \, S ,\, \mtrans{}
  \, )$ with set of states~$S$ and transition relation ${\mtrans{}} \,
  \subseteq \, {S \times \calL \times \fsfn{\, \fsfn{\mkern2mu
        S}{\calR_1} \,}{\calR_2}}$.
\end{defi}

\noindent
A two-level nested $\FuTS$ is called total and deterministic if, for
all $s \in S$ and $\ell \in \calL$, there exists exactly one $\psi \in
\fsfn{\, \fsfn{\mkern2mu S}{\calR_1} \,}{\calR_2}$ such that $s
\mtrans{\, \ell} \psi$.  As before, for a total and deterministic
nested $\FuTS$ we use the notation $(S, \theta \mkern1mu )$ where the
function~$\theta$ has type $S \to \calL \to \fsfn{\, \fsfn{\mkern2mu
    S}{\calR_1} \,}{\calR_2}$.  Here, for $s \in S$, $\ell \in \calL$,
$\varphi \in \fsfn{\mkern2mu S}{\calR_1}$, $y \in \calR_2$, we have
$\theta(s)(\myell)(\varphi) = y$ iff $\psi(\varphi) = y$ for the
unique $\psi \in \fsfn{\, \fsfn{\mkern2mu S}{\calR_1} \,}{\calR_2}$
such that $s \mtrans{\, \ell} \psi$.

For a set of states~$S$ and a semiring~$\calR$, an equivalence
relation~$R$ on~$S$ induces an equivalence relation
on~$\fsfn{\mkern2mu S}{\calR}$, referred to as the lifting of~$R$
to~$\fsfn{\mkern2mu S}{\calR}$ and also denoted as~$R$. The induced
relation~$R$ is defined by
\begin{displaymath}
  R( \mkern1mu \varphi_1  , \mkern1mu \varphi_2 \mkern1mu )
  \quad \text{iff } \quad
  {\tssum_{t' \in \Rclass{t}} \; \varphi_1 (t')}
  \ = 
  {\tssum_{t' \in \Rclass{t}} \; \varphi_2(t')}
  \quad \text{for all $t \in S$}
\end{displaymath}
for $\varphi_1, \varphi_2 \in \fsfn{\mkern2mu S}{\calR}$.  It is easy
to see that $R$ on~$\fsfn{\mkern2mu S}{\calR}$ is indeed an
equivalence relation. Therefore, the notion of a two-level
bisimulation for a two-level nested $\FuTS$ given below is
well-defined.

\begin{defi}
  \label{df-2-FuTS-bis}
  Let $\calS = ( \, S ,\, \mtrans{} \, )$ be a two-level nested
  $\FuTS$ over the label set~$\calL$ and semirings $\calR_1$
  and~$\calR_2$. An equivalence relation $R \subseteq {S \times S}$ is
  a two-level bisimulation for~$\calS$ if and only if $R(s_1,s_2)$
  implies
  \begin{equation}
    \tssum_{\varphi' \in \lftRclass{\varphi}} \; \theta \mkern2mu (s_1)(\ell
    \mkern2mu )(\varphi' \mkern1mu ) 
    =
    \tssum_{\varphi' \in \lftRclass{\varphi}} \; \theta \mkern2mu (s_2)(\ell
    \mkern2mu )(\varphi' \mkern1mu ) 
    \raisebox{-8pt}{\rule{0pt}{12pt}}
    \label{eq-two-ltfs-bisim} 
  \end{equation}%
  for all $\ell \in \calL$ and $\varphi \in \fsfn{\mkern2mu
    S}{\calR_1}$.  Two elements $s_1, s_2 \in S$ are called bisimilar
  for~$\calS$ if $R(s_1,s_2)$ for some two-level bisimulation~$R$
  for~$\calS$.  Notation $s_1 \tFuTSbis{\calS} s_2$.
\end{defi}

\noindent
In Section~\ref{sec-mal} we will show that, in the setting of Markov
Automata, the notion of a two-level bisimulation for a suitable
two-level nested $\FuTS$ (having $\calR_1=\nnreals$ and
$\calR_2=\bools$) coincides with the notion of strong bisimulation for
Markov Automata.

As is to be expected, a total and deterministic two-level $\FuTS$ can
be considered as a coalgebra of a suitable functor on~sets.

\begin{defi}
  \label{df-2-futs-functor}
  Let $\calL$ be a label set and $\vcalR = \langle \mkern2mu \calR_1
  ,\, \calR_2 \mkern2mu \rangle$ be an pair of semirings. The functor
  $\twocalVvcalLR : \Set \to \Set$ assigns to a set~$X$ the function
  space $\fsfn{\mkern2mu \fsfn{\mkern2mu X}{\calR_1}}{\calR_2}^{\,
    \calL}$ of all functions $\psi : \calL \to \fsfn{\mkern2mu
    \fsfn{\mkern2mu X}{\calR_1}}{\calR_2}$ and assigns to a mapping
  $f: X \to Y$ the mapping $\twocalVvcalLR(f) : \fsfn{\mkern2mu
    \fsfn{\mkern2mu X}{\calR_1}}{\calR_2}^{\, \calL} \to
  \fsfn{\mkern2mu \fsfn{\mkern2mu Y}{\calR_1}}{\calR_2}^{\, \calL}$
  where 
  \begin{displaymath}
    \twocalVvcalLR(f \mkern1mu)(\Phi)(\myell)(\psi)= \tssum_{\,
    \varphi \myin \fsfn{f}{\calR_1}^{-1}(\psi)} \; \Phi(\myell)(\varphi) 
  \end{displaymath}
  for all $\Phi : \calL \to \fsfn{\mkern2mu \fsfn{\mkern2mu
      X}{\calR_1}}{\calR_2}$, $\ell \in \calL$, $\psi \in
  \fsfn{\mkern2mu Y}{\calR_1}$, where we use the function
  $\fsfn{f}{\calR_1} :\fsfn{X}{\calR_1} \rightarrow \fsfn{Y}{\calR_1}$
  with $\fsfn{f}{\calR_1}(\varphi)(y) = \tssum_{x \in f^{-1}(y)} \:
  \varphi(x)$ for $\varphi \in \fsfn{X}{\calR_1}$ and $y\in Y$.
\end{defi}

\noindent
Note that in the definition above the sums exist since
$\Phi$ and $\varphi$ have finite support.

For readability we use $\calW$ as shorthand for~$\twocalVvcalLR$, when
the label set~$\calL$ and the pair of semirings~$R$ are clear from the
context. It is readily checked that each $\calW$ is a functor, in fact
an accessible one being a composition of accessible functors. Thus,
$\calW$ possesses a final coalgebra. The associated notion of
behavioural equivalence is denoted by~$\approx_{\calW}$. As before, we
have for nested $\FuTS$ a correspondence result as well.


\begin{thm}
  \label{th-nested-correspondence}
  Let $\calS = ( \, S ,\, \theta \, )$ be a two-level nested $\FuTS$
  over the label set~$\calL$ and the semirings $\calR_1$
  and~$\calR_2$. Let the functor~$\calW$ be as in
  Definition~\ref{df-2-futs-functor}. Then $s_1 \FuTSbis{\calS} s_2
  \IFF s_1 \beq{\calW}{} s_2$, for all $s_1, s_2 \in S$.
\end{thm}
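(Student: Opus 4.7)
The plan is to adapt the proof of Theorem~\ref{th-correspondence} to the nested setting, with the main new ingredient being a careful analysis of the lifting of an equivalence relation $R$ on $S$ to $\fsfn{S}{\calR_1}$. The crucial observation I will establish up front is that this lifting coincides with the kernel of $\fsfn{\varepsilon_R}{\calR_1} : \fsfn{S}{\calR_1} \to \fsfn{S/R}{\calR_1}$, where $\varepsilon_R$ is the canonical surjection and $\fsfn{\varepsilon_R}{\calR_1}$ is as in Definition~\ref{df-2-futs-functor}. Indeed, by definition $\fsfn{\varepsilon_R}{\calR_1}(\varphi)(\Rclass{t}) = \tssum_{t' \in \Rclass{t}} \varphi(t')$, so $\fsfn{\varepsilon_R}{\calR_1}(\varphi_1) = \fsfn{\varepsilon_R}{\calR_1}(\varphi_2)$ is precisely $R(\varphi_1, \varphi_2)$ in the sense introduced before Definition~\ref{df-2-FuTS-bis}. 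Hence, for any $\varphi$, the fibre $\fsfn{\varepsilon_R}{\calR_1}^{-1}(\fsfn{\varepsilon_R}{\calR_1}(\varphi))$ is exactly $\lftRclass{\varphi}$; moreover, $\fsfn{\varepsilon_R}{\calR_1}$ is surjective (pick one representative per class and use a single non-zero value), so every element of $\fsfn{S/R}{\calR_1}$ arises in this way.

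For the implication $s_1 \tFuTSbis{\calS} s_2 \IMPL s_1 \beq{\calW}{} s_2$, let $R$ be a two-level bisimulation with $R(s_1,s_2)$. I would equip $S/R$ with the $\calW$-coalgebra structure
\[
  \varrho_R(\Rclass{s})(\ell)(\psi) \: = \: \tssum_{\varphi' \in \fsfn{\varepsilon_R}{\calR_1}^{-1}(\psi)} \theta(s)(\ell)(\varphi')
\]
for $s \in S$, $\ell \in \calL$, $\psi \in \fsfn{S/R}{\calR_1}$. By the fibre identification above, the sum on the right is over a class $\lftRclass{\varphi}$ (or is empty), and equation~(\ref{eq-two-ltfs-bisim}) is exactly what ensures independence of the representative~$s$, so $\varrho_R$ is well-defined. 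Unwinding Definition~\ref{df-2-futs-functor} then yields
\[
  \calW(\varepsilon_R)(\theta(s))(\ell)(\psi) \: = \: \tssum_{\varphi' \in \fsfn{\varepsilon_R}{\calR_1}^{-1}(\psi)} \theta(s)(\ell)(\varphi') \: = \: \varrho_R(\varepsilon_R(s))(\ell)(\psi),
\]
so $\varepsilon_R$ is a $\calW$-homomorphism. By uniqueness of the final morphism, $\fmorph{\calW}{\calS}{\cdot} = \fmorph{\calW}{}{\cdot} \compose \varepsilon_R$, and since $\varepsilon_R(s_1) = \varepsilon_R(s_2)$ we get $\fmorph{\calW}{}{s_1} = \fmorph{\calW}{}{s_2}$, i.e.\ $s_1 \beq{\calW}{} s_2$.

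For the converse, set $R_{\calS}(s',s'') \IFF \fmorph{\calW}{}{s'} = \fmorph{\calW}{}{s''}$. The commutativity $\omega \compose \fmorph{\calW}{}{\cdot} = \calW(\fmorph{\calW}{}{\cdot}) \compose \theta$ gives, analogously to equation~(\ref{eq-thetai-OmegaS}),
\[
  \omega(\fmorph{\calW}{}{s})(\ell)(\psi) \: = \: \tssum_{\varphi' \in \fsfn{\fmorph{\calW}{}{\cdot}}{\calR_1}^{-1}(\psi)} \theta(s)(\ell)(\varphi')
\]
for $s \in S$, $\ell \in \calL$, $\psi \in \fsfn{\Omega}{\calR_1}$. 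Now I apply the same fibre identification to $\fmorph{\calW}{}{\cdot}$ in place of $\varepsilon_R$: the fibres of $\fsfn{\fmorph{\calW}{}{\cdot}}{\calR_1}$ coincide with the $R_{\calS}$-equivalence classes of the lifting to $\fsfn{S}{\calR_1}$. Given any $\varphi \in \fsfn{S}{\calR_1}$, set $\psi = \fsfn{\fmorph{\calW}{}{\cdot}}{\calR_1}(\varphi)$; then $R_{\calS}(s',s'')$ combined with the display above yields $\tssum_{\varphi' \in \lftRclass{\varphi}} \theta(s')(\ell)(\varphi') = \tssum_{\varphi' \in \lftRclass{\varphi}} \theta(s'')(\ell)(\varphi')$, which is (\ref{eq-two-ltfs-bisim}). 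Hence $R_{\calS}$ is a two-level bisimulation, and since it relates $s_1$ and $s_2$ by assumption, $s_1 \tFuTSbis{\calS} s_2$.

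The principal obstacle is purely bookkeeping: one has to be confident that the lifted equivalence on $\fsfn{S}{\calR_1}$ matches the fibres of the functorial action $\fsfn{{-}}{\calR_1}$ applied to $\varepsilon_R$ and to $\fmorph{\calW}{}{\cdot}$. Once that identification is pinned down, everything reduces to the pattern already established in the proof of Theorem~\ref{th-correspondence}, with $\fsfn{S}{\calR_1}$ playing the role formerly played by $S$ in the inner sum.
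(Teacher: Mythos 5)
Your proof is correct and follows essentially the same route as the paper's: the same quotient coalgebra on $S/R$ defined by summing $\theta(s)(\ell)$ over the fibres of $\fsfn{\varepsilon_R}{\calR_1}$ for the forward direction, and the same kernel relation $R_{\calS}$ together with the identification of the fibres of $\fsfn{\, \fmorph{\calW}{}{\cdot} \,}{\calR_1}$ with the lifted $R_{\calS}$-classes for the converse. The only difference is organizational: you isolate the fibre/lifting identification as an up-front observation, whereas the paper establishes it (including the small remark that both sides vanish outside the image of $\fmorph{\calW}{}{\cdot}$) in the course of the reverse implication.
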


\begin{proof}
  Let $s_1, s_2 \in S$.  We first prove ${s_1 \FuTSbis{\calS} s_2} \,
  \IMPL \, {s_1 \beq{\calW}{} s_2}$.  So, assume $s_1 \FuTSbis{\calS}
  s_2$.  Let $R \subseteq S \times S$ be a two-level bisimulation
  with $R(s_1,s_2)$.  We turn the collection of equivalence
  classes~$S/R$ into a $\calW$-coalgebra $\calS_R = ( S/R, \theta_R )$
  by putting
  \begin{displaymath}
    \theta_R ( \, \Rclass{s} \, )(\myell)( \,
    \varphibar \, ) = \textstyle{\sum}_{\, \varphi \myin \fsfn{
        \varepsilon_R }{\calR_1}^{\mkern1mu -1}( \mkern1mu \varphibar
      \mkern1mu )} \; 
    \theta(s)(\myell)(\varphi)
  \end{displaymath}
  for $s \in S$, $\ell \in \calL \mkern1mu$, and $\varphibar \in
  \fsfn{S/R}{\calR_1}$ and $\varepsilon : S \to S/R$ the canonical
  mapping. This is well-defined since $R$ is a two-level bisimulation
  and $\fsfn{ \varepsilon_R }{\calR_1}^{\mkern1mu -1}( \mkern1mu
  \varphibar \mkern1mu )$ is an equivalence class of~$R$, for all
  $\varphibar \in \fsfn{ \mkern1mu S/R \mkern1mu }{\calR_1}$. For, if
  $\varphi_1,\varphi_2 \in \fsfn{ \varepsilon_R }{\calR_1}^{\mkern1mu
    -1}( \mkern1mu \varphibar \mkern1mu )$, $t \in S$ then $\fsfn{
    \varepsilon_R }{\calR_1}(\varphi_1)( \Rclass{t} ) = \fsfn{
    \varepsilon_R }{\calR_1}(\varphi_2)( \Rclass{t} )$. Thus
  ${\sum_{t' \in \Rclass{t}} \: \varphi_1 (t')} \ = {\sum_{t' \in
      \Rclass{t}} \varphi_2(t')}$ for all $t \in S$,
  hence $R( \varphi_1, \varphi_2 )$.  Therefore, $\varepsilon_R : S
  \to S/R$ is a $\calW$-homomorphism: for $\ell \in \calL$ and
  $\varphibar \in \fsfn{S/R}{\calR_1}$, we have 
  \pagebreak[3]
  \begin{displaymath}
    \def\arraystretch{1.2}
    \begin{array}{rcll}
      \multicolumn{4}{l}{ \calW \mkern1mu
        ( \varepsilon_R )( \, \theta(s)\,)(\myell)( \mkern1mu
        \varphibar \mkern1mu )}
      \\ & = &
      \tssum_{\varphi \myin \fsfn{ \varepsilon_R }{\calR_1}^{-1}(
        \varphibar )} \; 
      \theta( s )( \myell )( \varphi ) 
      & \text{by definition of~$\calW$}
      \displaybreak[3]
      \\ & = &
      \theta_R \mkern1mu ( \, \Rclass{s} \, )(\myell)( \, \varphibar \, ) 
      & \text{by definition of~$\theta_R$}
      \\ & = &
      \theta_R \mkern1mu ( \, \varepsilon_R(s) \,
      )(\myell)( \, \varphibar \, ) 
      & \text{by definition of~$\varepsilon_R$}
   \end{array}
 \end{displaymath}%
 Thus, $\calW(\varepsilon_R) \compose \theta = \theta_R \compose
 \varepsilon_R$ and $\varepsilon_R : \calS \to \calS_R$ is a
 $\calW$-homomorphism as claimed. Now, by uniqueness of a final
 morphism, we have $\fmorph{\calW}{\calS}{\cdot} =
 \fmorph{\calW}{{\calS_R}}{\cdot} \compose \, \varepsilon_R$.  In
 particular, with respect to~$\calS$, this implies that
 $\fmorph{\calW}{}{s_1} = \fmorph{\calW}{}{s_2}$ since
 $\varepsilon_R(s_1) = \varepsilon_R(s_2)$.  Thus, $s_1 \beq{\calW}{}
 s_2$ as was to be shown.
 
 For the reverse, ${s_1 \beq{\calW}{} s_2 } \, \IMPL \, { s_1
   \FuTSbis{\calS} s_2 }$, assume $s_1 \beq{\calW}{} s_2$, i.e.
 $\fmorph{\calW}{}{s_1} = \fmorph{\calW}{}{s_2}$, for $s_1,s_2 \myin
 S$.  Since the map $\fmorph{\calW}{}{\cdot} : \twotuple{S}{\mytheta}
 \to \twotuple{\Omega}{\omega}$ is a $\calW$-homomorphism, the
 equivalence relation~$R_{\calS}$ given by $R_{\calS} \mkern1mu
 (s',s'') \IFF \fmorph{\calW}{}{s'} = \fmorph{\calW}{}{s''}$ is a
 two-level bisimulation: Suppose $R_{\calS} \mkern1mu (s',s'')$, i.e.\
 $s' \beq{\calW}{} s''$, for some $s',s'' \in S$. Pick $\ell \in
 \calL$, $t \in S$ and assume $\fmorph{\calW}{}{t} = w \in \Omega$.
 For $\calW$ we have $\omega \, \compose \, \fmorph{\calW}{}{\cdot} \,
 = \, \calW(\fmorph{\calW}{}{\cdot})\, \compose \, \theta$. Hence, for
 $s \in S$, $\ell \in \calL$, $\chi \in \fsfn{ \Omega }{\calR_1}$, it
 holds that
 \begin{equation}
   \omega \mkern1mu ( \,\fmorph{\calW}{}{s} \,)( \myell )( \chi ) 
   = 
   \calW( \, \fmorph{\calW}{}{\cdot} \, ) (\theta(s) )( \myell )( \chi ) 
   = 
   \tssum_{\varphi \myin \fsfn{ \, \fmorph{\calW}{}{\cdot} \, }{\calR_1}^{-1}(
     \chi )} \; \theta( s )( \myell )( \varphi ) 
   \label{eq-theta-OmegaS-nested}
 \end{equation}
 Moreover, we have, for $\varphi_1, \varphi_2 \in \fsfn{S}{\calR_1}$,
 that
 \begin{displaymath}
   R_S(  \,\varphi_1 , \, \varphi_2 \, ) 
   \iff
   \fsfn{ \, \fmorph{\calW}{}{\cdot} \,}{\calR_1}( \varphi_1 )
   =
   \fsfn{ \, \fmorph{\calW}{}{\cdot} \,}{\calR_1}( \varphi_2 )
 \end{displaymath}
 since we observe that
 \begin{displaymath}
   \def\arraystretch{1.2}
   \begin{array}{rcl}
     \multicolumn{3}{l}{\fsfn{ \, \fmorph{\calW}{}{\cdot}
         \,}{\calR_1}( \varphi_1 )  = \fsfn{ \,
         \fmorph{\calW}{}{\cdot} \,}{\calR_1}( \varphi_2 )}   
     \\ & \IFF &
     \forall w \in \fmorph{\calW}{}{S} \colon
     \fsfn{ \, \fmorph{\calW}{}{\cdot} \,}{\calR_1}( \varphi_1 )(w)  
     = 
     \fsfn{ \, \fmorph{\calW}{}{\cdot} \,}{\calR_1}( \varphi_2 )(w)
     \\ & &
     \text{since both $\fsfn{ \, \fmorph{\calW}{}{\cdot} \,}{\calR_1}(
       \varphi_1 )(w), \fsfn{ \, \fmorph{\calW}{}{\cdot} \,}{\calR_1}(
       \varphi_2 )(w) = 0$ if ${\fmorph{\calW}{}{\cdot}}^{\mkern1mu
         -1}(w) = \emptyset$}
     \\ & \IFF &
     \forall t \in S \colon
     \fsfn{ \, \fmorph{\calW}{}{\cdot} \,}{\calR_1}( \varphi_1 )(
     \fmorph{\calW}{}{t} )  
     = 
     \fsfn{ \, \fmorph{\calW}{}{\cdot} \,}{\calR_1}( \varphi_2 )(
     \fmorph{\calW}{}{t} )  
     \\ & \IFF &
     \forall t \in S \colon
     \tssum_{t' \in {\fmorph{\calW}{}{\cdot}}^{\mkern1mu
         -1}( \fmorph{\calW}{}{t} )} \; \varphi_1(t') 
     =
     \tssum_{t' \in {\fmorph{\calW}{}{\cdot}}^{\mkern1mu
         -1}( \fmorph{\calW}{}{t} )} \; \varphi_2(t') 
     \\ & & \text{by definition of $\fsfn{\cdot}{\calR_1}$}
     \\ & \IFF &
     \forall t \in S \colon
     \tssum_{t' \in \RSclass{t}} \; \varphi_1(t') 
     =
     \tssum_{t' \in \RSclass{t}} \; \varphi_2(t') 
     \\ & & \text{since $t' \in \RSclass{t}$ iff $\fmorph{\calW}{}{t'} =
       \fmorph{\calW}{}{t}$} 
     \\ & \IFF &
     R_\calS( \mkern1mu \varphi_1 , \mkern1mu \varphi_2 \mkern1mu )
     \\ & & \text{by definition of $R_\calS$ on $\fsfn{S}{\calR_1}$}
   \end{array}
   \def\arraystretch{1.0}
 \end{displaymath}
 Therefore,
 \begin{equation}
   \label{eq-RS-vs-final-image}
   \varphi' \in \RSclass{\varphi} 
   \iff
   \varphi' \in {\fsfn{ \, \fmorph{\calW}{}{\cdot}
       \,}{\calR_1}}^{\mkern1mu -1} ( \chi ) 
   \qquad
   \text{for $\chi = \fsfn{ \mkern1mu \fmorph{\calW}{}{\cdot} \mkern1mu
     }{\calR_1}(\varphi)$}
 \end{equation}
 Now, let $s', s'' \in S$ such that $R_\calS(s',s'')$, and choose any
 $\ell \in \calL$ and $\varphi \in \fsfn{S}{\calR_1}$. Put $\chi =
 \fsfn{ \mkern1mu \fmorph{\calW}{}{\cdot} \mkern1mu
 }{\calR_1}(\varphi)$. Then we have
 \begin{displaymath}
   \def\arraystretch{1.2}
   \begin{array}{rcll}
     \multicolumn{4}{l}{\tssum_{\varphi' \in \RSclass{\varphi}} \;
       \theta \mkern2mu (s')(\myell)(\varphi' \mkern1mu )}  
     \\ & = &
     \tssum_{\varphi' \myin  \, {\fsfn{ \, \fmorph{\calW}{}{\cdot} \, 
         }{\calR_1}^{\mkern1mu -1} ( \, \chi \, )} } \;
       \theta(s')(\myell)(\varphi')  
     & \text{by Equation~(\ref{eq-RS-vs-final-image} and definition~$\chi$)}
     \\ & = &
     \omega \mkern1mu ( \, \fmorph{\calW}{}{s'} \,)( \myell )( \,
     \chi \, ) 
     & \text{by Equation~(\ref{eq-theta-OmegaS-nested})}
     \\ & = &
     \omega \mkern1mu ( \, \fmorph{\calW}{}{s''} \,)( \myell )( \,
     \chi \, ) 
     & \text{$s' \beq{\calW}{} s''$ by assumption}
     \\ & = &
     \tssum_{\varphi' \myin  \, {\fsfn{ \, \fmorph{\calW}{}{\cdot} \, 
         }{\calR_1}^{\mkern1mu -1}( \, \chi \, )} } \;
       \theta(s'')(\myell)(\varphi')  
     & \text{by Equation~(\ref{eq-theta-OmegaS-nested})}
     \\ & = &
     \tssum_{\varphi' \in \RSclass{\varphi}} \;
       \theta \mkern2mu (s'')(\myell)(\varphi' \mkern1mu )
     & \text{by Equation~(\ref{eq-RS-vs-final-image} and definition~$\chi$)}
   \end{array}
   \def\arraystretch{1.0}
 \end{displaymath}
 Thus, if $R_{\calS} \mkern1mu (s',s'')$ then $\tssum_{\varphi' \in
   \RSclass{\varphi}} \; \theta \mkern2mu (s')(\myell)(\varphi'
 \mkern1mu ) = \tssum_{\varphi' \in \RSclass{\varphi}} \; \theta
 \mkern2mu (s'')(\myell)(\varphi' \mkern1mu )$ for all $\ell \in
 \calL$ and $\varphi \in \fsfn{S}{\calR_1}$. Therefore, $R_{\calS}$ is
 a two-level bisimulation according to Definition~\ref{df-2-FuTS-bis}.
 Since $\fmorph{\calW}{}{s_1} = \fmorph{\calW}{}{s_2}$, it follows
 that $R_{\calS} \mkern1mu (s_1,s_2)$. Thus $R_{\calS}$ is a two-level
 bisimulation relating $s_1$ and~$s_2$. Conclusion, it holds that $s_1
 \FuTSbis{\calS} s_2$.
\end{proof}

\noindent
Above we introduced the notion of a two-level nested $\FuTS$ and an
associated notion of bisimulation. Also in the case of such nested
$\FuTS$, $\FuTS$-bisimulation and behavioral equivalence of the
corresponding functor coincides. Combination of nested $\FuTS$, or
combination of nested and simple $\FuTS$, over the same set of states,
is a straightforward generalization along the lines of
Section~\ref{sec-combined}. 
We will not pursue unfolding of the details here.
In the next section we will encounter
an example of such a construction.

\section{\texorpdfstring{$\FuTS$}{FuTS} Semantics of a language for Markov Automata}
\label{sec-mal}

As a final application of the $\FuTS$ approach to modeling quantitative
behaviour we consider so-called Markov automata~(MA). A Markov
automaton, as proposed in~\cite{EHZ10:concur,EHZ10:lics,Ti+12},
combines non-deterministic and probabilistic behaviour, on the one
hand, with stochastic time behaviour, on the other hand. Therefore, we
need a combination of a nested and a simple $\FuTS$ to model the
respective behaviour.

The definition of an MA here follows~\cite{Ti+12}. We first recall
some definitions from~\cite{Ti+12,DH13:ic} with $\Distr(S) \subseteq
\fsfn{S}{\nnreals}$ denoting the class of (finitely
supported)
probability distributions
over~$S$.

The superposition of non-deterministic and probabilistic behaviour is
provided in Markov automata by means of a combination of a standard
choice operator~`$\cho$' together with the probabilistic extension of
action prefix $a.\SET{ \, \four{p_1}{P_1} \, \Box \cdots \Box \,
  \four{p_h}{P_h} \,}$ for~$a \in \calA$, $h > 0$, and $p_1, \ldots,
p_h \in (0,1]$ such that $p_1 + \cdots + p_h = 1$. The syntactic
construct $\SETponePonephPh$ denotes the distribution
$\mu_{\SETponePonephPh}$ over processes defined by
\begin{displaymath}
  \mu_{\SETponePonephPh} = \tssum_{i=1}^{h} \: [P_i \mapsto p_i]
\end{displaymath}
The intuitive meaning is then obvious: process $a.{\SETponePonephPh}$
performs action~$a$ and then behaves as process~$P$ with probability
$\mu_{\SETponePonephPh} (P)$.

A process language for Markov Automata called MAPA (Markov Automata
Process Algebra) has been proposed in~\cite{Ti+12,Tim12:ctit,Tim13phd}. MAPA includes a rich data system
and is equipped with restrictions to facilitate state space generation
of relatively small models. Below, we consider $\MAL$ as introduced
in~\cite{De+14}. $\MAL$ constitutes a simplified fragment of MAPA
which highlights how nested non-deterministic and probabilistic
behaviour combined with Markovian behaviour can be modeled in the
$\FuTS$ framework.

\begin{defi}
  The set $\prcMAL$ of $\MA$ processes is given by the grammar
  \begin{displaymath}
    P \bnfeq \nil \mid 
    a. \SETponePonephPh \mid 
    \lambda.P \mid P + P \mid P \prlA P \mid X
  \end{displaymath}
  where $a$~ranges over the set of actions~$\calA$, $p_i$~over the
  interval~$(0,1]$, $\lambda$~over~$\poreals$, $A$~over the set of
  finite subsets of~$\calA$ and $X$~over the set of
  constants~$\calX$. For an probabilistic action-prefix
  $a.\SETponePonephPh$ it is required that $h > 0$ and $p_1 + \cdots
  + p_h =1$.  
\end{defi}

\noindent
We assume the same notation, guardedness requirements and
conventions for constant definitions as in Section~\ref{sec-pepa}
for $\PEPA$, $\IML$ and~$\TPC$.

\blankline

\noindent
In the setting of~$\prcMAL$ we use $\amset{P}, \amset{Q}$ to range
over $\fsfn{\, \fsfn{\prcMAL}{\nnreals} \,}{\bools}$ and $\mycal{P},
\mycal{Q}$ to range over $\fsfn{\prcMAL}{\nnreals}$. We use $\mu, \nu$
to range over $\Distr(\prcMAL) \subseteq \fsfn{\prcMAL}{\nnreals}$.
As before, we let $\mycalP_1 + \mycalP_2$ be the pointwise sum of
$\mycalP_1$ and~$\mycalP_2$. (Note, we are adding rates here.) We put
$\chut_{\mkern1mu P} = [ \mkern1mu P \mapsto 1 \mkern1mu ]$
in~$\fsfn{\prcMAL}{\nnreals}$ and define $\mycalP_1 \prlA \mycalP_2 :
\prcMAL \to \nnreals$, for $\mycalP_1, \mycalP_2 \in
\fsfn{\prcMAL}{\nnreals}$ and $A \subseteq \calA$, by
\begin{displaymath}
  (\mycalP_1 \prlA  \mycalP_2)(R) =
  \left\{
    \begin{array}{cl}
      \mycalP_1(R_1)\cdot \mycalP_2(R_2) &
      \mbox{if $R = R_1 \prlA R_2$ for some $R_1, R_2 \in \prcMAL$} \\
      0 & \text{otherwise}
\end{array}
\right.
\end{displaymath}
Note $\mycalP_1 \prlA  \mycalP_2 \in
\fsfn{\prcMAL}{\nnreals}$. Moreover, if $\mu_1, \mu_2 \in
\Distr(\prcMAL)$ then $\mu_1 \prlA \mu_2 \in \Distr(\prcMAL)$ too,
since
$
  \tssum_{R \in \prcMAL} \; ( \mu_1 \prlA \mu_2 )(R)
  =
  \tssum_{R_1, R_2 \in \prcMAL} \; \mu_1(R_1) \cdot \mu_2(R_2)
$
and the latter summation is equal to 
$
  \bigl ( \mkern1mu \tssum_{R_1 \in \prcMAL} \; \mu_1(R_1) \mkern1mu \bigr ) 
  \cdot
  \bigl ( \mkern1mu \tssum_{R_2 \in \prcMAL} \; \mu_2(R_2) \mkern1mu \bigr ) 
$
while $\tssum_{R_1 \in \prcMAL} \; \mu_1(R_1) \mkern1mu$ and
$\tssum_{R_2 \in \prcMAL} \; \mu_2(R_2)$ are both equal to $1$.
For $\amset{P}_1, \amset{P}_2 \in \fsfn{\, \fsfn{\prcMAL}{\nnreals}
  \,}{\bools}$ and $A \subseteq \calA$, we also use constructs
$\amset{P}_1 + \amset{P}_2$ and $\amset{P}_1 \prlA \amset{P}_2$ where
$(\amset{P}_1 + \amset{P}_2)( \mkern1mu \mu) = \amset{P}_1( \mkern1mu
\mu) \vee \amset{P}_2( \mkern1mu \mu)$ is pointwise disjunction, and
$\amset{P}_1 \prlA \amset{P}_2$ is defined by
\begin{displaymath}
  \tssum_{\mu_1, \mkern2mu \mu_2 \colon \amset{P}_1( \mkern1mu \mu_1) = \TRUE 
    \, \land \, \amset{P}_2( \mkern1mu \mu_2) = \TRUE} \;
  [ \mkern2mu \mu_1 \prlA \mu_2 \mapsto \TRUE \mkern1mu ]
\end{displaymath}
Thus $(\amsetP_1 \prlA \amsetP_2)( \mkern1mu \mu) = \TRUE$ iff $\mu =
\mu_1 \prlA \mu_2$, for $\mu_1$ such that $\amsetP_1( \mkern1mu \mu_1)
= \TRUE$ and~$\mu_2$ such that $\amsetP_2( \mkern1mu \mu_2) = \TRUE$.
We overload $\chut_{\mkern-1mu P}$ for~$P \in \prcMAL$; with respect
to $\fsfn{\, \fsfn{\prcMAL}{\nnreals} \,}{\bools}$ we have
$\chut_{\mkern-1mu P} = [ \mkern2mu [ \mkern1mu P \mapsto 1 \mkern1mu
] \mapsto \TRUE \mkern2mu ]$. Because of the contexts no confusion
arises whether to interpret~$\chut_{\mkern-1mu P}$ with respect to
$\fsfn{\, \fsfn{\prcMAL}{\nnreals} \,}{\bools}$ or with respect
to~$\fsfn{\prcMAL}{\nnreals}$.

With the operators defined above in place, and a combined treatment of
actions and probabilities vs.\ stochastic delays, it is
straightforward to capture the semantics of~$\MAL$ with $\FuTS$, cf.~\cite{De+14}.

\begin{defi}
\label{df-mal-sem}
The formal semantics of $\prcMAL$ is given by the $\FuTS$ $\calSmal =
\threetuple{\, \prcMAL}{\mtrans{}_1}{\mtrans{}_2}$, a general $\FuTS$
over the label sets~$\calA$ and $\Delta = \SET{ \mkern1mu \delta
  \mkern1mu }$ and the semirings $\nnreals$,~$\bools$ and~$\nnreals$
again with transition relations ${\mtrans{}_1}$ and ${\mtrans{}_2}$, where 
${\mtrans{}_1} \subseteq {\prcMAL
  \times \calA \times \fsfn{\, \fsfn{\prcMAL}{\nnreals} \,}{\bools}}$
and ${\mtrans{}_2} \subseteq {\prcMAL \times \Delta \times
  \fsfn{\prcMAL}{\nnreals}}$, defined as the least relations satisfying
the rules of Figure~\ref{fig-mal-rules}.  
\end{defi}

\begin{figure}
\begin{displaymath}
  \def\arraystretch{1.3}
\scalebox{0.85}{$
  \begin{array}{@{}c@{}}
  \sosrn{NIL1}{}
    \sosrule
    {a \in \calA}
    {\nil \, \mtrans{a}_1 \, \zerof_\bools} 
 \quad
   \sosrn{NIL2}{}
    \sosrule
    {}
    {\nil \, \mtrans{\delta}_2 \, \zerof_{\nnreals}}
 \quad
   \sosrn{RPF1}{}
    \sosrule
    {a \in \calA}
    {\lambda.P \, \mtrans{a}_1 \, \zerof_\bools}
 \quad   
  \sosrn{RPF2}{}
    \sosrule
    {}
    {\lambda.P  \, \mtrans{\delta}_2 \, [P \mapsto \lambda]}
    \bigskip \\
    \sosrn{APF1}{}
    \sosrule
    {}
    {a.\SETponePonephPh 
      \, \mtrans{a}_1 \, 
      [ \mkern2mu \mu_{\SETponePonephPh} \mapsto \TRUE \mkern1mu ]}
    \bigskip \\
    \sosrn{APF2}{}
    \sosrule
    {b \neq a}
    {a.\SETponePonephPh \, \mtrans{b}_1 \, \zerof_\bools}
    \quad
    \sosrn{APF3}{}
    \sosrule
    {}
    {a.\SETponePonephPh \, \mtrans{\delta}_2 \, \zerof_{\nnreals}} 
    \bigskip \\
\sosrn{CHO1}{}
\sosrule
  {P \, \mtrans{a}_1 \, \amset{P} \quad 
   Q \, \mtrans{a}_1 \, \amset{Q}}
  {P \cho Q \  \mtrans{a}_1 \  \amset{P} \cho \amset{Q}} 
\qquad
\sosrn{CHO2}{}
\sosrule
  {P \, \mtrans{\delta}_2 \, \mycal{P} \quad 
   Q \, \mtrans{\delta}_2 \, \mycal{Q}}
  {P \cho Q \  \mtrans{\delta}_2 \  \mycal{P} \cho \mycal{Q}} 
\bigskip \\
\sosrn{PAR1}{}
\sosrule
  {P \, \mtrans{a}_1 \, \amset{P} \quad
   Q \, \mtrans{a}_1 \, \amset{Q} \quad 
   a \notin A}
  {P \prlA Q \  \mtrans{a}_1 \ 
   ( \, \amset{P} \prlA \chut_{\mkern-1mu Q} \, ) 
   \, + \,
   ( \, \chut_{\mkern-1mu P} \prlA \amset{Q} \, )}
\qquad
\sosrn{PAR2}{}
\sosrule
  {P \, \mtrans{a}_1 \, \amset{P} \quad 
   Q \, \mtrans{a}_1 \, \amset{Q} \quad 
   a \in A}
  {P \prlA Q \  \mtrans{a}_1 \ 
   \amset{P} \prlA \amset{Q}}
\bigskip \\
\sosrn{PAR3}{}
\sosrule
  {P \, \mtrans{\delta}_2 \, \mycal{P} \quad
   Q \, \mtrans{\delta}_2 \, \mycal{Q}}
  {P \prlA Q \  \mtrans{\delta}_2 \ 
   ( \, \mycal{P} \prlA \chut_{\mkern-1mu Q} \, ) 
   \, + \,
   ( \, \chut_{\mkern-1mu P} \prlA \mycal{Q} \, )}
\bigskip \\
\sosrn{CON1}{}
\sosrule
  {P \, \mtrans{a}_1 \, \amset{P} \quad 
   X \dfas P}
  {X \, \mtrans{a}_1 \, \amset{P}}
\qquad
\sosrn{CON2}{}
\sosrule
  {P \, \mtrans{\delta}_2 \, \mycal{P} \quad 
   X \dfas P}
  {X \, \mtrans{\delta}_2 \, \mycal{P}}
   \end{array}
$} 
   \def\arraystretch{1.0}
\end{displaymath}
\caption{$\FuTS$ Transition Deduction System for $\MAL$.}
\label{fig-mal-rules}
\end{figure}

\noindent
By guarded induction we obtain that the finitely supported functions
involved in the definition of~$\mtrans{}_1$ are indeed probability
distributions. Ultimately this relies on the restriction on the
extended prefix, for the process $a. \SETponePonephPh$ the finite sum
$p_1 + \cdots + p_h$ must be equal to~$1$.

\begin{lem}
  \label{lm-distr}
  For all $P \in \prcMAL, a \in \calA$,  functions  $\amset{P}\in \fsfn{\,
    \fsfn{\prcMAL}{\nnreals} \,}{\bools}$ and $\mycalP \in
  \fsfn{\prcMAL}{\nnreals}$, if $P \mtrans{a}_1 \amset{P}$ and
  $\amset{P}(\mycalP) = \TRUE$, then $\mycalP \in \Distr(\prcMAL)$.
  \qed
\end{lem}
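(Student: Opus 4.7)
The plan is to proceed by guarded induction on the structure of $P \in \prcMAL$, following the rules defining $\mtrans{}_1$ in Figure~\ref{fig-mal-rules}. Most cases will be vacuous or will reduce immediately to the induction hypothesis; the only genuine content lies in the probabilistic prefix and the parallel operator, where the closure properties of $\Distr(\prcMAL)$ have to be invoked.

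First I would dispatch the three cases in which $\amset{P} = \zerof_\bools$, namely $\nil$, $\lambda.P$, and $b.\SETponePonephPh$ with $b \neq a$ (rules NIL1, RPF1, APF2). In these cases $\amset{P}(\mycalP) = \FALSE$ for every $\mycalP \in \fsfn{\prcMAL}{\nnreals}$, so the implication holds vacuously. The base case with actual content is the probabilistic prefix $a.\SETponePonephPh$ handled by rule (APF1): here $\amset{P} = [\, \mu_{\SETponePonephPh} \mapsto \TRUE\,]$, so $\amset{P}(\mycalP) = \TRUE$ forces $\mycalP = \mu_{\SETponePonephPh}$, and $\mu_{\SETponePonephPh} \in \Distr(\prcMAL)$ precisely because the syntactic side condition $p_1 + \cdots + p_h = 1$ is imposed on the extended prefix construct.

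For the choice $P_1 \cho P_2$ (rule CHO1), if $P_i \mtrans{a}_1 \amset{P}_i$ then $\amset{P} = \amset{P}_1 \cho \amset{P}_2$ is pointwise disjunction, so $\amset{P}(\mycalP) = \TRUE$ implies $\amset{P}_i(\mycalP) = \TRUE$ for some $i \in \SET{1,2}$, and the induction hypothesis applied to $P_i$ yields $\mycalP \in \Distr(\prcMAL)$. The constant case $X \dfas P$ (rule CON1) follows directly from the induction hypothesis on $P$, available by guarded induction.

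The main substantive cases are the two parallel-composition rules. For these I would rely on two closure properties of $\Distr(\prcMAL)$: (i) the Dirac distribution $[\, R \mapsto 1\,] = \chut_R$ is in $\Distr(\prcMAL)$, and (ii) if $\mu_1, \mu_2 \in \Distr(\prcMAL)$ then $\mu_1 \prlA \mu_2 \in \Distr(\prcMAL)$, as already verified in the paragraph preceding the lemma. Assume $P_1 \mtrans{a}_1 \amset{P}_1$ and $P_2 \mtrans{a}_1 \amset{P}_2$. For rule (PAR2) with $a \in A$, we have $\amset{P} = \amset{P}_1 \prlA \amset{P}_2$, so $\amset{P}(\mycalP) = \TRUE$ forces $\mycalP = \mu_1 \prlA \mu_2$ with $\amset{P}_i(\mu_i) = \TRUE$; the induction hypothesis gives $\mu_1, \mu_2 \in \Distr(\prcMAL)$, and closure (ii) finishes the case. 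For rule (PAR1) with $a \notin A$, we have $\amset{P} = (\amset{P}_1 \prlA \chut_{P_2}) + (\chut_{P_1} \prlA \amset{P}_2)$, so $\amset{P}(\mycalP) = \TRUE$ implies one of the two summands evaluates to $\TRUE$ at $\mycalP$; in either subcase $\mycalP$ has the form $\mu \prlA \chut_{P_j}$ or $\chut_{P_j} \prlA \mu$ for some $\mu$ satisfying $\amset{P}_i(\mu) = \TRUE$, and the combination of the induction hypothesis with (i) and (ii) yields $\mycalP \in \Distr(\prcMAL)$. The only delicate point in the argument is ensuring that the sum $(\amset{P}_1 \prlA \chut_{P_2}) + (\chut_{P_1} \prlA \amset{P}_2)$ evaluating to $\TRUE$ at $\mycalP$ really does factor $\mycalP$ through one of the two displayed forms, which follows directly from the definition of $\prlA$ on $\fsfn{\, \fsfn{\prcMAL}{\nnreals}\,}{\bools}$.
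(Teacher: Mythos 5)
Your proof is correct and follows exactly the route the paper indicates: the paper gives no detailed proof, only the remark preceding the lemma that the result follows by guarded induction and "ultimately relies on the restriction on the extended prefix" that $p_1 + \cdots + p_h = 1$. Your case analysis fills in that sketch faithfully, with the prefix case and the closure of $\Distr(\prcMAL)$ under $\prlA$ (and the fact that Dirac functions are distributions) carrying the substance, just as the paper intends.
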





\noindent
It is not difficult either to verify that $\calSmal$ is a total and
deterministic combined $\FuTS$, i.e.\ for $P \in \prcMAL$, $a \in
\calA$ we have $P \mtrans{a}_1 \amset{P}$ for exactly one $\amset{P}
\in \fsfn{\, \fsfn{\prcMAL}{\nnreals} \,}{\bools}$ and $P
\mtrans{\delta}_2 \mycal{P}$ for exactly one $\mycal{P} \in
\fsfn{\prcMAL}{\nnreals}$.
 
\begin{lem}
  \label{lm-mal-total-det}
  The general $\FuTS$ $\calSmal$ is total and deterministic.  
  \qed
\end{lem}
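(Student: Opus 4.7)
The proof proceeds by guarded induction on the structure of processes $P \in \prcMAL$, establishing simultaneously that (i) for every $a \in \calA$ there is a unique $\amset{P} \in \fsfn{\, \fsfn{\prcMAL}{\nnreals} \,}{\bools}$ with $P \mtrans{a}_1 \amset{P}$, and (ii) there is a unique $\mycal{P} \in \fsfn{\prcMAL}{\nnreals}$ with $P \mtrans{\delta}_2 \mycal{P}$. This follows the same pattern as Lemmas~\ref{lm-pepa-total-det}, \ref{lm-iml-total-det}, and~\ref{lm-tpc-total-det}, so I will focus on what is new, namely the nested continuation structure introduced by the probabilistic prefix.

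The base cases are immediate by inspection of the deduction system in Figure~\ref{fig-mal-rules}: for $\nil$, rules~(NIL1) and~(NIL2) provide the unique continuations $\zerof_\bools$ and $\zerof_{\nnreals}$; for $\lambda.P$, rules~(RPF1) and~(RPF2) pin down $\zerof_\bools$ (for every action label $a$) and $[P \mapsto \lambda]$ respectively; for $a.\SETponePonephPh$, rules~(APF1)--(APF3) determine the sole continuations, with in particular (APF1) producing $[\mkern2mu \mu_{\SETponePonephPh} \mapsto \TRUE\mkern1mu]$, where the distribution $\mu_{\SETponePonephPh} = \tssum_{i=1}^{h} [P_i \mapsto p_i]$ is itself a single, well-defined element of $\fsfn{\prcMAL}{\nnreals}$.

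For the inductive cases $P \cho Q$ and $P \prlA Q$, I would first apply the induction hypothesis to obtain unique continuations $\amset{P}, \amset{Q}$ (or $\mycal{P}, \mycal{Q}$) for the two subprocesses. Then for each of the rules~(CHO1), (CHO2), (PAR1), (PAR2), (PAR3), the right-hand side is built by applying a fixed operator (pointwise sum~$\cho$, lifted parallel~$\prlA$, the Dirac functions $\chut_{\mkern-1mu P}$ and $\chut_{\mkern-1mu Q}$) to these unique inputs. Each of these operators is a well-defined function on the corresponding continuation space---$\fsfn{\prcMAL}{\nnreals}$ in the case of $\mtrans{}_2$ and $\fsfn{\, \fsfn{\prcMAL}{\nnreals} \,}{\bools}$ in the case of $\mtrans{}_1$---and injectivity of the syntactic constructors $\cho$ and $\prlA$ ensures that the pattern-matching definitions of $\mycal{P} \prlA \mycal{Q}$ and $\amset{P} \prlA \amset{Q}$ yield a unique result. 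Moreover, for each left-hand side exactly one rule is applicable (distinguished by the action label being in or outside~$A$ for the parallel case), so no ambiguity arises. The constant case $X$ with $X \dfas P$ follows by guarded induction using the complexity function alluded to in Section~\ref{sec-preliminaries}.

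The one point that deserves care, and which I would single out as the main thing to check, is well-definedness of the continuation constructed by rule~(APF1) in the nested setting: the right-hand side is a Dirac function over $\Distr(\prcMAL) \subseteq \fsfn{\prcMAL}{\nnreals}$, so one must verify that $\mu_{\SETponePonephPh}$ lies in $\fsfn{\prcMAL}{\nnreals}$ (which is immediate, the support being $\SET{P_1,\ldots,P_h}$) and that the combinators used in the inductive rules preserve the property noted in Lemma~\ref{lm-distr}. The latter is not strictly needed for totality and determinism, but it ensures that the unique $\amset{P}$ produced by the rules is consistent with the intended type of continuations. Once these observations are in place, uniqueness and existence of $\amset{P}$ and $\mycal{P}$ follow mechanically from the rule structure.
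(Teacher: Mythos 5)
Your proof is correct and follows exactly the route the paper intends: the paper states this lemma without proof (marking it as routine, by analogy with Lemma~\ref{lm-pepa-total-det}, which is explicitly attributed to guarded induction), and your guarded induction on process structure, with a rule-by-rule check of existence and uniqueness of the continuations and the extra care for the nested continuation produced by (APF1), is precisely the argument being elided.
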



\noindent
Below we use $\calSmal = \threetuple {\prcMAL}{\theta_1}{\theta_2}$
with $\theta_1 : \prcMAL \to \fsfn{ \fsfn{\prcMAL}{\reals} }{\bools}$
and $\theta_2 : \prcMAL \to \fsfn{\prcMAL}{\reals}$ induced by
$\mtrans{}_1$ and~$\mtrans{}_2$, respectively. We write
$\twoFuTSbis{\mal}$ for the associated notion of bisimilarity. Recall,
for~$\theta_1$ the relevant definition is
Definition~\ref{df-2-FuTS-bis} on page~\pageref{df-2-FuTS-bis}, while
for~$\theta_2$ we of course refer to Definition~\ref{df-ltfs-bisim} of
page~\pageref{df-ltfs-bisim}, as shown below, for clarity.

\begin{defi}
\label{df-Smal-bis}
An equivalence relation $R \subseteq \prcMAL \times \prcMAL$ is an
$\calSmal$-bisimulation if and only if $R$ is a nested bisimulation
with respect to~$\theta_1$ and a simple bisimulation with respect
to~$\theta_2$.
\end{defi}

\noindent
If we unfold the definitions for the two types of $\FuTS$ bisimulation
we obtain that an equivalence relation $R \subseteq \prcMAL \times
\prcMAL$ is an $\calSmal$-bisimulation, if for all $P_1, P_2 \in \prcMAL$
such that $R(P_1,P_2)$, it holds that
\begin{itemize}
\item for all $a \in \calA$ and $\mu \in \Distr( \prcMAL
  )$: $\tssum_{\mu' \in \Rclass{\mu}} \; \theta_1(P_1)(a)( \mkern1mu
  \mu') \; = \; \tssum_{\mu' \in \Rclass{\mu}} \; \theta_1(P_2)(a)(
  \mkern1mu \mu')$, and
\item for all $Q \in \prcMAL$: $\tssum_{Q' \in \Rclass{Q}} \;
  \theta_2(P_1)(\Edelay)(Q') \; = \; \tssum_{Q' \in \Rclass{Q}} \;
  \theta_2(P_2)(\Edelay)(Q')$
\end{itemize}
with~$R$ on~$\Distr( \prcMAL )$ induced by~$R$ on~$\prcMAL$.
Recall that, for $\mu_1,\mu_2 \in \Distr( \prcMAL )$,
$R(\mu_1,\mu_2)$ if and only if
$
\sum_{Q' \in \Rclass{Q}} \mu_1(Q') =
\sum_{Q' \in \Rclass{Q}} \mu_2(Q')
$
for all $Q \in  \prcMAL$.

\begin{figure}
\begin{displaymath}
\scalebox{0.85}{$
\begin{array}{c}
    \sosrn{ACT}{}
    \sosrule
    {}
    {a.\pCponePonephPh \, \trans{a} \, \mu_{\mkern1mu \pCponePonephPh}}
\qquad
    \sosrn{DELAY}{}
    \sosrule
    {}
    {\lambda.P  \, \Mtrans{\lambda} \, P}
\bigskip \\
    \sosrn{CHO1}{}
    \sosrule
    {P \trans{a} \mu}
    {P + Q \trans{a} \mu}
\qquad
    \sosrn{CHO2}{}
    \sosrule
    {Q \trans{a} \nu}
    {P + Q \trans{a} \nu}
\bigskip \\
    \sosrn{CHO3}{}
    \sosrule
    {P \Mtrans{\lambda}_2 P'}
    {P + Q \Mtrans{\lambda}_2 P'}
\qquad
    \sosrn{CHO4}{}
    \sosrule
    {Q \Mtrans{\lambda}_2 Q'}
    {P + Q \Mtrans{\lambda}_2 Q'}
\bigskip \\
    \sosrn{PAR1}{}
    \sosrule
    {P \, \trans{a} \, \mu \quad a \notin A}
    {P \prlA Q \, \trans{a} \, \mu \prlA \chutQ}
\qquad
    \sosrn{PAR2}{}
    \sosrule
    {Q \, \trans{a} \, \nu \quad a \notin A}
    {P \prlA Q \, \trans{a} \, \chutP \prlA \nu}
\bigskip \\
    \sosrn{PAR3}{}
    \sosrule
    {P \, \trans{a} \, \mu \quad 
      Q \, \trans{a} \, \nu \quad a \in A}
    {P \prlA Q \, \trans{a} \, \mu \prlA \nu}
\bigskip \\
    \sosrn{PAR4}{}
    \sosrule
    {P \, \Mtrans{\lambda} \, P'}
    {P \prlA Q \, \Mtrans{\lambda} \, P' \prlA Q}
\qquad
    \sosrn{PAR5}{}
    \sosrule
    {Q \, \Mtrans{\lambda} \, Q'}
    {P \prlA Q \, \Mtrans{\lambda} \, P \prlA Q'}
\bigskip \\
    \sosrn{REC1}{}
    \sosrule
    {P \trans{a} \mu \quad X:=P}
    {X \trans{a} \mu}
\qquad
    \sosrn{REC2}{}
    \sosrule
    {P \Mtrans{\lambda} P' \quad X:=P}
    {X \Mtrans{\lambda} P'}
\end{array}
$} 
\end{displaymath}
\halflineup
\caption{Standard Transition Deduction System for $\MAL$.}
\label{fig-standard-mal-revised}
\end{figure}

\blankline

\noindent
A standard LTS-based operational semantics of~$\MAL$ is given by the
SOS rules of Figure~\ref{fig-standard-mal-revised}. The semantics is
the similar to the one reported in~\cite{Ti+12,Tim12:ctit}. Here,
however, the technical overhead of decorations on transitions as used
in the above mentioned papers is avoided at the expense of implicit
multiplicities, in line with the treatment of $\PEPA$ and $\IML$ in
Sections \ref{sec-pepa} and~\ref{sec-iml}, respectively. Note, as
$\MAL$ extends~$\IML$, there are separate rules for interactive
transitions (ACT, CHO1--2, PAR1--3 and~REC1) captured by the transition
relation~$\trans{}$, and for Markovian transitions (DELAY, CHO3--4,
PAR4--5, REC2) captured by the transition relation~$\Mtrans{}$.

\begin{defi}
  \label{df-concrete-semantics-mal}
  The semantics of the process language~$\MAL$ 
  is the tuple $(\prcMAL ,\, \calA
  ,\, {\trans{}} ,\, {\Mtrans{}})$ where the probabilistic
  transition relation ${\trans{}} \subseteq {\prcMAL \times \calA
    \times \Distr(\prcMAL)}$ and the standard transition relation
  ${\Mtrans{}} \subseteq {\prcMAL \times \poreals \times \prcMAL}$ are
  given by the SOS rules of Figure~\ref{fig-standard-mal-revised}.
\end{defi}

\noindent
Similar to our treatment of~$\prcIML$ in Section~\ref{sec-iml}, we
introduce the functions $\bfI$ and~$\bfM$ based on the transition
relations $\trans{}$ and~$\Mtrans{}$ of
Definition~\ref{df-concrete-semantics-mal} for~$\prcMAL$. Now, for the
interactive part of~$\MAL$, we have $\bfI \colon \prcMAL \times \calA
\times {\textbf{2}}^{\mkern2mu \Distr(\prcMAL)} \to \bools$ given by
$\bfI \mkern1mu ( P, a, \amset{C} ) = \TRUE$ if the set $\ZSET{ \mu
  \in \amset{C} }{ P \trans{a} \mu }$ is non-empty, for all $P \in
\prcMAL$, $a \in \calA$ and any subset~$\amset{C} \subseteq
\Distr(\prcMAL)$. The Markovian part of~$\MAL$ is similar to that
of~$\IML$. We define for~$\MAL$ the function $\bfM \colon \prcMAL
\times \prcMAL \to \nnreals$ by $\bfM(P,P') = \tssum \MSET{ \lambda }{
  P \Mtrans{\lambda} P' }$. Because of the implicit multiplicities of
the SOS of Definition~\ref{df-concrete-semantics-mal}, the
comprehension is over the multiset of transitions leading from~$P$
to~$P'$ with label~$\lambda$. We also extend $\bfM$, now to $\prcMAL
\times {\textbf{2}}^{\mkern2mu \prcMAL}$, by $\bfM(P,C) = \tssum_{P'
  \myin C} \; \tssum \MSET{ \lambda }{ P \Mtrans{\lambda} P' }$, for
$P \in \prcMAL$ and $C \subseteq \prcMAL$. With the adapted functions
$\bfI$ and~$\bfM$ in place, the notion of strong bisimulation
for~$\MAL$ is defined as follows.

\begin{defi}
\label{df-mal-strong-bisimulation}
  An equivalence relation $R \subseteq \prcMAL \times \prcMAL$ is
  called a strong bisimulation for~$\MAL$ if, for all $P_1, P_2 \in
  \prcMAL$ such that $R(P_1,P_2)$, it holds that
  \begin{itemize}
  \item for all $a \in \calA$ and $\mu \in \Distr( \prcMAL$ ) : $\bfI
    \threetuple {P_1} a {\Rclass{\mu}} \iff \bfI \threetuple {P_2} a
    {\Rclass{\mu}}$ \smallskip
  \item for all $Q \in \prcMAL$: $\bfM \twotuple {P_1}{\Rclass{Q}} \;
    = \; \bfM \twotuple {P_2}{\Rclass{Q}}$
\end{itemize}
with the relation~$R$ on~$\Distr(\prcMAL)$ induced by the relation~$R$
on~$\prcMAL$.  Two processes $P_1, P_2 \in \prcMAL$ are called
strongly bisimilar if it holds that $R(P_1,P_2)$ for a strong
bisimulation~$R$ for~$\MAL$, notation~\hbox{$P_1 \MALsbis
  P_2$}.
\end{defi}

\noindent
Recall, again, that the relation $R \subseteq \prcMAL \times \prcMAL$
induces relation $R \subseteq \Distr( \prcMAL ) \times \Distr( \prcMAL )$ by
$R(\mu_1,\mu_2)$ if and only if
$
\sum_{Q' \in \Rclass{Q}} \mu_1(Q') =
\sum_{Q' \in \Rclass{Q}} \mu_2(Q')
$
for all $Q \in  \prcMAL$.
In line with what we have seen in the previous sections, the crux for
relating the notion of $\calSmal$-bisimulation 
and the notion of strong bisimulation of
Definition~\ref{df-mal-strong-bisimulation} is the following result.

\begin{lem}
  \label{lm-futs-vs-sos-mal}
  \hfill
\begin{enumerate}[label=\({\alph*}]
\item 
  Let $P \in \prcMAL$ and $a \in \calA$. 
  If $P \mtrans{a}_1 \amsetP$ then $P \trans{a} \mu \iff \amsetP( \mu
  ) = \TRUE$. 
\item 
  Let $P \in \prcMAL$. 
  If $P \mtrans{\Edelay}_2 \mycal{P}$ then $\tssum \MSET{ \lambda }{ P
    \Mtrans{\lambda} P' } = \mycal{P}( P' )$. 
\end{enumerate}
\end{lem}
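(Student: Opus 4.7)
The proof proceeds by guarded induction on the structure of~$P$, closely following the pattern of Lemma~\ref{lm-mtrans-trans} for~$\IML$. Part~(a) lifts the Boolean-valued interactive transitions from $\fsfn{\prcIML}{\bools}$ to the nested structure $\fsfn{\fsfn{\prcMAL}{\nnreals}}{\bools}$, while part~(b) is essentially a direct transposition of the Markovian half of that lemma, since $\MAL$, like $\IML$, does not synchronize on delays.

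For part~(a), the base cases $\nil$, $\lambda.P$ and mismatched action prefix $b.\SETponePonephPh$ with $b \neq a$ all yield $\amsetP = \zerof_\bools$ by rules (NIL1), (RPF1) and~(APF2), matching the fact that no $a$-labelled transition is derivable in the SOS. The novel case, absent in~$\IML$, is the probabilistic action prefix: if $a.\SETponePonephPh \mtrans{a}_1 \amsetP$, then $\amsetP = [\mu_{\SETponePonephPh} \mapsto \TRUE]$ by~(APF1); rule~(ACT) of the SOS yields exactly one transition $a.\SETponePonephPh \trans{a} \mu_{\SETponePonephPh}$, whence $\amsetP(\mu) = \TRUE \iff \mu = \mu_{\SETponePonephPh} \iff a.\SETponePonephPh \trans{a} \mu$. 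The choice case uses pointwise disjunction on $\fsfn{\fsfn{\prcMAL}{\nnreals}}{\bools}$ and goes through as in Lemma~\ref{lm-mtrans-trans}a, and the constant case is immediate by the induction hypothesis and rules (CON1)/(REC1).

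The parallel cases of part~(a) require unfolding the two-level operators. For synchronization, assume $P \mtrans{a}_1 \amsetP_P$, $Q \mtrans{a}_1 \amsetP_Q$ and $a \in A$, so $P \prlA Q \mtrans{a}_1 \amsetP_P \prlA \amsetP_Q$ by~(PAR2). By definition of $\prlA$ on $\fsfn{\fsfn{\prcMAL}{\nnreals}}{\bools}$, we have $(\amsetP_P \prlA \amsetP_Q)(\mu) = \TRUE$ iff there exist $\mu_1, \mu_2 \in \Distr(\prcMAL)$ with $\mu = \mu_1 \prlA \mu_2$, $\amsetP_P(\mu_1) = \TRUE$ and $\amsetP_Q(\mu_2) = \TRUE$; by the induction hypothesis this rewrites as $\mu = \mu_1 \prlA \mu_2$ with $P \trans{a} \mu_1$ and $Q \trans{a} \mu_2$, which by~(PAR3) of the SOS is exactly $P \prlA Q \trans{a} \mu$. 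The interleaving case~(PAR1) is analogous, using the Dirac encoding $\chut_Q = [\mkern2mu [\mkern1mu Q \mapsto 1\mkern1mu] \mapsto \TRUE \mkern2mu ]$ in $\fsfn{\fsfn{\prcMAL}{\nnreals}}{\bools}$ to isolate the $P$-move as $\amsetP_P \prlA \chut_Q$ and the $Q$-move as $\chut_P \prlA \amsetP_Q$, combined by the pointwise disjunction~$+$; this matches the disjunction of (PAR1) and~(PAR2) at the SOS level. Part~(b) transposes Lemma~\ref{lm-mtrans-trans}b verbatim, the key cases being $\lambda.P$ (via (RPF2) and (DELAY)) and $P_1 \prlA P_2$ (via (PAR3) of the FuTS and (PAR4)/(PAR5) of the SOS), with all other cases routine.

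The main technical obstacle lies in the synchronization case of part~(a): the chain of equivalences relies on the fact that the parallel product on distributions is well-behaved, namely that the map $(\mu_1,\mu_2) \mapsto \mu_1 \prlA \mu_2$ is injective. This follows because the syntactic operator $\prlA$ on $\prcMAL$ is injective and each factor $\mu_i$ is recoverable as a marginal of $\mu_1 \prlA \mu_2$ (using that both $\mu_1$ and $\mu_2$ have total mass~$1$, so summing over the other coordinate reproduces the factor). Consequently the existence of a decomposition $\mu = \mu_1 \prlA \mu_2$ witnessing $(\amsetP_P \prlA \amsetP_Q)(\mu) = \TRUE$ is equivalent to a matching pair of SOS derivations, closing the induction. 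Lemma~\ref{lm-distr}, which ensures that every $\mu$ with $\amsetP(\mu) = \TRUE$ is a bona fide distribution, is what legitimises this marginal argument throughout the proof.
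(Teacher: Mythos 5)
Your proof is correct and follows essentially the same route as the paper's: guarded induction with the probabilistic action prefix and the synchronising parallel composition as the key cases of part~(a), the same existential unfolding of $\prlA$ on $\fsfn{\fsfn{\prcMAL}{\nnreals}}{\bools}$ against rule~(PAR3) of the SOS, and part~(b) carried over from Lemma~\ref{lm-mtrans-trans}b with the new prefix case degenerating to the zero function. The only divergence is your closing paragraph on injectivity of $(\mu_1,\mu_2) \mapsto \mu_1 \prlA \mu_2$: the marginal argument is sound, but it is not actually needed for part~(a), since summation in $\bools$ is disjunction and the biconditional only requires the \emph{existence} of some decomposition $\mu = \mu_1 \prlA \mu_2$ with $\amsetP_1(\mu_1) = \amsetP_2(\mu_2) = \TRUE$, which rule~(PAR3) supplies directly in both directions.
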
 

\begin{proof} \hfill\smallskip

\noindent\(a\ Guarded induction. Let $a \in \calA$.  We treat the
  cases $a.\SETponePonephPh$ and $P_1 \prlA P_2$ for $a \in A$.
 
  Case~$a.\SETponePonephPh$.  
  $a.\SETponePonephPh \mtrans{a}_1 [
  \mkern2mu \mu_{\SETponePonephPh} \mapsto \TRUE \mkern1mu ]$, 
  while 
  $a.\SETponePonephPh \trans{a} \mu_{\SETponePonephPh}$ 
  is the only transition for process
  $a.\SETponePonephPh$.

  Case~$P_1 \prlA P_2$, $a \in A$. Assume $P_1 \prlA P_2 \mtrans{a}_1
  \amsetP$. Then $\amsetP = \amsetP_1 \prlA \amsetP_2$ for $\amsetP_1,
  \amsetP_2 : \fsfn{\prcMAL}{\nnreals} \to \bools$ such that $P_1
  \mtrans{a}_1 \amsetP_1$, $P_2 \mtrans{a}_1 \amsetP_2$. Suppose $P_1
  \prlA P_2 \trans{a} \mu$. Then there exist $\mu_1, \mu_2 \in
  \Distr(\prcMAL)$ such that $P_1 \trans{a} \mu_1$, $P_2 \trans{a}
  \mu_2$ and $\mu = \mu_1 \prlA \mu_2$, since only rule~(PAR3) of
  Figure~\ref{fig-standard-mal-revised} applies. By induction
  hypothesis, $\amsetP_1( \mkern1mu \mu_1) = \TRUE$ and $\amsetP_2(
  \mkern1mu \mu_2) = \TRUE$. Hence $\amsetP( \mkern1mu \mu) = (
  \amsetP_1 \prlA \amsetP_2 )( \mkern1mu \mu_1 \prlA \mu_2 ) = \TRUE$
  by definition of~$\prlA$ on $\fsfn{ \fsfn{\prcMAL}{\nnreals}
  }{\bools}$. Reversely, suppose $\amsetP( \mkern1mu \mu) =
  \TRUE$. Then $\mu = \mu_1 \prlA \mu_2$ for $\mu_1,\mu_2 \in
  \Distr(\prcMAL)$ such that $\amsetP_1( \mkern1mu \mu_1) = \TRUE$ and
  $\amsetP_2( \mkern1mu \mu_2) = \TRUE$. By induction hypothesis, $P_1
  \trans{a} \mu_1$ and $P_2 \trans{a} \mu_2$. Hence $P_1 \prlA P_2
  \trans{a} \mu_1 \prlA \mu_2$ by rule~(PAR3), i.e.\ $P_1 \prlA P_2
  \trans{a} \mu$.
 
  The other cases are left to the reader.\medskip

  \noindent
  \(b\ Guarded induction. Compared to the proof of
  Lemma~\ref{lm-mtrans-trans} there is only one new case, viz.\ for
  processes of the form~$a.\SETponePonephPh$. This case is straightforward,
  since, on the one hand, $a.\SETponePonephPh \mtrans{\Edelay}_2
  \zerof_{\nnreals}$ by definition of~$\mtrans{\Edelay}_2$ and, on the
  other hand, we have that 
  $a.\SETponePonephPh \Mtrans{\lambda} P'$ for \emph{no}~$P' \in
  \prcMAL$ by definition of~$\Mtrans{}$.

  The remaining cases are similar to the proof for the corresponding
  lemma for~$\IML$ and left to the reader.
\end{proof}

\noindent
We are now in a position to relate the notions of $\FuTS$
bisimilarity~$\FuTSbis{\mal}$ and standard strong
bisimilarity~$\MALsbis$ for~$\MAL$.

\begin{thm}
\label{th-correpondence-mal}
For any two processes $P_1, P_2 \in \prcMAL$ it holds that $P_1
\FuTSbis{\mal} P_2$ iff $P_1 \MALsbis P_2$.
\end{thm}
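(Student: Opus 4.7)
The plan is to follow the same template used for Theorem~\ref{th-correpondence-iml}, showing that an arbitrary equivalence relation $R \subseteq \prcMAL \times \prcMAL$ is a strong bisimulation in the sense of Definition~\ref{df-mal-strong-bisimulation} if and only if it is an $\calSmal$-bisimulation in the sense of Definition~\ref{df-Smal-bis}. Lemma~\ref{lm-futs-vs-sos-mal} provides the bridge between the concrete transition relations ${\trans{}}$, ${\Mtrans{}}$ and the $\FuTS$ continuations $\theta_1, \theta_2$, and by Lemma~\ref{lm-mal-total-det} the continuations $\amsetP \in \fsfn{\fsfn{\prcMAL}{\nnreals}}{\bools}$ and $\mycalP \in \fsfn{\prcMAL}{\nnreals}$ with $P \mtrans{a}_1 \amsetP$ and $P \mtrans{\Edelay}_2 \mycalP$ are uniquely determined, so we may freely write $\theta_1(P)(a) = \amsetP$ and $\theta_2(P)(\Edelay) = \mycalP$.

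First I would handle the interactive condition. Fix $P \in \prcMAL$, $a \in \calA$ and $\mu \in \Distr(\prcMAL)$, and let $\amsetP = \theta_1(P)(a)$. Since summation in~$\bools$ is disjunction, I would derive the chain
\begin{displaymath}
  \def\arraystretch{1.2}
  \begin{array}{rcll}
    \bfI \threetuple{P}{a}{\Rclass{\mu}}
    & \Leftrightarrow &
    \exists \mkern1mu \mu' \in \Rclass{\mu} \colon P \trans{a} \mu'
    & \text{by definition of $\bfI$}
    \\ & \Leftrightarrow &
    \exists \mkern1mu \mu' \in \Rclass{\mu} \colon \amsetP( \mkern1mu \mu' ) = \TRUE
    & \text{by Lemma~\ref{lm-futs-vs-sos-mal}\(a\
    \\ & \Leftrightarrow &
    \tssum_{\mu' \myin \Rclass{\mu}} \; \theta_1(P)(a)(\mu') \, = \TRUE
    & \text{structure of~$\bools$}
  \end{array}
  \def\arraystretch{1.0}
\end{displaymath}
Here I use that the equivalence class $\Rclass{\mu}$ of a distribution $\mu$ is taken with respect to the lifting of~$R$ from $\prcMAL$ to~$\Distr(\prcMAL)$, which is the same lifting used on both sides of Definitions~\ref{df-Smal-bis} and~\ref{df-mal-strong-bisimulation}. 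Since by Lemma~\ref{lm-distr} any $\mu'$ with $\amsetP( \mkern1mu \mu') = \TRUE$ is in $\Distr(\prcMAL)$, restricting attention to distributions causes no issue.

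Next I would handle the Markovian condition, exactly as for~$\IML$: with $\mycalP = \theta_2(P)(\Edelay)$,
\begin{displaymath}
  \def\arraystretch{1.2}
  \begin{array}{rcll}
    \bfM \twotuple{P}{\Rclass{Q}}
    & = &
    \tssum_{Q' \myin \Rclass{Q}} \; \tssum \MSET{ \lambda }{ P \Mtrans{\lambda} Q' }
    & \text{by definition of $\bfM$}
    \\ & = &
    \tssum_{Q' \myin \Rclass{Q}} \; \mycalP(Q')
    & \text{by Lemma~\ref{lm-futs-vs-sos-mal}\(b\
    \\ & = &
    \tssum_{Q' \myin \Rclass{Q}} \; \theta_2(P)(\Edelay)(Q')
    & \text{by definition of $\theta_2$}
  \end{array}
  \def\arraystretch{1.0}
\end{displaymath}
Putting both chains together, the two transfer clauses of Definition~\ref{df-mal-strong-bisimulation} are equivalent, condition by condition and for every $P_1, P_2$ with $R(P_1, P_2)$, to the two summation clauses of Definition~\ref{df-Smal-bis}. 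Hence $R$ is a strong bisimulation for~$\MAL$ iff $R$ is an $\calSmal$-bisimulation, from which the theorem follows.

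The main obstacle is bookkeeping rather than anything conceptually new: one must make sure that the lifting of~$R$ used when saying `$\Rclass{\mu}$' in Definition~\ref{df-mal-strong-bisimulation} is genuinely the same lifting used in the nested bisimulation part of Definition~\ref{df-Smal-bis}. Both are defined by the equality $\tssum_{Q' \in \Rclass{Q}} \mu_1(Q') = \tssum_{Q' \in \Rclass{Q}} \mu_2(Q')$ for all $Q \in \prcMAL$, so the two notions of $\Rclass{\mu}$ coincide and the above chains are legitimate. Given this and Lemma~\ref{lm-futs-vs-sos-mal}, no further induction on process structure is needed at this stage.
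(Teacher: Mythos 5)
Your proof is correct and follows essentially the same route as the paper's own proof: both reduce the theorem to the two equivalence chains for $\bfI$ and $\bfM$ via Lemma~\ref{lm-futs-vs-sos-mal} and then match the clauses of Definition~\ref{df-mal-strong-bisimulation} against those of Definition~\ref{df-Smal-bis}. Your added remark that the lifting of~$R$ to $\Distr(\prcMAL)$ is literally the same on both sides is a sensible (and correct) piece of bookkeeping that the paper leaves implicit.
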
 

\begin{proof}
  Let~$R$ be an equivalence relation on~$\prcMAL$.  Pick $P \in
  \prcMAL$, $a \in \calA$ and choose any~$\amsetP \in \fsfn{
    \fsfn{\prcMAL}{\nnreals} }{\bools}$.  Suppose $P \mtrans{a}_1
  \amsetP$.  Thus $\theta_1(P)(a) = \amsetP$. Then we have, for any
  $\mu \in \Distr( \prcMAL )$,
  \begin{displaymath} 
    \def\arraystretch{1.2}
    \begin{array}{rcll}
      \bfI \threetuple {P} a {\Rclass{\mu}}  
      & \Leftrightarrow &
      \exists \, \mu' \in \Rclass{\mu} \colon P \trans{a} \mu'
      & \text{by definition of $\bfI$}
      \\ & \Leftrightarrow &
      \exists \, \mu' \in \Rclass{\mu} \colon \amsetP(\mu') = \TRUE
      & \text{by Lemma~\ref{lm-futs-vs-sos-mal}a}
      \\ & \Leftrightarrow &
      \tssum_{\mu' \myin \Rclass{\mu}} \; \theta_1(P)(a)(\mu')
      & \text{by definition of $\theta_1$}
    \end{array}
    \def\arraystretch{1.0}
  \end{displaymath}
  Note, summation in~$\bools$ is disjunction. Likewise, on the
  Markovian side, we have, for any $Q \in \prcMAL$,
  \begin{displaymath} 
    \def\arraystretch{1.2}
    \begin{array}{rcll}
      \bfM \twotuple {P}{\Rclass{Q}}  
      & = &
      \tssum_{Q' \myin \Rclass{Q}} \; \tssum \MSET{ \lambda }{ P
        \Mtrans{\lambda} Q' } 
      & \text{by definition of $\bfM$}
      \\ & = &
      \tssum_{Q' \myin \Rclass{Q}} \; \amsetP(Q' \mkern1mu ) 
      & \text{by Lemma~\ref{lm-futs-vs-sos-mal}b}
      \\ & = &
      \tssum_{Q' \myin \Rclass{Q}} \; \theta_2(P)(\Edelay)(Q)
      & \text{by definition of $\theta_2$}
    \end{array}
    \def\arraystretch{1.0}
  \end{displaymath}
  Comparing the equations following Definition~\ref{df-Smal-bis} and
  the equations of Definition~\ref{df-mal-strong-bisimulation}, we
  conclude that a strong bisimulation for~$\MAL$ is also an
  $\calSmal$-bisimulation for the $\FuTS$~$\calSmal$, and vice versa.
  From this the theorem follows.
\end{proof}

\noindent
As a corollary of the theorem we obtain that also for~$\MAL$ the
concrete notion of strong bisimilarity $\MALsbis$ is coalgebraically
underpinned, as it coincides, with the behavioral equivalence
$\FuTSbis{\mal}$ that comes with the corresponding $\FuTS$~$\calSmal$.


\section{Concluding remarks}
\label{sec-conclusions}

Total and deterministic state-to-function labeled transition systems,
$\FuTS$s, are a convenient instrument to express the operational
semantics of both qualitative and quantitative process languages.  In
this paper we have discussed the notion of bisimilarity that arises
from a $\FuTS$, possibly involving multiple transition relations, from
a coalgebraic perspective. For $\FuTS$ models of prominent process languages
based on prominent stochastic process algebras we related the induced
notion of bisimulation to the standard equivalences, thus providing
these equivalence with a coalgebraic underpinning.
The main technical contributions of our paper include correspondence
results, viz.\ Theorem~\ref{th-correspondence},
Theorem~\ref{th-combined-correspondence} and
Theorem~\ref{th-nested-correspondence}, that relate in the simple,
combined and the new nested case, bisimilarity of a $\FuTS$~$\calS$ to
behavioural equivalence of the functor associated with~$\calS$. The
result extends to general $\FuTS$ as well.

It is noted in~\cite{BBBRS12}, in the context of weighted automata,
that in general the type of functors~$\fsfn{{\cdot}}{\calR}$ may not
preserve weak pullbacks and, therefore, the notions of coalgebraic
bisimilarity and of behavioural equivalence may not coincide.  A
counter example is provided, cf.~\cite[Section~2.2]{BBBRS12}.
Essential for the construction of the counter-example, in their
setting, is the fact that the sum of non-zero weights may add to
weight~$0$.  The same phenomenon prevents a general proof, along the
lines of~\cite{DeR99}, for coalgebraic bisimilarity and $\FuTS$
bisimilarity to coincide.  In the construction of a mediating
morphism, going from $\FuTS$ bisimulation to coalgebraic bisimulation
a denominator may be zero, hence a division undefined, in case the sum
over an equivalence class cancels out.  In the concrete case
for~\cite{KS08:fossacs}, although no detailed proof is provided there,
this will not happen with~$\nnreals$ as underlying semiring. 
In~\cite[Theorem~5.13]{GS01:entcs} for $\fsfn{{\cdot}}{\calM}$, with
$\calM$ a monoid, a characterization is given for weak preservation
of pullbacks: $\calM$~should be positive and refinable, i.e.\ (i)~$m_1
+ m_2 = 0$ iff $m_1,m_2 = 0$, and (ii)~if $m_1 + m_2 = n_1 + n_2$
there exist $p_{ij}$ such that 
$p_{i1} + p_{i2} = m_i$ and $p_{1j} +p_{2j} = n_j$ for 
$1 \leq i,j \leq 2$. 
The latter condition
is also referred to as the row-column property for $2 \times 2$
matrices over~$\calM$, a property going back to~\cite{Mos99:apal}.
In~\cite{LMV13} we propose to consider semirings which admit a (right)
multiplicative inverse for non-zero elements, and satisfy the
so-called zero-sum property, stating that for a sum $x = x_1 + \cdots
+ x_n$ it holds that $x = 0$ iff $x_i = 0$ for all $i = 1 \ldots
n$. 
The proof follows the set-up of~\cite{DeR99}, hence is different
from~\cite{GS01:au}; we see that zero-sum coincides with positivity,
while the existence of multiplicative inverses guarantees
refinability.
Thus, for semirings involved enjoying these properties, pullbacks
are weakly preserved by $\fsfn{{\cdot}}{\calR}$. Therefore, coalgebraic
bisimilarity and behavioural equivalence are the same. As a
consequence, under conditions which are met by the $\SPC$s discussed
in the preceding, we have that concrete bisimulation,
$\FuTS$-bisimilarity, behavioural equivalence and coalgebraic
bisimilarity coincide.

For typical stochastic process languages based on $\PEPA$ and~$\IMC$
we have shown that the notion of strong equivalence and strong
bisimilarity associated with these calculi, coincides with the notion
of bisimilarity of the corresponding~$\FuTS$. Using these $\FuTS$ as a
stepping stone, the correspondence results bridge between the concrete
notion of bisimulation for $\PEPA$ and $\IML$, and the associated
coalgebraic notions of behavioural equivalence. Hence, from this
perspective, the concrete notions are seen as the natural strong
equivalence to consider. Obviously, classical strong
bisimilarity~\cite{Mil80:lncs,Par81:gi} and bisimilarity for $\FuTS$
over~$\bools$ coincide (see~\cite{KS08:fossacs} or~\cite{LMV13} for
details). Also, strong bisimulation of~\cite{Hil96:phd}, an
alternative to Hillston's notion of strong equivalence covered here,
involving apart from the usual transfer conditions the comparison of
state information, viz.\ the apparent rates, can be treated
with~$\FuTS$. Again the two notions of equivalence coincide. Finally,
we gave an account of how languages based on discrete deterministic
time, $\TPC$, as well as those where stochastic time is integrated
with discrete probability and with non-determinism, $\MAL$, can be
treated in the $\FuTS$ framework. A similar mediating role for~$\FuTS$
applies to these calculi too: the concrete notion of bisimulation
coincides with $\FuTS$ bisimulation, hence coincides with the
corresponding behavioral equivalence.

As mentioned in Section~\ref{sec:introduction}, related work in the area of 
systematic approaches to frameworks for the semantics of $\SPC$---and
quantitative extensions of process calculi in general---includes the study of 
abstract quantitative GSOS, with its application to Weighted Transition Systems~($\WTS$)
\cite{KS08:fossacs,Kli09:mosses,MiP14}. Stochastic GSOS (SGSOS) and Weighted GSOS 
appear to be a special case of  Miculan and Peressotti's \emph{weight function} GSOS.
In~\cite{KS08:fossacs,MiP14}  a treatment is given for $\PEPA$, in line with Section~\ref{sec-pepa} of the present paper.
The formats mentioned above arise from the abstract theory of SOS\@.
A noteworthy result, shown in~\cite{KS08:fossacs}, 
is that stochastic bisimilarity of $\SPC$ defined using the SGSOS format 
is guaranteed to be a congruence. The result is generalized to $\WTS$ in~\cite{Kli09:mosses}.
We did not address the issue of congruences for $\FuTS$ in the present paper.
Nevertheless, we note that {\em Rated Transition Systems}---the semantic 
model used in~\cite{KS08:fossacs}---are very similar to $\RTS$ of
Latella, Massink et al.~\cite{De+08,De+09,De+09a}, which are the instantiation
of simple $\FuTS$ on non-negative real numbers, and that
$\WTS$ are very similar to  simple $\FuTS$. Thus, it is to be expected  
that simple $\FuTS$ can be represented as $\WTS$ using the SGSOS,  which would extend the congruence result to  $\FuTS$. The issue of the relationship with $\WTS$ remains, though, for 
the richer class of combined, nested, and general $\FuTS$, which we leave for further study.

\blankline

\noindent
\emph{Acknowledgments} The authors are grateful to Rocco De~Nicola,
Fabio Gadducci, Daniel Gebler, Michele Loreti, Jan Rutten, and Ana
Sokolova for fruitful discussions on the subject and useful
suggestions. The constructive comments by the reviewers have been of
help and are much appreciated.
DL and MM acknowledge support by EU Project n.~600708 \emph{A
  Quantitative Approach to Management and Design of Collective and
  Adaptive Behaviours} (QUANTICOL).  This research has been partially
conducted while EV was spending a sabbatical leave at the CNR/ISTI\@.
EV gratefully acknowledges the hospitality and support during his stay
in Pisa.

\bibliographystyle{plain}
\bibliography{lmv-lmcs}

\end{document}